\newtheorem{lemma}{Lemma}
\newtheorem{Corollary}{Corollary}
\newtheorem{definition}{Definition}
\newtheorem{proposition}{Proposition}
\newtheorem{remark}{Remark}
\def\ind{\ensuremath{\operatorname{ind}}}
\def\rank{\ensuremath{\operatorname{rank}}}
\def\transf#1{\ensuremath{\mathcal{#1}}}
\def\tH{\transf H}
\def\tI{\transf I}
\def\tJ{\transf V ^\dag}
\def\tK{\transf K}
\def\tP{\transf P}
\def\tS{\transf S}
\def\tT{\transf T}
\def\tV{\transf V}
\def\tor{\ensuremath{\bigtimes_{t=1}^s\mathbb Z_{f_t}}}
\def\Jop{J_{\Lambda_{{x}}}}
\def\MdC{\mathsf M_d(\mathbb C)}
\def\J{J}
\def\Tred{(\tT^N)_r}
\def\Tredmat{(U_{\tT^N})_r}
\def\Tmat{(U_\tT)}
\def\Tnmat{U_{\tT^N}}
\def\Jopconj#1{\left(\bigotimes_{x\in \mathcal{L'}} J^{#1}_{\Lambda_{{x}}}\right)}
\def\Aqloc#1{\ensuremath{\mathsf{A}(\mathbb{Z}^{#1})}}
\def\Aplocc{\ensuremath{\mathsf{A}\rvert_{\Pi}(\mathcal{L})}}
\def\Aredloc{\ensuremath{\mathsf A(\mathcal{L'})}}
\mathchardef\minus="002D
\def\<{\langle}
\def\>{\rangle}
 \def\ket#1{| #1 \rangle}
 \def\dket#1{| #1 \rangle\!\rangle}
 \def\dbra#1{\langle\!\langle #1 |}
\def\bra#1{\langle #1 |}
\def\ketbra#1#2{| #1 \rangle\!\langle#2 |}
\def\dketbra#1#2{\dket{#1}\dbra{#2}}
\def\bvec#1{\boldsymbol{\mathrm #1}}
\def\vc#1{\boldsymbol{\mathrm #1}}
\def\L{\ensuremath{\mathcal{L}}}
\def\uon#1{\ensuremath{\lVert #1\rVert}}
\def\Mod{\operatorname{mod}}
\newcommand\Item[1][]{%
  \ifx\relax#1\relax  \item \else \item[#1] \fi
  \abovedisplayskip=0pt\abovedisplayshortskip=0pt~\vspace*{-\baselineskip}}
\newcommand{\Tr}{\mathop{\mathrm{Tr}}}
\definecolor{myblue1}{RGB}{0,125,209}
\definecolor{myblue2}{RGB}{0,155,239}
\definecolor{mycolor}{RGB}{100,155,239}
\definecolor{cifcolor}{RGB}{0,255,128}
\definecolor{Ucolor}{RGB}{102,102,255}
\definecolor{phasecolor}{RGB}{200,99,180}
\definecolor{phasecolorbis}{RGB}{230,89,140}
\definecolor{phasecolortris}{RGB}{250,79,120}
\definecolor{tassellationcol}{RGB}{240,150,20}
\definecolor{important}{RGB}{200,240,230}
\newcommand\bitsize{0.6 cm}
\newcommand\projspace{0.12cm}
\def\LR{\hspace{\projspace}$L$\hspace{\projspace} \nodepart{two}\hspace{\projspace}$R$\hspace{\projspace}}
\def\LRbar{\hspace{\projspace}$\bar{L}$\hspace{\projspace} \nodepart{two}\hspace{\projspace}$\bar{R}$\hspace{\projspace}}
\def\LtilR{\hspace{\projspace}$\tilde{L}$\hspace{\projspace} \nodepart{two}\hspace{\projspace}${R}$\hspace{\projspace}}
\def\LRtilde{\hspace{\projspace}${L}$\hspace{\projspace} \nodepart{two}\hspace{\projspace}$\tilde{R}$\hspace{\projspace}}
\newcommand\bitdistance{1 cm}
\newcommand\transfborder{0.1cm}
\newcommand\transfsize{\bitsize+\bitdistance-\transfborder}
\newcommand\hdist{\bitsize+\bitdistance+\halfsize}
\newcommand\nhdist{-\bitsize-\bitdistance-\halfsize}
\newcommand\halfleft{-\bitsize-\transfoffset + \transfsize*0.5-\transfborder}
\tikzset{
  local/.style 2 args={
    rectangle, 
    rounded corners=1pt, 
    minimum height={#1},
    minimum width={#2},
    align=center, 
    line width=1pt,
    draw=white,
    font=\color{black}\sffamily, 
    fill=myblue1
  },
  local/.default={0.6 cm}{0.6 cm},
  M1/.style 2 args={
 	rectangle,
 	rounded corners=2pt ,
 	align=center,
 	minimum height={#1},
   	 minimum width={#2},
 	line width =1pt,
 	draw=black,
    font=\color{black}\sffamily, 
    fill=myblue1,
  },
  M1/.default={\bitsize}{\transfsize},
 M2/.style 2 args={
 	rectangle,
 	rounded corners=2pt ,
 	align=center,
 	minimum height={#1},
   	 minimum width={#2},
 	line width =1pt,
 	draw=black,
    font=\color{black}\sffamily, 
    fill=myblue2
  },
  M2/.default={\bitsize}{\transfsize},
 G/.style 2 args={
 	rectangle,
 	rounded corners=2pt ,
 	align=center,
 	minimum height={#1},
   	 minimum width={#2},
 	line width =1pt,
 	draw=black,
    font=\color{black}\sffamily, 
    fill=mycolor
  },
  G/.default={\bitsize}{\transfsize},
  bit/.style 2 args= {  
    rectangle, 
    rounded corners=5pt, 
    minimum height={#1},
    minimum width={#2},
    align=center, 
    line width=1pt,
    draw=black,
    font=\color{black}\sffamily, 
    fill=White
   },
 bit/.default={\bitsize}{\bitsize},
 bittil/.style 2 args= {  
    rectangle, 
    rounded corners=10pt,
    drop shadow, 
    minimum height={#1},
    minimum width={#1},
    align=center, 
    line width=1pt,
    draw=black,
    font=\color{black}\sffamily, 
    fill={#2}
   },
 tassel/.style 2 args= {  
    rectangle, 
    rounded corners=10pt,
    minimum height={#1},
    minimum width={#1},
    align=center, 
    line width=1pt,
    opacity={#2},
    draw=black,
    font=\color{black}\sffamily, 
    fill=tassellationcol
   },
  U/.style= {  
    rectangle, 
    rounded corners=1pt, 
    minimum height={\bitsize},
    minimum width={\bitsize},
    align=center, 
    line width=1pt,
    draw=black,
    fill=Ucolor,
    font=\color{black}\sffamily, 
   },
  LRbit/.style = {  
    rectangle, 
    rounded corners=1pt, 
    minimum height={\bitsize},
    minimum width={\transfsize},
    align=center, 
    line width=1pt,
    draw=black,
    fill=gray,
    font=\color{black}\sffamily, 
   },
   pi/.style 2 args={
 	rectangle,
 	rounded corners=2pt ,
 	align=center,
 	minimum height={#1},
   	 minimum width={#2},
 	line width =1pt,
 	draw=black,
	opacity=1,
    font=\color{black}\sffamily, 
    fill=yellow,
  },
  pi/.default={\bitsize}{\transfsize},
LR/.style ={
 	rectangle split,
	minimum height=0.6 cm,
   	minimum width =7.5cm,
	rectangle split horizontal,
	rectangle split parts=2,
 	rounded corners=2pt ,
 	align=center,
 	line width =1pt,
 	draw=black,
	opacity=1,
    font=\color{black}\sffamily, 
    fill=yellow,
 	text centered,
  },
  swap/.pic ={
	\draw (-\bitdistance,-1)--(-\bitdistance,-0.75)--(\bitdistance,0.75)--(\bitdistance,1);
	\draw (\bitdistance,-1)--(\bitdistance,-0.75)--(-\bitdistance,0.75)--(-\bitdistance,1);  
  },
Cif/.style 2 args={
 	rectangle,
 	rounded corners=2pt ,
 	align=center,
 	minimum height={#1},
   	 minimum width={#2},
 	line width =1pt,
 	draw=black,
    font=\color{black}\sffamily, 
    fill=cifcolor
  },
Cif/.default={\bitsize}{\transfsize},
phase/.style 2 args={
 	rectangle,
 	rounded corners=2pt ,
 	align=center,
 	minimum height={\bitsize},
   	 minimum width={#1},
 	line width =1pt,
 	draw=black,
    font=\color{black}\sffamily, 
    fill={#2}
  },
Op/.style 2 args={
	circle,
	draw=black,
	line width=1pt,
	font=\color{black}\sffamily ,
	minimum height={#1},
   	 minimum width={#2},
   	 fill=White,
 },
 Op/.default={\bitsize*0.25}{\bitsize*0.25},
 J/.pic={
  \draw (-\bitdistance,\nhdist)--(0,\nhdist*0.5)--(\bitdistance,\nhdist),},
 }
\tikzset{
  bit/.style={
  rectangle,
  minimum height=0.4cm,
  minimum width=0.4cm,
  draw=black,
  line width=1pt
  }
}
\tikzstyle{new edge style 0}=[-, fill=white, draw={rgb,255: red,249; green,20; blue,0}]
\tikzstyle{new edge style 1}=[-, fill=black]
\tikzstyle{new edge style 2}=[-, fill={rgb,255: red,255; green,247; blue,0}, draw={rgb,255: red,162; green,164; blue,60}]
\tikzstyle{new edge style 3}=[-, fill={rgb,255: red,131; green,131; blue,131}]
\tikzstyle{new edge style 4}=[-, fill={rgb,255: red,0; green,209; blue,255}, draw={rgb,255: red,49; green,57; blue,208}]
\begin{document}
\title{Renormalisation of quantum cellular automata}

\author{Lorenzo Siro Trezzini} \email[]{lorenzosiro.trezzini01@universitadipavia.it}
\affiliation{Dipartimento di Fisica dell'Universit\`a di Pavia, via
  Bassi 6, 27100 Pavia} \affiliation{Istituto Nazionale di Fisica
  Nucleare, Gruppo IV, via Bassi 6, 27100 Pavia} 
\author{Alessandro  Bisio} \email[]{alessandro.bisio@unipv.it}
\affiliation{Dipartimento di Fisica dell'Universit\`a di Pavia, via
  Bassi 6, 27100 Pavia} \affiliation{Istituto Nazionale di Fisica
  Nucleare, Gruppo IV, via Bassi 6, 27100 Pavia} 
\author{Paolo Perinotti} \email[]{paolo.perinotti@unipv.it}
\affiliation{Dipartimento di Fisica dell'Universit\`a di Pavia, via
  Bassi 6, 27100 Pavia} \affiliation{Istituto Nazionale di Fisica
  Nucleare, Gruppo IV, via Bassi 6, 27100 Pavia} 

\begin{abstract} 
  We study a coarse-graining procedure for quantum
  cellular automata on hypercubic lattices that
  consists in grouping neighboring cells into tiles
  and selecting a subspace within each tile. This
  is done in such a way that multiple evolution
  steps applied to this subspace can be viewed as
  a single evolution step of a new quantum
  cellular automaton, whose cells are the
  subspaces themselves. We derive a necessary and
  sufficient condition for renormalizability and
  use it to investigate the renormalization flow
  of cellular automata on a line, where the cells
  are qubits and the tiles are composed of two
  neighboring cells. The problem is exhaustively
  solved, and the fixed points of the
  renormalization flow are highlighted.
\end{abstract} 

\maketitle
\section{Introduction}
Quantum Cellular Automata (QCA) have been developed in the years since the very advent of quantum computation~\cite{Feynman:1982aa}, as the quantum counterpart of classical cellular automata~\cite{10.5555/1102024,hadeler2017cellular}. The notion of a QCA was made more and more rigorous over the years~\cite{10.5555/45269.45273,492583,10.1007/3-540-60922-9_24,doi:10.1137/S0097539797327702,doi:https://doi.org/10.1002/9780470050118.ecse720}, until the algebraic formulation that is still presently used~\cite{schumacher2004reversiblequantumcellularautomata,10.1007/978-3-540-88282-4_8}. 
The interest in QCA is due to many reasons, the main ones being that they constitute a computational model that is equivalent to a universal quantum computer~\cite{PhysRevLett.97.020502,Arrighi:2009aa}, that
their geometric structure suggests their use in the simulation and modelling of quantum physical systems~\cite{PhysRevA.90.062106,bisio2016quantum,Mallick:2016aa,Arrighi:2020aa,bisio2018thirring,bisio2021scattering,Eon2023relativistic}, and that they represent interesting many body systems~\cite{PhysRevLett.125.190402,Jahn2022tensornetworkmodels,hillberry2024integrabilitygoldilocksquantumcellular}, with a peculiar role in the study of the topological phases of matter~\cite{stephen2019subsystem,po2016chiral,fidkowski2019interacting,kitaev2006topological}. The characteristic trait of QCAs, as of cellular automata in general~\cite{Perinotti2020cellularautomatain} is that they are defined by a simple local rule, and yet they can give rise to arbitrarily complex and interesting large-scale dynamics, as witnessed by their computational universality. As a consequence, the connection between the local rule and the large-scale phenomenology is clearly a problem as difficult as crucial, that calls for understanding. 

A typical feature of large-scale behaviour of physical models is that it is rather 
insensitive to most of the details of small-scale dynamics. This statement reflects 
e.g.~the notion of universality classes of statistical mechanical models \footnote{We 
stress that the universality notion invoked here is different from computational 
universality, which in simple words refers to the ability of a computational model to 
simulate any other conceivable computer.}, where the relevant physical parameters 
take on values in an astonishingly limited range, collecting very different models in the 
same large-scale equivalence class~\cite{10.1093/acprof:oso/9780199227198.001.0001}. In 
this context, the renormalisation procedure due to 
Kadanoff~\cite{PhysicsPhysiqueFizika.2.263}, and perfected by 
Wilson~\cite{PhysRevB.4.3174,PhysRevB.4.3184}, represents the basic idea underpinning all 
the techniques for recovering the large-scale features of a physical model starting from 
its microscopic description, and is based on the progressive erasure 
of degrees of freedom that are supposed to be irrelevant to the large-scale picture. The 
validation of the approach comes from the convergence of the recursive application of the 
procedure, and thus to the existence of fixed point models that become stable in the 
process. This technique led to the discovery of critical exponents in phase transitions 
and is intimately connected to renormalisation techniques in quantum field 
theories~\cite{10.1093/acprof:oso/9780199227198.001.0001}. A similar approach to the 
large scale behaviour of \emph{classical} cellular automata was explored in 
Ref.~\cite{PhysRevLett.92.074105,PhysRevE.73.026203}, where the renormalisation flow was 
interestingly connected to the idea that a cellular automaton $A$ that is renormalised to 
another one $B$ has a larger complexity than $B$, in the sense that it can simulate $B$ 
and then it is at least as complex as $B$. Although our focus will be on such Kadanoff-like renormalisation, it is worth noting that the challenge of analysing the large-scale behaviour of QCA and Quantum Walks has already been explored in the literature (see, e.g., \cite{Duranthon_2021, ScalingQuantumWalks}). The key distinction from previous works lies in the requirement that the model describing the large-scale behaviour of a QCA must be a QCA itself. This entails the possibility of defining a notion of ``renormalisation flow'' in the space of QCAs and, consequently, to identify fixed points.

The connection of small-scale dynamics with its large-scale \emph{emergent} phenomenology 
is of particular interest for application of the theory of QCAs to quantum simulations,
where QCAs serve both as primitive models~\cite{PhysRevA.90.062106} and as approximations of Floquet systems or Trotterised approximations of local Hamiltonians~\cite{PhysRevB.106.184304}. The paradigm of quantum computation---even more so in the present, so-called NISQ era~\cite{Preskill2018quantumcomputingin}---is based on the run of algorithms on a large noisy physical qubit register where only a smaller logical nearly error-free register is considered. This is an extreme summary of the notion of \emph{error correction}. A renormalisation technique, discarding detailed and fast variables in favour of a fraction of the total pool of degrees of freedom, where an effective dynamics occurs, can then be seen as an error correction technique for quantum simulations through QCAs: a fine-grained QCA where we do not have full control can be used to run a simulation on its renormalised QCA. redThis allows us to interpret the fine-grained QCA as a noise-resilient variant of the coarse-grained one. The issue of quantum error correction in the context of QCA has also been discussed in relation to the implementation of computational tasks, e.g.~density classification on a noisy register~\cite{ErrorCorrection}.  In this respect, we observe that renormalisability can be seen as an instance of \emph{intrinsic simulation}~\cite{van1996quantum}: a QCA $\tT$ that can be renormalised to another one $\tS$  actually \emph{simulates} $\tS$---provided that one programs it in the suitable degrees of freedom. In particular, a QCA that could be renormalised to any other QCA would be \emph{intrinsically universal}~\cite{Arrighi:2009aa}.

In the present work, we adapt the Kadanoff renormalisation idea to the coarse-graining of 
quantum cellular automata. Unlike the case of classical cellular automata, it is so hard 
to include irreversible evolutions in the picture of quantum cellular automata that the 
very definition of a QCA entails reversibility. This fact determines the consequence that, 
despite the initial formulation of the problem of coarse-graining is very close to that of 
classical cellular automata~\cite{PhysRevLett.92.074105,PhysRevE.73.026203} or even of 
classical networks~\cite{10.1093/comnet/cnx055}, the subsequent solution has to face more 
constraints and has a radically different family of solutions. We start analysing the case 
of QCAs on a hypercubic lattice, and pose the question as to what QCAs $A$ admit a 
block-tiling and, on each block, a projection $P$ on a subspace that is equivalent to that 
of a single cell, with the requirement that projecting after $N$ steps is equivalent to 
projecting right away, then applying one step of a different QCA $B$ on the lattice whose 
elementary cells correspond to the tiles. After writing the general equations whose 
solution gives both the projection $P$ and the renormalised QCA $B$, we restrict to the 
case of 1-dimensional QCAs that can be implemented as a Finite Depth Quantum Circuit 
(FDQC), with tiles made of two neighbouring nodes and coarse-graining two time-steps into 
one. The problem in this case can be manipulated, revealing a general algebraic structure 
that can be further simplified in the case where the cells are qubits. The main reason why 
the simplification occurs is that for qubit QCAs a full classification of FDQCs has been 
provided~\cite{schumacher2004reversiblequantumcellularautomata}. We then solve the problem 
exhaustively, providing the renormalisation flow for the 1-d qubit case, along with a 
classification of the fixed points.

As a remark, we can observe that very few qubit QCAs are renormalisable in this sense, and 
those that are, have a trivial evolution. This fact can reasonably be ascribed to the very 
limited choice of degrees of freedom due to the smallness of both the cells and the tiles 
that we are considering. The solution, however, provides technical tools that pave the way 
to similar analysis in more general cases, even considering special case-studies.

The manuscript is structured as follows: In section \eqref{sec:qca} we introduce the 
notion of Quantum Cellular Automaton following the definition proposed by Schumacher and 
Werner \cite{schumacher2004reversiblequantumcellularautomata}. 
We will then focus on two important tools that we will use extensively: the Wrapping Lemma 
and the index. The former is a procedure to represent the evolution of a QCA defined over 
an infinite lattice of cells (say, for concreteness, $\mathbb{Z}^s$) on a finite lattice. 
The latter is a topological invariant quantity that classifies the QCA modulo FDQC. 
In section \eqref{sec:CG} we introduce the notion of coarse-graining and renormalisation, 
and we derive the renormalisability conditions. 
In section \eqref{sec:FDQC} we focus our attention on the case of one-dimensional FDQC. 
Finally, in section \eqref{sec:Qubits} we solve the above equations in the case
of QCA with qubit cells. After the classification of all such QCAs that are 
renormalisable, we study the renormalised evolutions and the fixed points of the 
renormalisation flow.

\section{Quantum Cellular Automata}\label{sec:qca}
This section is devoted to review the existing litterature about Quantum Cellular Automata.
A Quantum Cellular Automaton $\tT$ is a reversible, local and homogeneous
discrete time evolution of a lattice $\mathcal L$ of cells, each
containing a finite-dimensional quantum system.
The locality condition means that, at each step of
the evolution, a cell $x \in \mathcal L$ interacts
with a finite set $\mathcal{N}_x$ of
cells, called the \emph{neighourhood} of $x$. Therefore, the speed of
propagation of information in a QCA is bounded
and we have a natural notion of future and past ``light 
cone'' or, more precisely, cone of causal influence.  
Homogeneity---or translation invariance---on the other hand,
is the requirement that every cell is updated
according to the same rule. The latter is equivalent to the 
requirement that the evolution commutes with the
lattice translations~\footnote{Sometimes in the literature QCAs can be defined 
without the homogeneity property, and what we call a QCA would then be a
\emph{homogeneous} QCA (see e.g.~Ref.~\cite{Index}).}.

Let us now make these
concepts mathematically precise.
First of all, for the purpose of the present study we will focus on the case where the 
lattice $\mathcal L$ is a square lattice, i.e.
\begin{align}
\L \coloneqq \tor, \quad  f_i \in \mathbb{N} \cup \infty, \quad s < \infty, \label{eq:latticedefinition}
\end{align}
where $\mathbb Z_{f_i}$ denotes the additive group
modulo $f_i$, and $\mathbb Z_\infty=\mathbb Z$
\footnote{The most general structure on which
  homogeneous QCAs can be defined is the
  \emph{Cayley graph} of some finitely presented
  group $G$~\cite{arrighi:hal-01785458,
    hadeler2017cellular,Perinotti2020cellularautomatain}. However
  here we will restrict to the case where $G$ is
  abelian.}.  Therefore, each cell of the lattice
is represented by a vector $x \in \tor $ of
integer coefficients.

This notation allows us to treat the case in which
$\mathcal L$ is infinite and that where
$\mathcal L$ is finite in a unified way.
Typically, one considers the lattice $\mathcal L$ as a graph, the edges corresponding to 
neighbourhood relations. One can prove that the homogeneity requirement makes the graph
of causal connections the \emph{Cayley graph} of some 
group~\cite{PhysRevA.90.062106,DAriano:2016aa,Perinotti2020cellularautomatain}, that in 
the present case is $\mathcal L$ imagined as an abelian additive group.

Since we will deal with lattices of infinitely many
quantum systems, it is convenient to formulate the
QCA dynamics in the Heisenberg picture, namely defining
the evolution by its action on the algebra of operators 
rather than on states. This approach, introduced in 
Ref.~\cite{schumacher2004reversiblequantumcellularautomata} and inspired by the algebraic approach to the
statistical mechanics of spin systems~\cite{bratteli1987operator} or to quantum field theory~\cite{haag1992local} has the advantage that the notion of local observable is
always well defined, whereas defining locality through the action on 
the local state is highly problematic. This
construction is based on the concept of
\textit{quasi-local algebra} of operators
$\mathsf{A}(\mathcal L)$.  If each cell contains a
finite dimensional quantum system of dimension
$d\ge2$, we can associate to each cell $x$ the
observable algebra $\mathsf A_x$ which is
isomorphic to the $C^*$-algebra $\MdC$ of $d\times d$
complex matrices.

When $\Lambda \subset \L$ is a
finite subset of $\L$ we denote with
$\mathsf{A}(\Lambda)$ the algebra of operators
acting on all cells in $\Lambda$, namely
$\mathsf{A}(\Lambda)=\bigotimes_{x\in\Lambda}\mathsf{A}_x$. If
we consider two different subsets
$\Lambda_1,\Lambda_2\subset\L$, we can consider
$\mathsf{A}(\Lambda_1)$ as a subalgebra of
$\mathsf{A}(\Lambda_1\cup\Lambda_2)$ by tensoring
each element $O\in\mathsf{A}(\Lambda_1)$ with the
identity over $\Lambda_2\setminus \Lambda_1$,
namely
$O\otimes I_{\Lambda_2 \setminus
  \Lambda_1}\in\mathsf{A}(\Lambda_1\cup\Lambda_2)$. In
this way the product $O_1O_2$ with
$O_i\in\mathsf{A}(\Lambda_i)$ is a well defined
element of
$\mathsf{A}(\Lambda_1\cup\Lambda_2)$. Each local
algebra is naturally endowed with the operator
norm $\uon \cdot$. We denote with
$\mathsf{A}^\mathrm{loc}(\L)$ the \textit{local algebra}
of operators, i.e. the algebra of all the
operators acting non-trivially over any finite set
of cells.
Since tensoring with the identity does
not change the norm of an operator, every local operator has a well defined norm, 
which is independent of the region on which we are considering the 
action of the operator. Thus, $\mathsf{A}^\mathrm{loc}(\L)$, is a
normed algebra, whose completion is the
quasi-local algebra $\mathsf{A}(\L)$. For
a detailed construction of $\mathsf{A}(\L)$ see
Appendix \eqref{sec:qloc}.  Clearly, for finite
lattices $\mathsf{A}^\mathrm{loc}(\L)$ is a finite
dimensional algebra and
$\mathsf{A}(\L)=\mathsf{A}^\mathrm{loc}(\L)$.  We could
regard $\mathsf{A}(\L)$ as the $C^*$-algebra of
operators that can be approximated arbitrarily
well by local operators.

A reversible dynamics then 
corresponds to an automorphism
$\tT : \mathsf{A}(\L) \to \mathsf{A}(\L)$ of the
quasi local algebra. The algebraic approach allows
us not to chose a representation of the observable
algebra. When dealing with finitely many systems
(i.e. when $\L$ is a finite lattice) this is rather
negligible advantage, since all the representation
are unitarily equivalent. In this case, then $\tT $ 
would be the conjugation by some unitary operator, i.e.
$\tT(O) = U^\dag OU$. However, when $\L$ is an
infinite lattice we have that $\mathsf{A}(\L)$ is
an infinite dimensional $C^*$-algebra, which
generally decomposes into inequivalent representations. 
Thus working in a representation-independent framework is highly
advantageous.

In order to define a homogeneous dynamics, we have to define the
operators that implement the lattice
translations.
\begin{definition}[Shift]\label{def:shift} For any $x\in \L$ we
  define the shift $\tau_x$ as the extension to
  the whole $\mathsf{A}(\L)$ of the isomorphism
  from each $\mathsf{A}_y$ to
  $\mathsf{A}_{y+x}$.
\end{definition}
If $O \in \mathsf{A}_\Lambda$ is an operator
localized in the region $\mathsf{A}_\Lambda$, then
$\tau_x(O) \in \mathsf{A}_{\Lambda + x}$ is an
operator localized in the region
$\Lambda + x := \{y+x | y \in
\Lambda \}$.
For $\L = \mathbb{Z}$ one can refer to the pictorial
representation of e.g. $\tau_{1}$:
\begin{align}
\begin{split}
\tau_{1 } = \tikzfig{Leftsh} .
\end{split}
\end{align}

We are now in position to define a Quantum Cellular Automaton. 
 \begin{definition}\label{def:QCA} 
   A $\emph{Quantum Cellular Automaton}$ (QCA)
   over a lattice $\L$ with finite neighbourhood
   scheme $\mathcal{N} \subset\L$ is an
  automorphism
   $\tT:\mathsf{A}(\L) \longrightarrow
   \mathsf{A}(\L)$ of the quasi-local
   algebra such that
   \begin{itemize}
   \item  $\tT(\mathsf{A}(\Lambda))\subset
   \mathsf{A}(\Lambda+\mathcal{N}) \hspace{5pt}
   \forall \Lambda\subset \L$ (locality),
   \item $\tT\circ\tau_x=\tau_x\circ\tT$
   $\forall x\in \L$ (homogeneity),
 \end{itemize}
 where $\Lambda+\mathcal{N} := \{ y+x | y \in
\Lambda, x \in \mathcal{N}  \}$.
   The homomorphism $\tT_0:\mathsf{A}_0\longrightarrow
   \mathsf{A}(\mathcal{N})$ given by the
   restriction of the QCA to the cell labelled $x=0$ is called the
   $\emph{transition rule}$.
\end{definition}
The identity and the shifts $\tau_x$ are (trivial)
examples of QCA. Another example of a QCA is a finite depth quantum circuit,
whose action can be decomposed in terms of local gates as follows
\begin{equation}
\resizebox{0.85\hsize}{!}{
\tikzfig{QCAexample}}.
\end{equation}
The boxes represent unitary completely positive maps, e.g.~those labelled $\mathcal U$ represent the maps $\mathcal U(X)\coloneqq UXU^\dag$.
\begin{remark} \label{rem:abstran} Notice that the algebra $\mathsf A_0$ is isomorphic to $\MdC$ (say via $\transf H_\mathcal L:\mathsf A_0\to\MdC$), and $\mathsf A(\mathcal N)$ is isomorphic to $\MdC^{\otimes|\mathcal N|}$ (via $\tK_\mathcal L:\mathsf A(\mathcal N)\to\MdC^{\otimes|\mathcal N|}$). Thus, the transition rule can be equivalently identified by the homomorphism
\begin{align}
\tT_*\coloneqq \tK_\mathcal L\circ\tT_0\circ\tH^{-1}_\mathcal L:\MdC\to\MdC^{\otimes|\mathcal N|}.
\label{eq:isoabs}
\end{align}
In this way, the restriction of $\tK_\mathcal L$ to $\tT_0(\mathsf A_0)$ is an isomorphism to $\tT_*(\MdC)$.
\end{remark}

It is possible to prove (\cite{schumacher2004reversiblequantumcellularautomata}) that the
local transition rule and the global evolution are
in one-to-one correspondence. This result is
stated in the following Lemma:
\begin{lemma}\label{lem:globloc}
  A homomorphism
  $\tT_0 : \mathsf{A}_0 \longrightarrow
  \mathsf{A}(\mathcal{N})$ is the transition rule
  of a cellular automaton if and only if for all
  $x \in \L$ such that
  $\mathcal{N}\cap\mathcal{N}_x\neq \emptyset$,
  $\mathcal N_x := \{x\}+\mathcal N$, the
  algebras $\tT_{0}(A_0)$ and
  ${\tau}_x(\tT_{0}(A_0))$ commute
  element-wise.
  Given a local transition rule $\tT_0$
  the global homomorphism $\tT$
  is given by
\begin{equation}
    	\tT(\bigotimes_{x\in \mathcal L}A_x)=\prod_{x\in \mathcal L}\tT_x(A_x)
    	\label{eq:autom}
\end{equation}
where
$\tT_x(\textsf{A}_x)={\tau}_x\tT_0{\tau}_{-x}(\mathsf{A}_x)$.
 \end{lemma}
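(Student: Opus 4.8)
The plan is to prove both directions of the equivalence, since Lemma~\ref{lem:globloc} is stated as an ``if and only if.'' For the easy direction, I would assume that $\tT_0$ is the transition rule of a genuine QCA $\tT$ and deduce the commutation condition. Since $\tT$ is an automorphism of $\mathsf A(\L)$, it preserves products and in particular commutators. Operators localized on disjoint cells commute, so $\mathsf A_0$ and $\tau_x(\mathsf A_0) = \mathsf A_x$ commute element-wise for every $x \neq 0$. Applying the automorphism $\tT$ and using homogeneity $\tT \circ \tau_x = \tau_x \circ \tT$, I get that $\tT_0(\mathsf A_0) = \tT(\mathsf A_0)$ and $\tau_x(\tT_0(\mathsf A_0)) = \tT(\mathsf A_x)$ must commute element-wise. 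When $\mathcal N \cap \mathcal N_x = \emptyset$ this is automatic from locality (the supports are disjoint), so the only nontrivial content is precisely for those $x$ with $\mathcal N \cap \mathcal N_x \neq \emptyset$; these cases follow because $\tT$ is an algebra homomorphism that preserves the vanishing commutator.

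For the harder converse, I would assume the commutation condition and construct the global homomorphism via Eq.~\eqref{eq:autom}, then verify it is a well-defined automorphism. First I would define $\tT_x := \tau_x \tT_0 \tau_{-x}$ on each $\mathsf A_x$ and set $\tT$ on a product of single-cell operators by Eq.~\eqref{eq:autom}. The crucial point is \emph{well-definedness}: because different factors $\tT_x(\mathsf A_x)$ and $\tT_y(\mathsf A_y)$ have overlapping supports when $\mathcal N_x \cap \mathcal N_y \neq \emptyset$, the product in Eq.~\eqref{eq:autom} is only unambiguous if these images commute element-wise, which is exactly the hypothesis transported by the shifts. I would check that $\tT$ so defined is multiplicative and $\ast$-preserving on the local algebra $\mathsf A^{\mathrm{loc}}(\L)$, that it respects the locality bound $\tT(\mathsf A(\Lambda)) \subset \mathsf A(\Lambda + \mathcal N)$ by construction, and that it commutes with the shifts by the homogeneous definition $\tT_x = \tau_x \tT_0 \tau_{-x}$.

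The main obstacle, and the part requiring the most care, is showing that the local homomorphism $\tT$ extends to an \emph{automorphism} (in particular, that it is invertible and isometric) of the full quasi-local algebra $\mathsf A(\L)$, not merely a homomorphism of the local algebra. On each finite region the map is an injective $\ast$-homomorphism between finite-dimensional $C^*$-algebras, hence isometric, so it extends by continuity to the completion $\mathsf A(\L)$. Injectivity follows from the faithfulness of $\tT_0$ on the simple algebra $\mathsf A_0 \cong \MdC$; surjectivity onto $\mathsf A(\L)$ is the genuinely nontrivial claim and is where one invokes the structure theory of Ref.~\cite{schumacher2004reversiblequantumcellularautomata}, namely that a locality-preserving unital $\ast$-endomorphism of the quasi-local algebra with the given neighbourhood scheme has a locality-preserving inverse. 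I would therefore lean on that reference for the existence of the inverse transition rule and verify that the inverse QCA's global extension, built the same way, inverts $\tT$ on the dense local subalgebra and hence on all of $\mathsf A(\L)$.

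Since the statement and its proof appear essentially verbatim in Ref.~\cite{schumacher2004reversiblequantumcellularautomata}, I expect the author's proof to be brief, most likely citing the original and indicating only the two key mechanisms: the commutation condition guaranteeing well-definedness of the product formula, and the finite-dimensional isometry argument guaranteeing the continuous extension to the quasi-local algebra.
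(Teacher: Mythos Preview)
Your proposal is correct and, in fact, more detailed than what the paper provides: the paper does not prove Lemma~\ref{lem:globloc} at all but merely states it, attributing the result to Ref.~\cite{schumacher2004reversiblequantumcellularautomata}. You anticipated this correctly in your final paragraph; the two mechanisms you isolate (commutation guaranteeing well-definedness of the product formula, and the isometry/extension argument) are precisely those in the Schumacher--Werner proof, so your outline is faithful to the original source even though the present paper omits any argument.
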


 Since the image $\tT_0(\mathsf A_0)$ is a
 subalgebra of the full matrix algebra
 $\mathsf A_\mathcal N$, from the characterisation
 theorem of
 finite dimensional $C^*$-algebra (see
 e.g.~\cite{schumacher2004reversiblequantumcellularautomata})
we have
\begin{equation}
    \tT_0(X)=U^\dag(I_{\mathcal N\setminus \{0\}}\otimes X)U,\quad\forall X\in\mathsf A_0
    \label{eq:ev}
\end{equation} 
for a suitable unitary $U$ in $\mathsf A_{\mathcal
  N\cup\{0\}}$, and analogously for $\tT_*$. In the following we will use 
  eq.~\eqref{eq:ev} both for $\tT_0$ and $\tT_*$, leaving to the context the specification 
  as to which is the case.
\subsection{The wrapping Lemma}
Our definition of QCA encompasses both 
infinite lattice and finite lattices embedded
in a torus. However, a result known as Wrapping Lemma (see Ref. \cite{Perinotti2020cellularautomatain}) relates the 
evolution of the QCAs over the infinite lattice
with the one over a finite one. The idea is
to ``wrap'' a sufficiently large finite
sub-lattice of a QCA over the infinite
lattice $\mathbb{Z}^s$ by imposing suitably periodic
boundary conditions, obtaining a discrete torus. 
The ``unwrapped'' QCA and the ``wrapped'' one then have
the same local transition rule. The concept of
``sufficiently large'' is grasped by the
requirement that, after wrapping, still
$\tT_0(A)$ should commute with $\tT_x(B)$ for all $A\in\mathsf A_0$ and $B\in\mathsf A_x$, 
for any $x$ such that $\mathcal{N}\cap\mathcal{N}_x\neq\emptyset$.
Therefore, a ``sufficiently large'' wrapping is such that the
set of elements $x\in\L$ such that
$\mathcal{N}\cap\mathcal{N}_x\neq\emptyset$ is the
same as in the infinite case, i.e. no new
overlapping between neighbourhoods is introduced
after wrapping the lattice.
Since we will heavily exploit the Wrapping Lemma in the next sections, 
we now provide its formal statement. However, a reader who is already familiar with this concept can easily skip this section.
Let us begin defining a wrapping,
i.e. a homomorphism of the infinite lattice $\mathbb Z^s$ into
a discrete torus.
\begin{definition}[Wrapping]
A wrapping of the additive group $\mathbb Z^s$ is a homomorphism 
\begin{align*}
\varphi:\mathbb Z^s\to\mathcal L\coloneqq\bigtimes_{t=1}^s\mathbb Z_{f_t}, \hspace{5pt} f_t<\infty \hspace{5pt} \forall t,
\end{align*}
where $\mathbb Z_f$ stands for the additive group 
$\mathbb Z$ modulo $f$.
\end{definition}
More concretely, if an element $x\in \mathbb Z^s$ is identified by the s integer numbers $x=(x_1,...,x_s)$ the action of the wrapping is:
\begin{align*}
\varphi(x)=((x_1\Mod{f_1}) ,..., (x_s\Mod{f_s})).
\end{align*}
We now define a regular neighbourhood scheme for a wrapping $\varphi$,
that pins down the idea of a ``sufficiently large'' wrapping.
\begin{definition}[Regular neighbourhood scheme]\label{def:regnei}
Let $\mathcal N\subset\mathbb Z^s$ be a finite set. We say that $\mathcal N$ is \emph{regular} for the wrapping $\varphi$ if 
\begin{align*}
&\{\varphi^{-1}(\mathcal N_{\varphi(x)})\cap \varphi^{-1}(\mathcal N_{\varphi(y)})\}\cap
\{\mathcal N_x\cup \mathcal N_y\}=\\
&=\mathcal N_x\cap \mathcal N_y 
\end{align*}
for every pair $x,y\in\mathbb Z^s$ such that $\mathcal N_x\cap\mathcal N_y\neq\emptyset$.
\end{definition}

The above definition captures the idea that the neighbourhoods of close cells $x$ and $y$,
after wrapping, should have the same intersection as before wrapping. The intersection with $\mathcal N_x\cup \mathcal N_y$ on the r.h.s.~in the definition is required because the inverse image of $\mathcal N_{\varphi(x)}\cap \mathcal N_{\varphi(y)}$ contains infinitely many intersections, due to periodicity of the set $\varphi^{-1}(x)$. 
Now, we can finally understand the meaning of ``sufficiently large'' wrapping, with respect to a neighbourhood scheme: the periodic boundary conditions should be such that the neighbourhood scheme is regular. For a single step of evolution of a nearest neighbour automaton, this is true for all lattices with more than four cells in each coordinate direction.

With the above notions in mind, we can now state the wrapping lemma.
\begin{lemma}[Wrapping Lemma]\label{lem:wrap}
QCAs with a regular neighbourhood scheme $\mathcal N$ on a lattice 
$\mathcal{L}$ with given periodic boundary conditions are in one-to-one correspondence 
with those on the infinite lattice, and the correspondence is defined by ``having the same 
transition rule''.
\end{lemma}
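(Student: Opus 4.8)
The plan is to establish the bijection by constructing maps in both directions and showing they are mutually inverse. The key observation is that Lemma~\ref{lem:globloc} already tells us that a QCA is \emph{completely determined} by its transition rule $\tT_0:\mathsf A_0\to\mathsf A(\mathcal N)$, via the product formula \eqref{eq:autom}, and that the defining property of a valid transition rule is purely \emph{local}: the commutation of $\tT_0(\mathsf A_0)$ with $\tau_x(\tT_0(\mathsf A_0))$ for all $x$ with $\mathcal N\cap\mathcal N_x\neq\emptyset$. Since a transition rule lives on a single cell and its finite neighbourhood $\mathcal N$, the \emph{same} abstract homomorphism $\tT_0$ makes sense on $\mathbb Z^s$ and on $\mathcal L$; the content of the lemma is that it defines a \emph{bona fide} QCA on one lattice if and only if it does on the other.

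\textbf{First} I would fix a transition rule $\tT_0$ that is valid on $\mathbb Z^s$ and produce its wrapped counterpart on $\mathcal L$. Concretely, identify $\mathcal N$ and each neighbourhood $\mathcal N_x$ with its image under $\varphi$; because $\mathcal N$ is regular for $\varphi$ (Definition~\ref{def:regnei}), the overlap pattern of neighbourhoods is preserved, so the wrapped rule still satisfies the commutation hypothesis of Lemma~\ref{lem:globloc}. The regularity condition is precisely engineered so that whenever $\mathcal N_{\varphi(x)}\cap\mathcal N_{\varphi(y)}\neq\emptyset$ on the torus, this intersection equals $\varphi(\mathcal N_x\cap\mathcal N_y)$ for a suitable representative pair; hence no \emph{spurious} overlap is created by the periodic identification. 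Lemma~\ref{lem:globloc} then promotes the wrapped transition rule to a global automorphism $\tT$ of $\mathsf A(\mathcal L)$, i.e.\ a QCA on the torus with the same transition rule. Running the construction in reverse, a QCA on $\mathcal L$ restricts to a transition rule that, read on $\mathbb Z^s$, again satisfies the local commutation relations (the torus sees \emph{at least} the overlaps present on $\mathbb Z^s$), so it lifts to a QCA on the infinite lattice.

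\textbf{Next} I would verify that these two assignments are mutually inverse and hence give the claimed one-to-one correspondence. This is essentially formal once the first step is in place: both directions act as the identity on the underlying transition rule $\tT_0$, and by the uniqueness half of Lemma~\ref{lem:globloc} the global homomorphism is \emph{uniquely} fixed by $\tT_0$ through \eqref{eq:autom}. Thus wrapping then unwrapping (or vice versa) returns the original rule, and therefore the original QCA. The correspondence is manifestly the one described in the statement, namely ``having the same transition rule''.

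\textbf{The main obstacle} is the careful bookkeeping in the first step: one must check that regularity genuinely prevents the wrapping from introducing new neighbourhood overlaps that would \emph{violate} the commutation hypothesis, and conversely that every overlap relevant on $\mathbb Z^s$ survives on $\mathcal L$ so that validity transfers back. Equivalently, one has to show that the hypothesis of Lemma~\ref{lem:globloc}---element-wise commutation of $\tT_0(\mathsf A_0)$ with its translates---holds on $\mathcal L$ \emph{if and only if} it holds on $\mathbb Z^s$. The ``only if'' direction is the delicate one, since a naive wrapping of a lattice that is too small could fold a cell's neighbourhood onto itself and spoil commutativity; this is exactly what Definition~\ref{def:regnei} rules out, and making that implication airtight (tracking the $\varphi^{-1}$ preimages and the role of the intersection with $\mathcal N_x\cup\mathcal N_y$ on the right-hand side) is where the real work lies. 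Everything downstream---existence and uniqueness of the global automorphism---is then an immediate appeal to Lemma~\ref{lem:globloc}.
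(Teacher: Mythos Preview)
The paper does not actually prove Lemma~\ref{lem:wrap}: it is stated as a known result and attributed to Ref.~\cite{Perinotti2020cellularautomatain}, with only the surrounding discussion (Definitions~\ref{def:regnei} and~\ref{def:wrapped}, Remark after the lemma) explaining the notation. So there is no ``paper's own proof'' to compare against here.

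That said, your sketch is the standard and correct route. The reduction to Lemma~\ref{lem:globloc} is exactly the point: a QCA is determined by its transition rule, and the validity condition on a transition rule is the local commutation of $\tT_0(\mathsf A_0)$ with its translates for overlapping neighbourhoods. Regularity (Definition~\ref{def:regnei}) is precisely the statement that wrapping neither creates nor destroys such overlaps, so the commutation hypothesis holds on $\mathcal L$ iff it holds on $\mathbb Z^s$. Uniqueness of the global automorphism from $\tT_0$ then gives the bijection. Your identification of the ``only if'' direction (no spurious overlaps after wrapping) as the place where regularity does the work is accurate; the converse direction is immediate because any overlap on $\mathbb Z^s$ persists on the quotient. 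If you wanted to tighten this into a full proof, the only thing to make explicit is that $\varphi$ restricted to a single neighbourhood $\mathcal N_x$ is injective (which follows from regularity applied with $x=y$), so that $\tT_0$ can be transported without ambiguity.
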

\begin{remark}
Notice that we did not give a formal definition of the (equivalence) relation ``having the same transition rule''. However, this notion is straightforward if we think of the transition rule as $\tT_*$ rather than $\tT_0$.
\end{remark}

Given a wrapping $\varphi:\mathbb Z^s\to\mathcal L$, let us now introduce the following isomorphisms, referring to those introduced in Remark~\eqref{rem:abstran}
\begin{align*}
&\mathcal{U}_\varphi\coloneqq\tH^{-1}_{\mathbb Z^s}\circ\tH_\mathcal L:\mathsf{A}_0^w\rightarrow\mathsf{A}_0^\infty \\
&\mathcal{W}_\phi\coloneqq\tK^{-1}_\mathcal L\circ\tK_{\mathbb Z^s}:\mathsf{A}_{\mathcal{N}}^\infty\rightarrow\mathsf{A}_{\mathcal{N}}^w.
\end{align*}
The map $\mathcal{U}_\varphi$ maps the single cell algebra $\mathsf{A}^w_0$ of the wrapped lattice and into the single cell algebra $\mathsf{A}^\infty_0$ of the algebra of the infinite lattice, while the map $\mathcal{W}_\varphi$ maps the algebra of the neighbourhood $\mathsf{A}^\infty_{\mathcal{N}}$ of the infinite lattice into that of the wrapped lattice denoted $\mathsf{A}_{\mathcal{N}}^w$. 
The regularity requirement in def.~\eqref{def:regnei} ensures that the application of 
$\varphi$ to $\mathcal{N}_x$ does not change the neighbourhood structure: $\varphi(\mathcal N_x)=\mathcal N_{\varphi(x)}\leftrightarrow\mathcal N_x$. 

If we now consider a QCA $\tT$ over $\mathbb{Z}^s$, with transition rule $\tT^{(\infty)}_0$, we can define its wrapped version as follows.
\begin{definition}[Wrapped evolution]\label{def:wrapped} The wrapped evolution $\tT_w$ on the wrapped lattice $\mathcal{L}=\varphi(\mathbb Z^s)$ of a QCA $\tT$ over $\mathbb{Z}^s$ is the QCA over the lattice $\mathcal{L}$ with local rule given by:
\begin{align} \label{eq:wrapping}
\tT^{(w)}_0=\mathcal{W}_\varphi\tT^{(\infty)}_0\mathcal{U}_\varphi.
\end{align}
\end{definition} 
Notice that, in this case,
\begin{align*}
\tK_\mathcal L\circ\tT^{(w)}_0\circ\tH_{\mathcal L}^{-1}=\tK_{\mathbb Z^s}\circ\tT^{(\infty)}_0\circ\tH_{\mathbb Z^s}^{-1}=\tT_*.
\end{align*}
Consequently, we can say that wrapped QCAs over different lattices share the same local rule if their local rule can be described as in eq.\eqref{eq:wrapping} starting from the same infinite QCA $\tT$, namely $\tT^{(w)}$ and $\tT^{(w')}$ share the same local rule if:
\begin{align*}
\tK_\mathcal L\circ\tT^{(w)}_0\circ\tH_{\mathcal L}^{-1}=\tK_{\mathcal L'}\circ\tT^{(w')}_0\circ\tH_{\mathcal L'}^{-1}=\tT_* .
\end{align*}

 \subsection{One dimensional QCA: Index Theory}
 One key feature of QCAs on $\mathbb Z$ is their
 \emph{index}, which was introduced in
 Ref. \cite{Index} for QCAs without the
 translation-invariance hypothesis. The index is a
 locally computable invariant, meaning it can be
 determined from a finite portion of the automaton
 and will yield the same result regardless of the
 portion chosen, even if the QCA is not
 translation-invariant.
 Intuitively, the index theory for one dimensional
 QCAs suggests that information can be
 thought as an incompressible fluid, with the index
 representing its flow rate.

 Let us now review the main results of index
 theory.  The whole construction is based on the
 notion of support algebras~\cite{Index} (also called interaction algebras~\cite{PhysRevA.63.012301}).
\begin{definition} [Support algebra]
  Let $\mathsf B_1$ and $\mathsf B_2$ be
  $C^*$-algebras, and consider a $*$-subalgebra
  $\mathsf{A}\subset \mathsf B_1\otimes \mathsf
  B_2$. Each element $a\in\mathsf{A}$ can be
  expanded uniquely in the form
  $a=\sum_{\mu}a_\mu\otimes e_{\mu}$, where
  $\{e_{\mu}\}$ is a fixed basis of $\mathsf
  B_2$. The \emph{support algebra}
  $\mathsf S(\mathsf A,\mathsf B_1)$ of
  $\mathsf A$ in $\mathsf B_1$ is the smallest
  $C^*$-algebra generated by all $a_\mu$ in such
  an expansion \footnote{It is easy to prove that the support algebra
    does not depend on the choice of the basis $\{e_\mu\}$}.
\end{definition}

Let us now consider a nearest neighbour QCA on $\mathbb Z$,
i.e.~with $\mathcal N_x = \{ x-1,x,x+1\}$ for all
$x$. Two comments are now in order. First,
any QCA on $\mathbb Z$ can be expressed as a nearest-neighbour QCA
by grouping multiple cells in a single one. In the following
discussion we will 
assume that all the QCAs that we introduce are
nearest-neighbour QCAs.
Second, since
translation invariance is not assumed, local
algebras at different sites may not be
isomorphic.  Let us now define the left and right
support algebras around cells $2x$ and $2x+1$ for a QCA with evolution $\tT$ as
\begin{equation}
\begin{split}
&\mathsf{L}_{2x}= \mathsf{S}(\tT(\mathsf{A}_{2x} \otimes \mathsf{A}_{2x+1}),(\mathsf{A}_{2x-1}\otimes \mathsf{A}_{2x})),\\
&\mathsf{R}_{2x}= \mathsf{S}(\tT(\mathsf{A}_{2x} \otimes \mathsf{A}_{2x+1}),(\mathsf{A}_{2x+1}\otimes \mathsf{A}_{2x+2})).
\end{split}
\end{equation}
The first result of index theory is the
following theorem.
\begin{proposition}[Index of a QCA] For any 
  QCA $\tT$ on $\mathbb Z$, the quantity 
\begin{equation}
\ind(\tT)=\sqrt{\frac{\operatorname{dim}[\mathsf{L}_{2x}]}{\operatorname{dim}[\mathsf{A}_{2x}]}}
\end{equation}
does not depend on $x$.
We call $\ind(\tT )$ the \emph{index} of the QCA.
\end{proposition}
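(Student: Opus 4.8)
The plan is to reduce the claim to a purely dimensional statement and then prove two structural identities relating the supports of the image of one block to those of its neighbour. Throughout I group the cells into blocks $B_x\coloneqq\mathsf A_{2x}\otimes\mathsf A_{2x+1}$, so that nearest-neighbour locality confines $\tT(B_x)$ to the four cells $\{2x-1,2x,2x+1,2x+2\}$, with $\mathsf L_{2x}$ and $\mathsf R_{2x}$ its support algebras on the left pair $\{2x-1,2x\}$ and the right pair $\{2x+1,2x+2\}$, respectively. Since $\ind(\tT)^2=\dim[\mathsf L_{2x}]/\dim[\mathsf A_{2x}]$, it suffices to show this ratio is invariant under $x\mapsto x+1$; as $\mathbb Z$ is connected this yields independence of $x$, and taking the square root finishes the proof.

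First I would establish the factorisation
\begin{equation}
\tT(B_x)=\mathsf L_{2x}\otimes\mathsf R_{2x},
\label{eq:fact1}
\end{equation}
with $\mathsf L_{2x}$ and $\mathsf R_{2x}$ full matrix algebras. The inclusion $\tT(B_x)\subseteq\mathsf L_{2x}\otimes\mathsf R_{2x}$ is immediate from the definition of support algebra, and the two factors commute because they are localised on the disjoint cell pairs $\{2x-1,2x\}$ and $\{2x+1,2x+2\}$. The non-trivial content is the reverse inclusion: a priori the image could embed ``diagonally'' into $\mathsf L_{2x}\otimes\mathsf R_{2x}$. This is ruled out by combining two facts: $\tT(B_x)$ is a factor, being the isomorphic image under the automorphism $\tT$ of the factor $B_x$, and its support algebras are themselves factors whose commutants are controlled by the images of the blocks adjacent to $B_x$; the standard support-algebra lemmas of index theory then upgrade the inclusion to \eqref{eq:fact1}. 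Counting dimensions gives the first identity
\begin{equation}
\dim[\mathsf L_{2x}]\,\dim[\mathsf R_{2x}]=\dim[\mathsf A_{2x}]\,\dim[\mathsf A_{2x+1}].
\label{eq:rel1}
\end{equation}

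Next I would prove the overlap identity, which is the heart of the argument. On the pair $\{2x+1,2x+2\}$ live exactly two block supports, namely $\mathsf R_{2x}$ (from $\tT(B_x)$) and $\mathsf L_{2x+2}$ (from $\tT(B_{x+1})$); by nearest-neighbour locality no other $\tT(B_y)$ reaches this pair. These two algebras commute, since $B_x$ and $B_{x+1}$ are supported on disjoint cells and $\tT$ preserves commutation. The crucial point is that they \emph{generate} the full local algebra $\mathsf A_{2x+1}\otimes\mathsf A_{2x+2}$: because $\tT^{-1}$ is again a QCA, any operator localised on $\{2x+1,2x+2\}$ has $\tT^{-1}$-preimage supported on $\{2x,\dots,2x+3\}$, hence lies in $B_x\vee B_{x+1}$; applying $\tT$ and restricting the support to $\{2x+1,2x+2\}$ shows it belongs to $\mathsf R_{2x}\vee\mathsf L_{2x+2}$. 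Commutativity together with both being factors then yields $\mathsf R_{2x}\otimes\mathsf L_{2x+2}=\mathsf A_{2x+1}\otimes\mathsf A_{2x+2}$, whence
\begin{equation}
\dim[\mathsf R_{2x}]\,\dim[\mathsf L_{2x+2}]=\dim[\mathsf A_{2x+1}]\,\dim[\mathsf A_{2x+2}].
\label{eq:rel2}
\end{equation}

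Finally, dividing \eqref{eq:rel1} by \eqref{eq:rel2} cancels $\dim[\mathsf R_{2x}]$ and $\dim[\mathsf A_{2x+1}]$, leaving
\begin{equation}
\frac{\dim[\mathsf L_{2x}]}{\dim[\mathsf A_{2x}]}=\frac{\dim[\mathsf L_{2x+2}]}{\dim[\mathsf A_{2x+2}]},
\end{equation}
so the ratio is invariant under shifting the block by one and hence constant in $x$; its square root is $\ind(\tT)$. I expect the main obstacle to be the two ``generation'' statements — the reverse inclusion in \eqref{eq:fact1} and the filling of the overlap algebra giving \eqref{eq:rel2} — since the naive support-algebra factorisation can fail for diagonal embeddings, and both steps rely essentially on combining the bijectivity of $\tT$, the locality of $\tT^{-1}$, and the structure theory of support algebras of factors from index theory.
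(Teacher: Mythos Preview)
The paper does not actually prove this proposition: it is stated in the review section on index theory as a result imported from Ref.~\cite{Index}, with no argument given in the compiled text. Your proof is correct and is precisely the standard Gross--Nesme--Vogts--Werner argument: factorise $\tT(B_x)$ as $\mathsf L_{2x}\otimes\mathsf R_{2x}$ to get one dimensional identity, show that $\mathsf R_{2x}$ and $\mathsf L_{2x+2}$ together fill the overlap pair $\mathsf A_{2x+1}\otimes\mathsf A_{2x+2}$ to get the second, and divide. Indeed, the paper's LaTeX source contains, inside a \texttt{comment} environment immediately preceding the proposition, exactly this two-identity dimensional argument (with the support algebras labelled $R_{2x}$ and $R_{2x+1}$ rather than $\mathsf L_{2x}$ and $\mathsf R_{2x}$), so your approach coincides with what the authors had sketched and then suppressed.
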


The following Proposition presents the most
important consequences of index theory.
\begin{proposition}
  \label{prp:indextheorymainprop}
  For any
  QCAs $\tT$ and $\tS$ we have that
$\ind[\tT\tS] = \ind[\tT]\ind[\tS]$ and
 $\ind[\tT] = 1$ $\textit{iff}$ $\tT$ has a
  \emph{Margolus partitioning scheme}, 
namely
\begin{equation}
\begin{split}
&\tT = \mathcal{W} \circ \mathcal{V}\\
&\mathcal{V}(\cdot) = (\prod_{n\in\mathbb{Z}}V^\dag_{2n})(\cdot)(\prod_{m\in\mathbb{Z}}V_{2m})\\
&\mathcal{W}(\cdot) = (\prod_{k\in \mathbb{Z}} W^\dag_{2k+1})(\cdot)(\prod_{l\in\mathbb{Z}}W_{2l+1})
\end{split}
\end{equation}
where $V_{2n}$ are unitaries acting on sites $2n$ and $2n + 1$, and $W_{2m+1}$ are unitaries acting on sites $2m+1$ and $2m+2$.
\end{proposition}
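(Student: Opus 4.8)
The plan is to build on the elementary structure theory of the support algebras and then treat the two claims separately. First I would establish the basic structural facts: by the structure theorem for finite-dimensional $C^*$-algebras, together with the locality of $\tT$ (which forces the images of disjoint super-cells $\mathsf A_{2x}\otimes\mathsf A_{2x+1}$ to commute) and the automorphism property (the images of all super-cells generate the whole quasi-local algebra), each $\mathsf L_{2x}$ and $\mathsf R_{2x}$ is a full matrix factor, one has the factorisation $\tT(\mathsf A_{2x}\otimes\mathsf A_{2x+1})=\mathsf L_{2x}\otimes\mathsf R_{2x}$ (with $\mathsf L_{2x}$ supported on cells $2x\minus1,2x$ and $\mathsf R_{2x}$ on cells $2x+1,2x+2$), and the complementary \emph{tiling} relation $\mathsf R_{2x-2}\otimes\mathsf L_{2x}=\mathsf A_{2x-1}\otimes\mathsf A_{2x}$. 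Invariance of dimension under the automorphism then gives $\dim\mathsf L_{2x}\,\dim\mathsf R_{2x}=(\dim\mathsf A_{2x})^2$, so $\ind[\tT]$ is controlled entirely by $\dim\mathsf R_{2x}$ (and, by homogeneity, is $x$-independent).

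For $\ind[\tT\tS]=\ind[\tT]\ind[\tS]$ I would regroup cells so that $\tT$, $\tS$, and the composite are all nearest-neighbour, and then read the index off a single fixed cut, reformulating $\ind$ as the ``overlap index'' of the pair of commuting factors $(\mathsf R_{2x-2},\mathsf L_{2x})$ sitting on the two cells straddling that cut. The interpretation is that $\ind$ measures the net flow of quantum information across the cut, so that $\log\ind$ ought to be additive under composition. The technical core, and the step I expect to be the main obstacle, is that support algebras do not compose naively: I would need a stacking lemma showing that the boundary factorisation produced by $\tS$ at the cut, followed by that produced by $\tT$, yields a boundary factorisation for $\tT\tS$ whose left and right dimensions are the products of the corresponding ones. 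Since every algebra involved is a matrix factor, this reduces to a dimension count, and multiplicativity of dimension under the tensor/commutant structure delivers additivity of $\log\ind$.

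For the characterisation, the direction $(\Leftarrow)$ is a direct computation. Writing $\tT=\mathcal W\circ\mathcal V$ with $\mathcal V=\prod_n\operatorname{Ad}(V_{2n})$ acting within the super-cells $(2n,2n+1)$ and $\mathcal W=\prod_m\operatorname{Ad}(W_{2m+1})$ acting on the straddling pairs $(2m+1,2m+2)$: since $\mathcal V$ preserves $\mathsf A_{2x}\otimes\mathsf A_{2x+1}$, and since afterwards $W_{2x-1}$ touches only cell $2x$ while $W_{2x+1}$ touches only cell $2x+1$, the image $\tT(\mathsf A_{2x}\otimes\mathsf A_{2x+1})$ factorises as a left part on cells $2x\minus1,2x$ times a right part on cells $2x+1,2x+2$, each isomorphic to $\MdC$. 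Hence $\dim\mathsf L_{2x}=d^2=\dim\mathsf A_{2x}$ and $\ind[\tT]=1$.

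Conversely, $\ind[\tT]=1$ forces $\dim\mathsf L_{2x}=\dim\mathsf R_{2x}=d^2$, so both are factors isomorphic to $\MdC$. By the tiling relation, $\mathsf R_{2x-2}$ and $\mathsf L_{2x}$ are two commuting complementary copies of $\MdC$ generating $\mathsf A_{2x-1}\otimes\mathsf A_{2x}\cong\mathsf M_{d^2}(\mathbb C)$; by the standard fact (Skolem--Noether) that any factor $\cong\MdC$ inside $\mathsf M_{d^2}(\mathbb C)$ is unitarily equivalent to the standard tensor leg, there is a unitary $W_{2x-1}$ on cells $2x\minus1,2x$ carrying this pair to the standard factorisation, and homogeneity lets me take all the $W_{2x-1}$ to be a single translate. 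Setting $\mathcal W=\prod_x\operatorname{Ad}(W_{2x-1})$, the map $\mathcal V:=\mathcal W^{-1}\tT$ then sends each $\mathsf A_{2x}\otimes\mathsf A_{2x+1}$ to itself; being a translation-invariant automorphism that preserves every super-cell factor, it must equal $\prod_x\operatorname{Ad}(V_{2x})$ with $V_{2x}$ supported on $(2x,2x+1)$, yielding $\tT=\mathcal W\circ\mathcal V$ in the required Margolus form. The delicate points here are the complementary-factor lemma and checking that the translation-invariant choice of $W$ is globally consistent across all cuts.
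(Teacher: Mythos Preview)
The paper does not supply a proof of this proposition: it appears in the review subsection on index theory and is stated, without argument, as one of the ``most important consequences'' of the construction in Ref.~\cite{Index}. So there is nothing in the paper to compare against.

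Your sketch is essentially the standard GNVW argument from that reference. The structural preliminaries (the support algebras $\mathsf L_{2x},\mathsf R_{2x}$ are full matrix factors, the factorisation $\tT(\mathsf A_{2x}\otimes\mathsf A_{2x+1})=\mathsf L_{2x}\otimes\mathsf R_{2x}$, and the complementary tiling $\mathsf R_{2x-2}\otimes\mathsf L_{2x}=\mathsf A_{2x-1}\otimes\mathsf A_{2x}$) are exactly what is proved there, and your $(\Leftarrow)$ computation and the $(\Rightarrow)$ use of Skolem--Noether to straighten the complementary pair into the standard tensor legs, yielding the $W$-layer, with the residual super-cell-preserving automorphism giving the $V$-layer, is the original route. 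The two points you flag as delicate---the ``stacking lemma'' behind multiplicativity and the translation-consistent choice of $W$---are precisely where the technical work sits in Ref.~\cite{Index}; your outline is correct in spirit, though for the multiplicativity part the original paper in fact argues via an equivalent formulation of the index (as a Fredholm-type quantity computed from a fixed cut) rather than directly composing support-algebra decompositions, which makes the dimension bookkeeping cleaner than the naive stacking you propose.
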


 \subsection{One Dimensional Nearest Neighbour Qubit Automata}\label{NNQCA}
 We now restore the translation invariance
 assumption. It is possible to show
 \cite{schumacher2004reversiblequantumcellularautomata} that a nearest-neighbor (i.e.
 $\mathcal{N}=\{-1,0,1\}$) QCA $\tT$ on
 $\mathbb{Z}$ where each cell is a qubit, has
 $\ind( \tT ) = 1$ if and only if
 \begin{equation}
   \label{eq:10}
\tT=\tikzfig{Qubitev}
\end{equation}
with $\mathcal U(X)\coloneqq U^\dag X U$, $U$ an arbitrary unitary gate acting
on a single cell, and $\mathcal C_\phi(X)\coloneqq C_\phi^\dag XC_\phi$, where $C_\phi$ is 
a controlled-phase gate, i.e.
\begin{align*}
C_\phi=\begin{pmatrix}
1&0&0&0\\
0&1&0&0\\
0&0&1&0\\
0&0&0&e^{i\phi}\\
\end{pmatrix}. 
\end{align*}
The expression~\eqref{eq:ev} of the local rule of
the QCA is
\begin{equation}
\begin{split}
    &\tT_0({A}_0)=X^{\dag}(I\otimes U^{\dag}A_0 U\otimes I)X,\\
    &X=(C_\phi\otimes I_3)(I_1\otimes C_\phi).
\end{split}
    \label{eq:x}
\end{equation}
From now on, we will set the basis in which $C_\phi$ is diagonal as the basis $\{\ket{a}\}_{a=0,1}$ such that $\sigma_z\ket{a}=(-1)^{a}\ket{a}$. We will refer to this basis as the \textit{computational basis}.

\section{QCA renormalisation}\label{sec:CG}
Following the idea presented in \cite{PhysRevE.73.026203} we
now define a procedure to provide a coarse-grained
description of the dynamics of a QCA, in such a
way that the large scale description that we
eventually obtain is still a QCA. Intuitively
speaking, the procedure consists in choosing a
subset of degrees of freedom of the original
evolution through a projection on a subspace of
the Hilbert space of $N$ neighbouring cells, and
requiring that the restriction of $N$ steps of the
original QCA to such a subspace is still a QCA.
Naively speaking, one could try to restrict the domain of
the QCA to the operators whose support is
contained in the support of
$\bigotimes_y \Pi_y$, where each $\Pi_y$ identifies a subset of degrees of freedom of $\mathsf{A}_{2y}\otimes\mathsf{A}_{2y+1}$. However, this infinite
tensor product is ill-defined, lying outside the
quasi local algebra. We overcome such problem exploiting the
wrapping lemma to reduce the coarse-graining
problem of an infinite lattice to the corresponding problem 
over all possible wrappings $\mathcal L$. In order to do that, 
we have to define a partition
$\{\Lambda_x\}_{x\in\mathcal{L}'}$ of the original
lattice $\mathcal{L}$ . The indices $x$ of such
a partition will be the cells of a new lattice
$\mathcal{L}'$: this will establish the relation
between the wrapped lattice $\mathcal{L}$
and the coarse-grained one $\mathcal{L'}$. Once
this is done, we have a formulation of the
problem in terms of coarse-graining of the family of
evolution rules over all the possible wrapped lattices
$\mathcal{L}$. We will then find conditions for the coarse-graining
to exist, and show that these
conditions do not depend on the choice of the
lattice $\mathcal{L}$. This means that our
coarse-graining condition ultimately boils down to
a single condition over a minimal regular lattice
$\mathcal{L}$. In particular, in the first
subsection we will introduce the technical tools
to properly formulate the problem of coarse-graining 
a finite lattice $\mathcal{L}$ in another
finite lattice $\mathcal{L'}$, while in the second
subsection we will apply these tools to the study
of the coarse-graining condition for a  QCA over an infinite lattice.

A QCA that admits a solution to the coarse-graining equation is \emph{renormalisable}. The
coarse-graining of QCAs determines a displacement in the space of QCAs as specified by 
their defining parameters, that represents a \emph{renormalisation flow}, whose fixed 
points play a distinguished role as in any renormalisation scenario.

\subsection{Coarse-grained algebra}
Let $\tT$ be a QCA on $\mathbb Z^s$ such that the
algebra of each cell is isomorphic
to the $C^*$-algebra of $d\times d$ complex
matrices, and let us fix some $N \in \mathbb{N}$.
Let us now consider a regular
tessellation  of $\mathbb Z^s$ such that each tile 
is a $Ns$-dimensional cube i.e.
\begin{align}
&\bigcup_{{x}\in \mathbb{Z}^s}\Lambda_{x}= \mathbb{Z}^s,\label{eq:part}\\
&\Lambda_0\coloneqq [0,N-1]^s,&&\Lambda_{{x}}\coloneqq\Lambda_0+Nx,\label{eq:transpal}
\end{align}
where $[a,b]$ denotes the set
$\{a,a+1,\ldots,b-1,b\}\subset\mathbb Z$ (see Fig. \eqref{fig:Tess} ). 
\begin{figure}
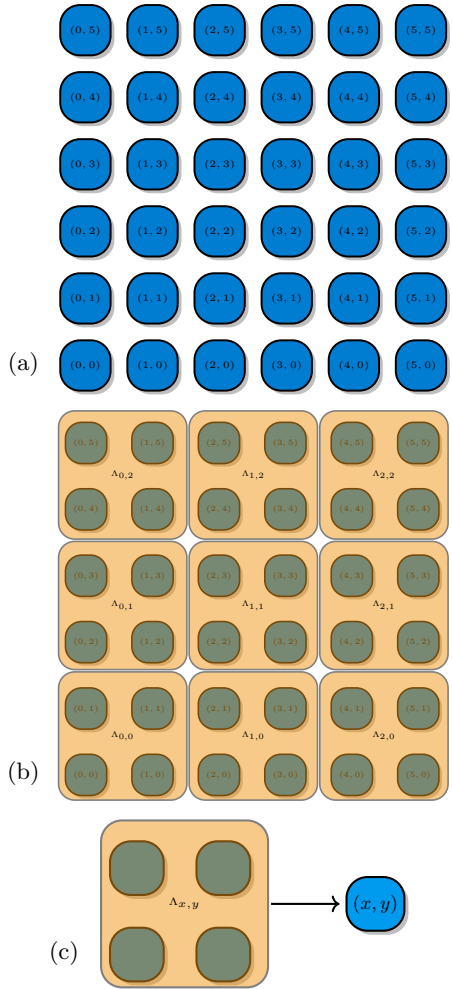

\centering
(a) \resizebox{0.25\paperwidth}{!}{
  \tikzfig{Lattice}}
\\
\medskip
(b) \resizebox{0.25\paperwidth}{!}{
   \tikzfig{LatticePart}}
\\
\medskip
(c) \resizebox{0.2\paperwidth}{!}{
   \tikzfig{Latticeprimbis}}
        \caption{Example of tessellation of $\mathcal{L}$ and mapping in $\mathcal{L'}$ in $\mathbb{Z}^2$. (a) The wrapped lattice $\mathcal{L}\simeq\mathbb Z_6\times\mathbb Z_6$. (b) Tessellation $\mathcal L'\simeq \mathbb Z_3\times\mathbb Z_3$ of the lattice $\mathcal{L}$ through $\{\Lambda_x\}$ with $\Lambda_{0,0}=[0,1]^2$. (c) The new lattice $\mathcal{L'}$ is given by the identification $\Lambda_{\vc{x}}\rightarrow \vc{x}=(x,y)$.}
\label{fig:Tess}
\end{figure}
 From
Eq.~\eqref{eq:transpal} we have that the cubes
$\{\Lambda_{{x}}\}$ are related to each other by a
translation.  Consider now a projection $\Pi_y$ in
the algebra $\mathsf{A}(\Lambda_y)$ of each of the
cubic tiles. As already noted, the projector $\bigotimes_y \Pi_y$ lies outside the quasi local algebra.
We then have to exploit the wrapping
lemma and consider the wrapped evolution $\tT_w$
on a finite lattice
\begin{align}
  \label{eq:2}
\mathcal{L} = \mathbb{Z}_{Nf}^s \quad f \in \mathbb{N}  
\end{align}
where $f$ is suffieciently large such
that the regular neighborhood condition is
satisfied. Clearly,  the lattice $\mathcal{L}$ admits a tessellation in terms of
the same $s$-dimensional hypercubes. Then we have
\begin{align}
  \label{eq:1}
  \bigcup_{{y}\in \mathcal{L}'}\Lambda_{y}=
  \mathcal{L}, \quad \mathcal{L}' :=
  \mathbb{Z}_{f}^s
\end{align}
If $\Pi_{{0}}$ is a given a projection belonging to
the local algebra of the tile $\Lambda_0$, we can
define the projection $\Pi_{{{x}}}$ on the tile $\Lambda_x$ by translating 
$\Pi_{{0}}$, namely $ \Pi_{x}\coloneqq\tau_{Nx}(\Pi_{0})$.
We now define the coarse-grained algebra $\mathsf A\vert_\Pi(\mathcal{L})$  over $\mathcal{L}$  
as follows: 
\begin{align}
  \begin{aligned}
&\mathsf
A\lvert_{\Pi}(\mathcal{L})\coloneqq
\{ A \in \mathsf{A}(\mathcal{L}) \mbox{ s.t. }
\Pi A \Pi = A\} , \\
&\Pi\coloneqq\bigotimes_{x\in \mathcal{L'}}\Pi_{x}.
\label{eq:restop}    
  \end{aligned}
\end{align}
i.e. $\mathsf A\vert_\Pi(\mathcal{L})$ is the
algebra of operators with support and range
contained in those of $\Pi$. The algebra
$\mathsf A\vert_\Pi(\mathcal{L})$
is isomorphic to the
algebra $\mathsf A(\mathcal{L'})$ on the lattice
$\mathcal{L'}$ where each local system has
dimension $d'\coloneqq\operatorname{rank}(\Pi)$.
The node $x\in \mathcal L'$ corresponds to
the tile $\Lambda_x$ in $\mathcal L$, and operators
$A\in\mathsf A(\mathcal{L'})$ are in one-to-one
correspondence with operators in
$\mathsf A\lvert_{\Pi}(\mathcal{L}) $.
Let us explicitly construct such isomorphisms.
Let $\{\ket{\psi_k}\}$ be a basis on the
Hilbert space of the tile $\Lambda_x$ such that 
\begin{equation}
\Pi_{{{x}}}=\sum_{k=0}^{d'-1} \ket{\psi_k}\bra{\psi_{k}}.
\end{equation}
We then define
\begin{equation}
\J\coloneqq\Jopconj{ }, \quad \Jop\coloneqq\sum_{k=0}^{d'-1}\ket{k}\bra{\psi_k},
\label{eq:Jop}
\end{equation}
where $\{\ket{k}\}_{k=0}^{d'-1}$ is a basis of
$\mathbb{C}^{d'}$, i.e. the Hilbert space of the
cell $x \in \mathcal{L}'$.  This operator applies
the $\Jop$ over the support of each $\Lambda_x$.
It is easy to verify that
\begin{align}
  \label{eq:3}
  \Jop^\dag\Jop=\Pi_{x},\; \Jop\Jop^\dag= I,
  \;
\J^\dag\J=\Pi, \; \J\J^\dag=I.
\end{align}
Then we define
\begin{align}
  \begin{aligned}
&\tV:\Aredloc \to \mathsf{A}(\mathcal{L})
               ,\quad \tJ: \mathsf{A}(\mathcal{L}) \to\Aredloc , \\
&\tV(A_r) := J^\dag A_r J \\
 & \tJ(A):= J \mathcal{P} (A) J^\dag,
  \quad
   \mathcal{P} (A):= \Pi A \Pi    
  \end{aligned}
   \label{eq:Jiso3}
\end{align}
and from Equation \eqref{eq:3} we have
\begin{align}
\tJ\circ\tV=\tI, \quad
\tV\circ\tJ=\tP
\end{align}
i.e. $\tV$ is an
ismomorphism between
$\Aredloc$ and its image $\Aplocc$ and  the
restriction of $\tJ$ to
$\Aplocc$ is the inverse isomorphism.
\begin{figure}
\tikzfig{J}
\caption{Representation of the maps $\tJ, \tV$.}
\end{figure}
With the choice of basis in which each $\Pi_x$ is diagonal, the
matrices representing operators
$A\in\mathsf A\lvert_{\Pi}(\mathcal{L})$, are
obtained by padding with zeros the matrices that
represent their correspondent
$A_r\in\mathsf A(\mathcal{L'})$, and conversely,
given the matrix of an operator
$A\in\mathsf A\lvert_{\Pi}(\mathcal{L})$,
the corresponding operator on
$A_r\in\mathsf A(\mathcal{L'})$ is obtained
by removing the
null blocks with image or support on the kernel of
$\Pi$. In other words, we have:
\begin{align}
A_r\in \mathsf{A}(\mathcal{L'}) \leftrightarrow \begin{pmatrix}A_r &0\\0&0\end{pmatrix}\in \mathsf A\lvert_{\Pi}(\mathcal{L}).
\end{align}

We can now define the map $(\tT)_r$ induced by $\tT_w$ on $\Aredloc$ as follows
\begin{definition}[Induced map]
  Let $\tT$ be a QCA on $\mathbb{Z}^s$
  and let $\mathcal{L} = \mathbb{Z}_{Nf}^s$,
   be a wrapped lattice such that the regular
   neighbourhood condition is satisfied.
   Let $\mathcal{L}' = \mathbb{Z}_{f}^s$
   and let $\tJ$ and $\tV$ be defined as in Eq.
   \eqref{eq:Jiso3}.
   The \emph{induced map} on the lattice
   $\mathcal{L'}$ is defined as follows:
\begin{align}
  \begin{aligned}
  &\tT_r:\Aredloc \to\Aredloc ,\quad
\tT_r\coloneqq \tJ\circ\tT_{w}\circ\tV.    
  \end{aligned}
\label{eq:induced}
\end{align}
\end{definition}
Notice that the induced map depends on the choice of the
lattices $\mathcal{L}$ and $\mathcal L'$, and we have a
\textit{family} of induced maps
$\{\tT_r\}_{\mathcal L,\mathcal{L'}}$.  In order to lighten
the notation, we will not explicitly write
dependance of the induced map on the lattices
$\mathcal{L}$ and $\mathcal{L}'$. This simplification is also justified by
the fact that that there is no
dependence of the local rule on the choice of the
lattices, as we will prove later.
We remark that, in general, we need to differentiate between the map induced by $k$ 
steps of the QCA $\tT_w$ and $k$ subsequent applications of the map induced by $\tT_w$. 
The above two transformations of the quasi-local algebra are given by
\begin{align}
&(\tT^k)_r= \tJ\circ\tT_w^k\circ\tV && (\tT_r)^k= (\tJ\circ\tT_w\circ\tV)^k,
\end{align}
respectively.
In the following we will see how the induced map pins down the idea of coarse-graining.

\subsection{Renormalisable QCAs}

We are now in position to provide a definition of renormalisable QCA.
\begin{definition}[Renormalisable QCA]
  Let $\tT$ and $\tS$ be two QCAs on $\mathbb{Z}^s$,
  $N$ be a natural number,
  and $\Pi_0 \in \mathsf{A}(\Lambda_0)$ be a
  projection on the local algebra of the region
  $\Lambda_0 \coloneqq [0,N-1]^s$. 
  We say that $\tT$  is 
\emph{renormalisable through the coarse-graining} $(\Pi_0,\Lambda_0)$
to $\tS$  if
\begin{equation}
\tT^N_{w}\circ \tV=\tV\circ\tS_w
\label{eq:cg}
\end{equation}
holds for any any wrapping $\tT_w$ on $\mathcal{L} = \mathbb{Z}_{Nf}^s$
  such that the neighbourhood of $\tT^N$ is regular over $\mathcal{L}$.
  We say that $\tS$ is a size-$N$ \emph{renormalisation} of $\tT$ through the 
  coarse-graining $(\Pi_0,\Lambda_0)$. We will also  say that  
  $\tT$ \emph{admits a size-$N$ renormalisation} if there exists
  a QCA $\tS$ and a coarse-graining $(\Pi_0,\Lambda_0)$ such that $\tS$ is a size-$N$ 
  renormalisation of $\tT$ through $(\Pi_0,\Lambda_0)$.
\end{definition}
\begin{remark}
Notice that this definition involves the map $\tV$, whose role is to embed the single cell algebra into the N-cell algebra. This definition may seem in contrast with the one in~\cite{PhysRevE.73.026203}, where the coarse-graining map takes the state of the $N$-supercell and projects it into the state of a single cell. However, we have to remember that we are working in the Heisenberg picture here. The above definition represent the faithful transposition of that in~\cite{PhysRevE.73.026203} to the quantum case.
\end{remark}

  \begin{figure}[!]
\resizebox{0.45\textwidth}{!}{
  \begin{tikzpicture} [remember picture]
  \node (A) at (-2,2.25) {\includegraphics[scale=0.2]{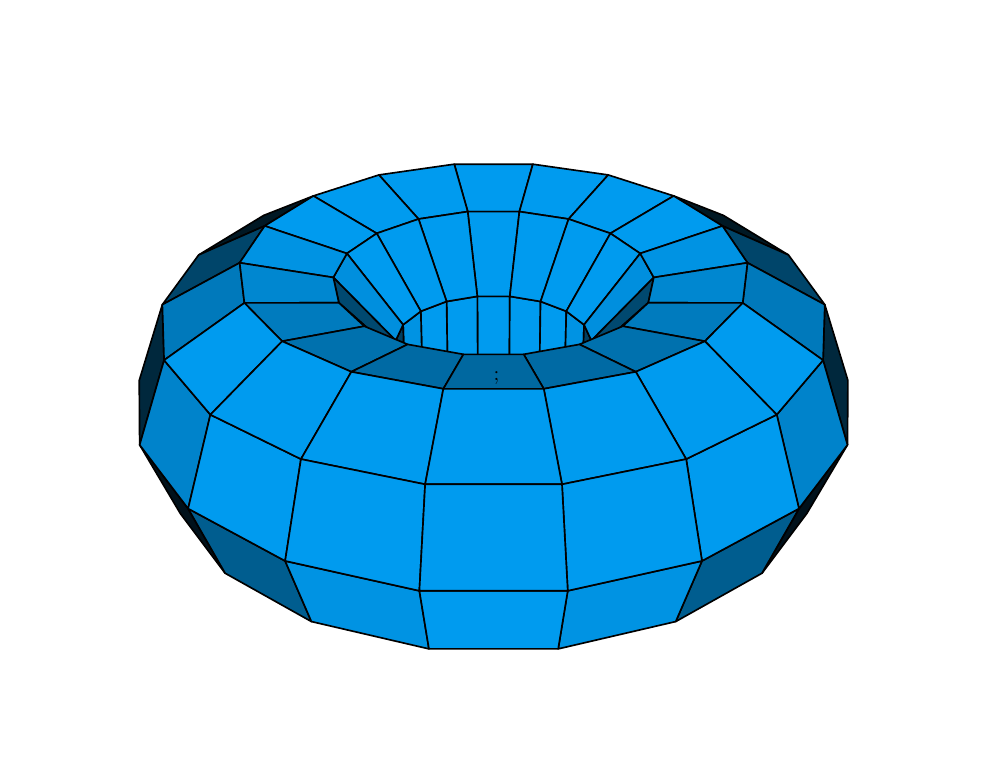}};
  \node (B) at (-2,-2.25) {\includegraphics[scale=0.2]{toruscg}};
  \draw[->,line width=2 pt] ($(A)-(0,1.25)$) to node[anchor=east] {\huge $\tS$} ($(B)+(0,1)$) ;
  \node (C) at (3.2, 2.25 ) {\includegraphics[scale=0.2]{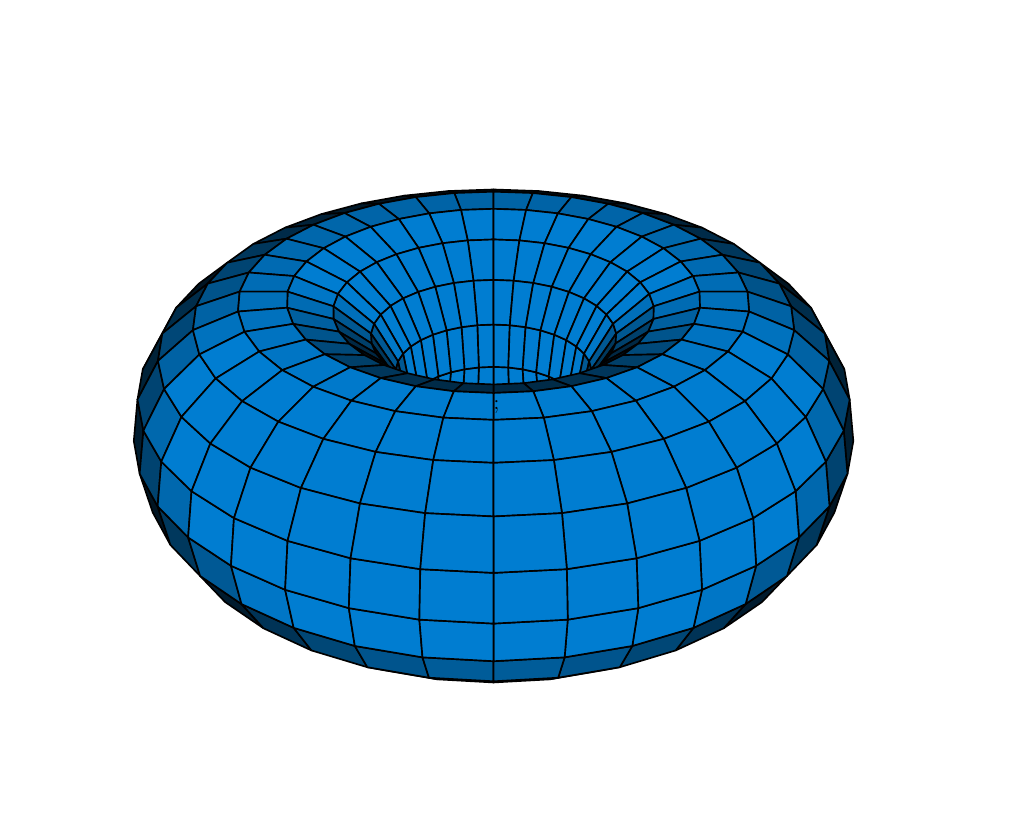}};
  \node (D) at (3.2,-2.25) {\includegraphics[scale=0.2]{torus}};
  \draw[->,line width=2 pt] ($(C)-(0,1.25)$) to node[anchor=west] {\huge $\tT^N$} ($(D)+(0,1)$) ;
  \draw[->,line width=2 pt]  ($(A)+(1.75,-0.25)$) to node[anchor=south] {\huge $\tV$}  ($(C)-(1.75,0.25)$) ;
  \draw[->,line width=2 pt]  ($(B)+(1.75,-0.25)$) to node[anchor=north] {\huge $\tV$} ($(D)-(1.75,0.25)$);
  \end{tikzpicture}}
  \caption{The condition in eq.\eqref{eq:cg} amounts to ask the commutativity of the above diagram: if $\tS$ is a renormalisation of $\tT$ then evolving with a single step of $\tS$ in the coarse-grained algebra and then embed the coarse-grained algebra in the original one through the map $\tV$ is the same than embed the coarse-grained algebra in the original one and consider N steps of $\tT$ . In this way, a single step of $\tS$ mimics the behaviour of $\tT^N$ if reduced to the chosen degrees of freedom.}
  \end{figure}

This definition is modelled on the analogous one
from~\cite{PhysRevE.73.026203}.
However it is
not particularly amenable for the analysis of the
problem due to the reference to all possible wrappings $\tT^N_w$. The following lemmas 
will show that it is sufficient to consider just one wrapping and check the renormalisation condition there. Z
\begin{lemma}\label{lem:CgT}
  If a QCA $\tT$ on $\mathbb{Z}^s$ is 
renormalisable through $(\Pi_0,\Lambda_0)$ to some QCA $\tS$ as in eq.\eqref{eq:cg},
then every induced map $(\tT^N)_r$ is a QCA. 
\end{lemma}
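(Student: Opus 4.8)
The plan is to show that, once the renormalisability condition holds, the map induced by $N$ steps is nothing but the wrapped evolution of $\tS$ itself, and is therefore a QCA by construction. First I would fix one admissible wrapping, i.e.\ a lattice $\mathcal L=\mathbb Z_{Nf}^s$ on which the neighbourhood of $\tT^N$ is regular, together with the tessellation lattice $\mathcal L'=\mathbb Z_f^s$ and the maps $\tV,\tJ$ of Eq.~\eqref{eq:Jiso3}. Recalling the definition of the induced map (Eq.~\eqref{eq:induced}, in the form valid for $N$ steps), one has $\Tred=\tJ\circ\tT_w^N\circ\tV$.

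The core of the argument is then a direct substitution. Inserting the renormalisability condition Eq.~\eqref{eq:cg}, namely $\tT_w^N\circ\tV=\tV\circ\tS_w$, gives
\begin{align*}
\Tred=\tJ\circ\tT_w^N\circ\tV=\tJ\circ\tV\circ\tS_w=\tS_w,
\end{align*}
where the last equality uses the isomorphism relation $\tJ\circ\tV=\tI$ obtained from Eq.~\eqref{eq:3}. Since $\tS$ is by hypothesis a QCA on $\mathbb Z^s$ and $\tS_w$ is precisely its wrapped evolution on $\mathcal L'$ (Definition~\eqref{def:wrapped}), the Wrapping Lemma~\eqref{lem:wrap} guarantees that $\tS_w$ is a QCA. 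As the lattice $\mathcal L$ was an arbitrary admissible wrapping, the identity $\Tred=\tS_w$ holds for every member of the family $\{\Tred\}_{\mathcal L,\mathcal L'}$, so every induced map is a QCA.

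I expect the only delicate point to be the legitimacy of replacing $\tJ\circ\tV$ by $\tI$: the map $\tJ$ inverts $\tV$ only on the image $\Aplocc$, so one must check that the object it is applied to actually lives there. This is immediate here, because the right-hand side $\tV\circ\tS_w$ of Eq.~\eqref{eq:cg} manifestly takes values in $\Aplocc$, on which $\tV\circ\tJ=\tP$ acts as the identity. Once this containment is noted, the remainder is the routine substitution above, and no further estimates or structural results are required.
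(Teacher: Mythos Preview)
Your proof is correct and follows essentially the same route as the paper's: compose the renormalisability condition on the left with $\tJ$, use $\tJ\circ\tV=\tI$, and read off $(\tT^N)_r=\tS_w$ for every admissible wrapping. The only remark is that your ``delicate point'' is in fact a non-issue in the direction you need: the identity $\tJ\circ\tV=\tI$ holds on all of $\Aredloc$ (this is stated right after Eq.~\eqref{eq:3}), so no containment check is required to cancel $\tJ\circ\tV$; the restriction to $\Aplocc$ only matters for the reverse composition $\tV\circ\tJ=\tP$.
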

\begin{proof}
Let us begin remembering the expression of the induced map of N-steps of $\tT$:
\begin{equation}
(\tT^N)_r= \tJ\circ\tT^N_{w}\circ\tV.
\label{eq:inducedN}
\end{equation}
Composing \eqref{eq:cg} on the left with $\tJ$ we have
\begin{equation}
\tJ\circ\tT^N_{w}\circ\tV=\tJ\circ\tV\circ\tS_w=\tS_w,
\end{equation}
which reads
\begin{equation}
\tS_w=(\tT^N)_r,
\end{equation}
for every possible choice of the lattice $\mathcal{L}$.
This means that, if $\tS$ is a renormalisation of $\tT$, then $(\tT^N)_r$ is a QCA. 
\end{proof}

Notice that, if $(\tT^N)_r$ is a QCA for every
choice of the lattice $\mathcal{L}$, the Wrapping
Lemma allows us to extend it to a QCA $\tS$ on the
\textit{quasi-local} algebra over the infinite
lattice $\Aqloc s$. However, the local rule of $\tS$ may depend on the specific wrapping. 
In the following, we prove that this is not the case.

Let us now study when the
condition \eqref{eq:cg}  holds for a given
wrapped lattice $\mathcal{L}$. First of all, since $\tT_w$
is a $C^*$-algebra automorphism, 
we can
always find a unitary $U_\tT$ such that
\begin{equation}\label{eq:evunit}
\tT_w(A)=\Tmat^\dag A \Tmat \hspace{10pt} \forall A\in \mathsf{A}(\mathcal{L}),
\end{equation}
and the same holds for $\tS_\mathcal{L'}$:
\begin{equation}
\tS_w(A)=(U_\tS)^\dag A (U_\tS)\hspace{10pt} \forall A\in \mathsf{A}(\mathcal{L'}).
\end{equation}
Moreover, eq.\eqref{eq:inducedN} allows us to write:
\begin{align}
&\Tred (A)=\Tredmat^\dag A \Tredmat\label{eq:tred}\\
&\Tredmat \coloneqq \J  \Tnmat \J^\dag \label{eq:tredmat}
\end{align}
Let us begin providing some trivial examples of renormalisation to develop the intuition.
\begin{itemize}
\item \textbf{Local Unitaries}: Let $\tT$ be a QCA
  over $\mathbb{Z}$ whose effect is to apply the
  same local transformation $U_x$ on each cell
  $x$, namely:
\begin{equation*}
\tT(\mathsf{A}_{\Lambda})=\left(\bigotimes_{x\in\Lambda} U^\dag_x\right)\mathsf{A}_{\Lambda}\left(\bigotimes_{y\in\Lambda}U_y\right).
\end{equation*}
Wrapping $\tT$ over a lattice $\mathcal{L}$ we obtain 
the induced map:
\begin{align*}
\begin{aligned}
&\Tredmat=\J \left(\bigotimes_{x\in\L} U^N_x \right) \J^\dag=\\
&=\left(\bigotimes_{x\in\mathcal{L}'}\J_{\Lambda_x} \right) \left(\bigotimes_{x\in\L} U^N_x \right) \left(\bigotimes_{y\in\mathcal{L}'}\J_{\Lambda_y} \right)^\dag.
\end{aligned}
\end{align*}
 Since the
transformation is completely factorized, we can
evaluate the expression over the support of a
single tile $\Lambda_x$, i.e.:
\begin{align*}
\begin{aligned}
&(\Tredmat)_x=\\
&=\J_{\Lambda_x} \left(\bigotimes_{x'\in\Lambda_x} U^N_{x'} \right) \J^\dag_{\Lambda_x}=\sum_{k,k'} u_{k,k'}\ketbra{k}{k'},
\end{aligned}
\end{align*}
with 
\begin{align*}
u_{k,k'}\coloneqq \bra{\psi_k^{(x)}}\left(\bigotimes_{x'\in\Lambda_x} U^N_{x'} \right)\ket{\psi_{k'}^{(x)}}.
\end{align*}
The total matrix is then $\Tredmat=\bigotimes_{x\in\L'}(\Tredmat)_x$.
In particular, if we choose the $\{\ket{\psi^{(x)}_k}\}_k$ to be eigenvectors of $\left(\bigotimes_{x\in\Lambda_x} U^N_x \right)$ we have:
\begin{equation*}
(\Tredmat)_x=\sum_k u_{k,k}\ketbra{k}{k}.
\end{equation*}
That is, the renormalised evolution is still a local unitary with eigenvalues given by a subset of the eigenvalues of $\left(\bigotimes_{x\in\Lambda_x} U^N_x \right)$ and eigenvector corresponding to the chosen basis $\{\ket{k}\}_{k=0}^{d'-1}$. A pictorial representation of the size-2 renormalisation is shown in fig.\eqref{fig:CG}.
\item \textbf{Shift}: Consider the shift $\tau_{\pm 1}$ as in Definition \eqref{def:shift}. Following the same steps as before we can compute the matrix implementing the induced map as:
\begin{equation*}
\begin{aligned}
&\Tredmat=\J \left(\bigotimes_{x\in\L} \tau_{\pm 1}^N \right) \J^\dag=\\
&=\left(\bigotimes_{x\in\mathcal{L}'}\J_{\Lambda_x} \right) \left(\bigotimes_{x\in\L} \tau_{\pm N} \right) \left(\bigotimes_{y\in\mathcal{L}'}\J_{\Lambda_{y}} \right)^\dag.
\end{aligned}
\end{equation*}
Exploiting the fact that the size of each $\Lambda_x$ is N, we have $\tau_{\pm N}\ket{\psi_{k}^{(x)}}=\ket{\psi_{k}^{(x\pm 1)}}$, i.e. $\tau_{\pm N}$ shifts the whole $\Lambda_x$ in $\Lambda_{x\pm 1}$ and, consequently, $\tau_{\pm N} \J_{\Lambda_x}=\J_{\Lambda_{x+1}}$. In this way we have:
\begin{equation*}
\begin{aligned}
&\Tredmat=\\
&=\left(\bigotimes_{x\in\mathcal{L}'}\J_{\Lambda_x} \right) \left(\bigotimes_{y\in\mathcal{L}'}\J_{\Lambda_{y+1}} \right)^\dag=\tau_{\pm 1}.
\end{aligned}
\end{equation*}
In other words, N steps of a shift in $\L$ are renormalised to a single step of a shift in $\L'$. A pictorial representation of the size-2 renormalisation is shown in fig.\eqref{fig:CG}.
\end{itemize}
\begin{figure}
\centering
(a)\resizebox{0.35\paperwidth}{!}{
\begin{tikzpicture}
  \node (A) at (-3,0)  {\resizebox{0.5\hsize}{!}{\tikzfig{LocUnit}}};
  \node (B) at (2,0) {\resizebox{0.4\hsize}{!}{\tikzfig{LocUnitprim}}};
  \draw[->,line width=2pt] ($(A)+(2.2,0)$)--($(B)-(1.8,0)$);
\end{tikzpicture}}
(b)\resizebox{0.35\paperwidth}{!}{\begin{tikzpicture}
  \node (A) at (-3,0)  {\resizebox{0.5\hsize}{!}{\tikzfig{Leftshtwo}}};
  \node (B) at (2.1,-0.1) {\resizebox{0.4\hsize}{!}{\tikzfig{Leftshmod}}};
  \draw[->,line width=2pt] ($(A)+(2.3,0.2)$)--($(B)-(1.9,-0.3)$);
\end{tikzpicture}}
        \caption{Trivial examples of size-2 renormalisation over one dimensional lattices. (a) Example of renormalisation of a local transformation. Here we have $\Lambda_x=\{2x,2x+1\}$, two steps of the unitary matrix $U$ over two cells are mapped in a single application of the new matrix $U'$. (b) Example of renormalisation of two steps of a right shift with $\Lambda_x=\{2x,2x+1\}$. After two steps, all the content of $\Lambda_x$ is moved in $\Lambda_{x+1}$. Thus the renormalised evolution is a shift $x\rightarrow x+1$ in the $\L'$ lattice.}
\label{fig:CG}
\end{figure}
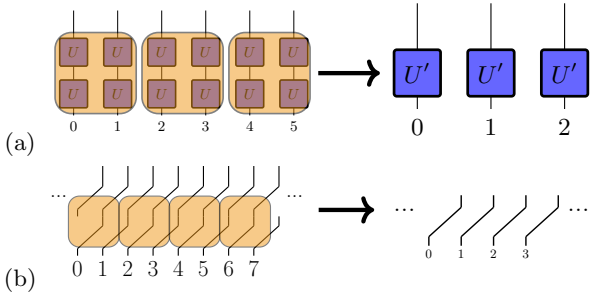
Notice that, in principle, when we deal with more complex automata, we have no reason to believe that $\Tredmat$ is a unitary matrix. However, condition~\eqref{eq:cg} implies that $\Tredmat=U_\tS e^{i\phi}$ is a unitary matrix for every suitable lattice $\mathcal{L}$. If this is the case eq.\eqref{eq:tred} guarantees that $\Tred$ is an 
automorphism.

The following Lemma imposes a necessary and sufficient condition on $U_\tT$ for eq.\eqref{eq:cg} to hold:
\begin{lemma} \label{lem:comm}
Eq.\eqref{eq:cg} holds with $\tS_\mathcal{L}=(\tT^N)_r$ being an automorphism \textit{iff} 
\begin{equation}\label{eq:commuwrap}
[\Tnmat,\Pi]=0.
\end{equation}
\end{lemma}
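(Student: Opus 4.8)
The plan is to turn the abstract diagram \eqref{eq:cg} into an explicit condition on the unitary $U\coloneqq\Tnmat$ that implements $N$ steps of $\tT_w$, and then to analyse that condition. Evaluating both sides of \eqref{eq:cg} on an arbitrary $A_r\in\Aredloc$ by means of $\tT^N_w(A)=U^\dag A U$, $\tV(A_r)=\J^\dag A_r\J$, the expression \eqref{eq:tred}--\eqref{eq:tredmat} for $\Tred$, and the relations $\J^\dag\J=\Pi$, $\J\J^\dag=I$ of \eqref{eq:3}, I obtain
\begin{align*}
\tT^N_w(\tV(A_r))=U^\dag \J^\dag A_r\J\,U,\qquad \tV(\Tred(A_r))=\Pi\,U^\dag \J^\dag A_r\J\,U\,\Pi .
\end{align*}
As $A_r$ ranges over $\Aredloc$ the element $B\coloneqq\J^\dag A_r\J$ ranges over all of $\Aplocc$ (since $\tV$ is an isomorphism onto $\Aplocc$), so \eqref{eq:cg} is equivalent to requiring $U^\dag B U=\Pi\,U^\dag B U\,\Pi$ for every $B\in\Aplocc$, i.e.\ that conjugation by $U$ leaves $\Aplocc$ invariant.

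For the implication $[U,\Pi]=0\Rightarrow$ \eqref{eq:cg}, I would note that commutation lets me pull $\Pi$ through $U$, so for $B\in\Aplocc$ one has $\Pi U^\dag B U\Pi=U^\dag\Pi B\Pi\,U=U^\dag B U$ (using $\Pi B\Pi=B$); the two displayed operators then coincide. To see that the induced $\Tred$ is genuinely an automorphism, I would check that $\Tredmat=\J U\J^\dag$ is unitary on the coarse-grained space, since $\Tredmat\Tredmat^\dag=\J U\Pi U^\dag\J^\dag=\J\Pi\J^\dag=I$ (and symmetrically), using $U\Pi U^\dag=\Pi$ and $\J\Pi\J^\dag=\J\J^\dag\J\J^\dag=I$; conjugation by a unitary is an automorphism.

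For the converse I would exploit the automorphism hypothesis through unitality. Applying $\Tred$ to the identity gives $\Tred(I)=\J U^\dag\Pi U\J^\dag$, which must equal $\J\J^\dag=I$; sandwiching between $\J^\dag$ and $\J$ and using $\J^\dag\J=\Pi$ collapses this to the single relation $\Pi U^\dag\Pi U\Pi=\Pi$. From here, for any $\ket{\phi}\in\operatorname{Im}\Pi$ this forces $\uon{\Pi U\ket{\phi}}=\uon{U\ket{\phi}}=\uon{\ket{\phi}}$, whence $U\ket{\phi}\in\operatorname{Im}\Pi$; thus $U$ maps $\operatorname{Im}\Pi$ into itself, and by unitarity together with $\dim\mathsf A(\mathcal L)<\infty$ it maps $\operatorname{Im}\Pi$ onto itself and so preserves $(\operatorname{Im}\Pi)^\perp=\ker\Pi$ as well, which is precisely $[U,\Pi]=0$.

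I expect the last upgrade to be the only delicate point: the bare content of \eqref{eq:cg} is the one-sided isometry relation $\Pi U^\dag\Pi U\Pi=\Pi$, and promoting it to genuine commutation requires finite-dimensionality. One may argue as above with norms, or equivalently observe that conjugation by $U$ is an injective $*$-homomorphism of the finite-dimensional algebra $\Aplocc$ into itself, hence surjective and therefore unital, so that the unit $\Pi$ of $\Aplocc$ satisfies $U^\dag\Pi U=\Pi$. Throughout I would keep careful track of the three relevant identities---$I$ on $\mathcal L$, the unit $\Pi$ of $\Aplocc$, and $I$ on $\mathcal L'$---and of the asymmetry $\J^\dag\J=\Pi$ versus $\J\J^\dag=I$.
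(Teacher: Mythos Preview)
Your proof is correct and follows essentially the same approach as the paper: both reduce Eq.~\eqref{eq:cg} to the statement that conjugation by $\Tnmat$ preserves $\Aplocc$, and both verify unitarity of $\Tredmat$ from $[U,\Pi]=0$ via $\J U\Pi U^\dag\J^\dag=\J\Pi\J^\dag=I$. For the converse the paper composes Eq.~\eqref{eq:cg} with $\tJ$ and $\tP$ at the map level to obtain $\tP\circ\tT^N_w\circ\tP=\tT^N_w\circ\tP$ (hence $\Pi U\Pi=\Pi U$), whereas you extract $\Pi U^\dag\Pi U\Pi=\Pi$ from unitality of $\Tred$ and finish with a norm argument---these are minor presentational variants of the same block-structure reasoning in finite dimension.
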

\begin{proof}
If \eqref{eq:cg} holds, we can compose it on the right with $\tJ$ and get
\begin{equation}
\tT^N_w\circ \tP=\tV\circ\tS_w\circ\tJ.
\end{equation}
Now, composing the r.h.s.~of the latter expression with $\tP=\tV\circ\tJ$ on the left gives
$
\tP\circ\tV\circ\tS_w\circ\tJ=\tV\circ\tS_w\circ\tJ=\tT^N_w\circ\tP,
$
while on the other hand, considering l.h.s., we obtain
$
\tP\circ\tT^N_w\circ\tP=\tT^N_w\circ\tP.
$

In terms of the unitary matrix $U_{\tT^N}$ this means:
$
\Pi U_{\tT^N}\Pi=\Pi U_{\tT^N}$,
i.e.  $U_{\tT^N}$ is block-diagonal over $\operatorname{Supp}(\Pi)$, 
and the commutation relation~\eqref{eq:commuwrap} trivially follows.
To prove the converse, 
we now remind that $\Tredmat =\J\Tnmat\J^\dag$. Remembering that $\J\J^\dag=I$ and 
$\J^\dag\J=\Pi_{\mathcal{L}}$, we then have
\begin{align*}
&\Tredmat \Tredmat^\dag= \J \Tnmat \J^\dag\J \Tnmat^\dag \J^\dag=\\
&= \J \Tnmat \Pi \Tnmat^\dag \J^\dag  = \J \Pi \Tnmat\Tnmat^\dag \J^\dag=\\
&=\J\J^\dag\J\J^\dag= I_{\mathcal{L'}},
\end{align*}
where we exploited the commutation of $\Tnmat$
with $\Pi$. The same holds for
$\Tredmat^\dag\Tredmat$ and thus $\Tredmat$ is
unitary over $\Aredloc $.  We can thus make the
identification $U_\tS=\Tredmat$, and obtain the
identity
$
 U_\tS\J=\J\Tnmat,
$
which yelds
$
\tV\circ\tS_w=\tT^N_w\circ\tV,
$
with  $\tS_w$ an automorphism of the lattice algebra 
$\mathsf{A}(\mathcal{L'})$.
\end{proof}
Notice that eq.~\eqref{eq:commuwrap} is necessary and sufficient for condition~\eqref{eq:cg} for~\emph{some} automorphism $\tS_\mathcal L$, which by now is not necessarily the wrapped version of a fixed QCA $\tS$ on $\mathbb Z^s$. Thus, we do not have yet a necessary and sufficient condition for renormalisability.

It is easy to see that the commutation
relation~\eqref{eq:commuwrap} can be formulated as
follows
\begin{align}
  \begin{aligned}
    &\Tnmat \Pi
  \Tnmat^\dag=\Pi \iff 
\tT^N_w(\Pi)=\Pi,
  \end{aligned}
\label{eq:conschaingen}
\end{align}
for all the possible lattices.  We now show that all
maps $\tS_\mathcal L$ share the same local rule. This has
two important consequences. The first one is that eqs.~\eqref{eq:inducedN} 
and~\eqref{eq:commuwrap} become necessary and sufficient for renormalisability through 
$(\Pi_0,\Lambda_0)$. The second and most powerful one is that we can check the
condition in eq.\eqref{eq:conschaingen} only on
the smallest lattice such that the neighborhood of
$\tT^N$ is regular.  We first state the
following two lemmas. The first one relates the
translations in $\mathcal{L}$ with the ones in the
coarse-grained lattice $\mathcal{L}'$:
\begin{lemma}
$(\tT^N)_r$ is invariant under shifts on $\mathcal{L'}$
\end{lemma}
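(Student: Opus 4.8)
The plan is to show that $(\tT^N)_r$ commutes with every shift $\tau'_y$ of the coarse lattice $\mathcal{L}'$, for all $y\in\mathcal{L}'$. The guiding observation is that a shift by $y$ on $\mathcal{L}'$ corresponds to a shift by $Ny$ cells on $\mathcal{L}$, since the tessellation is labelled so that $\tau_{Ny}$ carries the tile $\Lambda_x$ onto $\Lambda_{x+y}$. Writing $(\tT^N)_r=\tJ\circ\tT_w^N\circ\tV$, I would establish three covariance facts and then compose them: (i) $\tV\circ\tau'_y=\tau_{Ny}\circ\tV$; (ii) $\tJ\circ\tau_{Ny}=\tau'_y\circ\tJ$; and (iii) $\tT_w^N\circ\tau_{Ny}=\tau_{Ny}\circ\tT_w^N$.

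Facts (i) and (ii) both reduce to a single operator-level intertwining of the isometry $\J=\bigotimes_{x\in\mathcal{L}'}J_{\Lambda_x}$. Since the coarse-graining is homogeneous --- recall $\Pi_x=\tau_{Nx}(\Pi_0)$ --- I would choose the tile bases covariantly, $\ket{\psi_k^{(x)}}=S_{Nx}\ket{\psi_k^{(0)}}$, where $S_{Nx}$ implements $\tau_{Nx}$ on $\mathcal{L}$. This makes each $J_{\Lambda_x}$ an honest translate of $J_{\Lambda_0}$ and yields the key relation $\J S_{Ny}=S'_y\J$, with $S'_y$ the shift unitary on $\mathcal{L}'$; taking the adjoint and replacing $y\mapsto -y$ also gives $\J S_{Ny}^\dag=(S'_y)^\dag\J$. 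Fact (i) then follows from $\tV(A_r)=\J^\dag A_r\J$ by substituting $\tau'_y(A_r)=(S'_y)^\dag A_r S'_y$ and pushing the shift operators through $\J$ via these relations. Fact (ii) follows identically from $\tJ(A)=\J\,\mathcal P(A)\,\J^\dag=\J A\J^\dag$, where the projections are absorbed because $\J\Pi=\J$ and $\Pi\J^\dag=\J^\dag$ by Eq.~\eqref{eq:3}. Fact (iii) is immediate from homogeneity: since $\tT_w$ is a QCA on $\mathcal{L}$ it commutes with every lattice translation by the second axiom of Definition~\eqref{def:QCA}, hence so does $\tT_w^N$, and in particular with $\tau_{Ny}$.

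Combining the three, I would compute
$(\tT^N)_r\circ\tau'_y=\tJ\circ\tT_w^N\circ\tV\circ\tau'_y=\tJ\circ\tT_w^N\circ\tau_{Ny}\circ\tV=\tJ\circ\tau_{Ny}\circ\tT_w^N\circ\tV=\tau'_y\circ\tJ\circ\tT_w^N\circ\tV=\tau'_y\circ(\tT^N)_r$,
which is the claim. The only genuinely delicate point is the covariance relation $\J S_{Ny}=S'_y\J$: it hinges on the fact that the tessellation and the projector $\Pi_0$ are propagated by the \emph{same} translation $\tau_{Nx}$ that labels the tiles, so that the per-tile isometries are translates of one another. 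Everything else is formal manipulation of the identities in Eq.~\eqref{eq:3} together with the homogeneity axiom.
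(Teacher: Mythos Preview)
Your proof is correct and follows essentially the same approach as the paper: you establish the three intertwining relations $\tV\circ\tau'_y=\tau_{Ny}\circ\tV$, $\tJ\circ\tau_{Ny}=\tau'_y\circ\tJ$, and $[\tT_w^N,\tau_{Ny}]=0$, then chain them. The paper simply states the first two relations without justification, whereas you supply the operator-level argument via $\J S_{Ny}=S'_y\J$, which is a welcome addition.
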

\begin{proof}
A shift $\tau_x^w$ along the vector $x$ in $\mathcal{L'}$ 
corresponds to a shift $\tau_{Nx}^w$
along the vector $Nx$ in $\mathcal{L}$.
Since  
\begin{align*}
\begin{aligned}
\tau_{Nx}^w\tV=\tV\tau_{x}^w,\\
\tau_{x}^w\tJ=\tJ\tau_{Nx}^w,\\
\end{aligned}
\end{align*}
 and the evolution
$\tT^N$ commutes with $\tau_{Nx}^w$,
considering their composition $(\tT^N)_r$, we have:
\begin{align*}
[(\tT^N)_r,\tau_x]=0 \quad \forall x\in \mathcal{L}.
\end{align*}
\end{proof}
The other Lemma we have to prove is about the form of the unitary matrix $\Tnmat$.
\begin{lemma}
$\Tnmat$ can be written in the following form:
\begin{align*}
&\Tnmat=[(I_{\Lambda_{0}}\otimes V)(U\otimes I_{\Lambda_{\mathcal W}})],\\
&\mathcal W\coloneqq\mathcal{L}\setminus(\Lambda_0\cup\mathcal{N}^N_{\Lambda_0}),
\end{align*}
where $U$ is independent of the wrapping, and $\mathcal{N}^N_{\Lambda_0}$ denotes the neighbourhood of $\Lambda_0$ for $\tT^N$.
\end{lemma}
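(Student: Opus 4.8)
The plan is to deduce the factorisation from the locality of $\tT^N$ and the structure theory of finite–dimensional $C^*$–algebras. Since a composition of QCAs is again a QCA, $\tT^N$ is a QCA whose neighbourhood scheme is the $N$–fold composite $\mathcal N^N$, so $\tT^N(\mathsf A(\Lambda_0))\subseteq\mathsf A(\mathcal N^N_{\Lambda_0})$. Writing $\tT^N_w(\cdot)=\Tnmat^\dag(\cdot)\Tnmat$ as in eq.\eqref{eq:evunit}, this says that the algebra
\begin{align*}
\mathcal M\coloneqq\Tnmat^\dag\,\bigl(\mathsf A(\Lambda_0)\otimes I\bigr)\,\Tnmat
\end{align*}
is contained in $\mathsf A(\mathcal N^N_{\Lambda_0})=\mathsf A(\Lambda_0\cup B)\otimes I_{\mathcal W}$, where $B\coloneqq\mathcal N^N_{\Lambda_0}\setminus\Lambda_0$ and $\mathcal L=\Lambda_0\sqcup B\sqcup\mathcal W$. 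Because $\tT^N$ is an automorphism, $\mathcal M$ is a faithful copy of the full matrix algebra $\mathsf A(\Lambda_0)$ sitting inside the factor $\mathsf A(\Lambda_0\cup B)$ and acting trivially on $\mathcal W$. The whole statement thus reduces to splitting a unitary that carries $\mathsf A(\Lambda_0)\otimes I$ onto such an $\mathcal M$.

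The core step is a disentangling argument. Both $\mathsf A(\Lambda_0)\otimes I_B$ and $\mathcal M$ are full matrix subalgebras of $\mathsf A(\Lambda_0\cup B)$, each isomorphic to $\mathsf A(\Lambda_0)$; by the type–I factor decomposition their commutants inside $\mathsf A(\Lambda_0\cup B)$ are both isomorphic to $\mathsf A(B)$, so there exists a unitary $U$ \emph{supported on $\Lambda_0\cup B$} with $U^\dag(\mathsf A(\Lambda_0)\otimes I)U=\mathcal M$. Setting $Q\coloneqq\Tnmat\,(U\otimes I_{\mathcal W})^\dag$ one checks $Q^\dag(\mathsf A(\Lambda_0)\otimes I)Q=\mathsf A(\Lambda_0)\otimes I$, i.e.\ $Q$ normalises the factor $\mathsf A(\Lambda_0)\otimes I$. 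As every automorphism of the matrix algebra $\mathsf A(\Lambda_0)$ is inner, there is a unitary $u\in\mathsf A(\Lambda_0)$ such that $Q(u\otimes I)^\dag$ commutes with $\mathsf A(\Lambda_0)\otimes I$ and hence lies in its commutant $I_{\Lambda_0}\otimes\mathsf A(\mathcal L\setminus\Lambda_0)$; thus $Q=(I_{\Lambda_0}\otimes V)(u\otimes I)$ for some unitary $V$ on $\mathcal L\setminus\Lambda_0$. Folding $u$ into $U$ (both act on $\Lambda_0\cup B$ and trivially on $\mathcal W$) yields $\Tnmat=(I_{\Lambda_0}\otimes V)(U\otimes I_{\mathcal W})$, which is the claimed form.

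It remains to see that $U$ can be chosen independently of the wrapping. The algebra $\mathcal M$ together with its embedding into $\mathsf A(\mathcal N^N_{\Lambda_0})$ is completely determined by the transition rule of $\tT^N$, which by the Wrapping Lemma~\eqref{lem:wrap} coincides for every regular wrapping; moreover, regularity of the neighbourhood of $\tT^N$ guarantees that the finite region $\mathcal N^N_{\Lambda_0}$ carries no wrap–around identifications, so the data $(\mathcal N^N_{\Lambda_0},\mathcal M)$ are literally the same for all admissible $\mathcal L$. One may therefore fix a single intertwiner $U$ from this purely local data, while $V$ — which absorbs the remaining global action on $\mathcal L\setminus\Lambda_0$ — is permitted to vary with $\mathcal L$.

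I expect the disentangling step to be the main obstacle: one must ensure that the intertwining unitary $U$ acts \emph{only} on $\Lambda_0\cup B$ (never on $\mathcal W$) and that the residual normaliser splits exactly as $I_{\Lambda_0}\otimes V$. Both facts rest on the commutant computations for the factor $\mathcal M\subseteq\mathsf A(\Lambda_0\cup B)\otimes I_{\mathcal W}$ — namely that its relative commutant is isomorphic to $\mathsf A(B)$ and that the normaliser of a full factor decomposes over the tensor factorisation — so the care lies there rather than in the (routine) algebraic bookkeeping.
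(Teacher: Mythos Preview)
Your argument is correct and follows essentially the same route as the paper: both use the locality $\tT^N(\mathsf A(\Lambda_0))\subseteq\mathsf A(\mathcal N^N_{\Lambda_0})$ together with the structure theorem for full matrix subalgebras to produce a local intertwiner $U$ on $\Lambda_0\cup\mathcal N^N_{\Lambda_0}$, and then observe that $\Tnmat(U\otimes I_{\mathcal W})^\dag$ lies in $I_{\Lambda_0}\otimes\mathsf A(\mathcal L\setminus\Lambda_0)$. The one minor difference is that the paper invokes eq.~\eqref{eq:ev} directly, which furnishes a $U$ implementing the \emph{same} automorphism on $\mathsf A(\Lambda_0)$ as $\Tnmat$, so the residual $Q=\Tnmat(U\otimes I_{\mathcal W})^\dag$ commutes element-wise with $\mathsf A(\Lambda_0)\otimes I$ and one reads off $Q=I_{\Lambda_0}\otimes V$ immediately; your version chooses $U$ only as an algebra intertwiner, leaving an inner automorphism $u$ to absorb afterwards --- perfectly fine, just one extra step.
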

\begin{proof}
Since $\tT(\mathsf{A}(\Lambda_0))\subseteq\mathsf{A}(\mathcal{N}_{\Lambda_0})$, this means that $\tT^N(\mathsf{A}(\Lambda_0))\subseteq\mathsf{A}(\mathcal{N}^N_{\Lambda_0})$. 
In other words, the algebra $\tT^N(\mathsf{A}(\Lambda_0))$ is an homomorphic 
representation of the full matrix algebra $\mathsf{A}(\Lambda_0)$ in 
$\mathsf{A}(\mathcal N^N_{\Lambda_0})$, which means that there exists a unitary 
$U\in\mathsf{A}(\Lambda_0\cup\mathcal N^N_{\Lambda_0})$ such that, for every 
$X\in\mathsf A(\Lambda_0)$,
\begin{align*}
&\tT^N(X)=U^\dag(X\otimes I_{\mathcal{N}^N_{\Lambda_0}\setminus\Lambda_0})U\in\mathsf A(\Lambda_0\cup\mathcal{N}_{\Lambda_0}).
\end{align*}
Consider now the wrapped evolution $\tT^N_w$. This is implemented over the wrapped lattice as a unitary matrix (see eq.~\eqref{eq:evunit}), i.e.
\begin{align}
\tT^N_w(X)=\Tnmat^\dag(X\otimes I_{\mathcal{L}\setminus\mathcal{N}_X})\Tnmat.
\end{align}
Since $\tT$ and $\tT_w$ share the same local rule, we then have
\begin{align*}
\tT_w^N(X)&=\Tnmat^\dag(X\otimes I_{\mathcal{L}\setminus\Lambda_0})\Tnmat\\
&=
U^\dag(X\otimes I_{\mathcal{N}^N_{\Lambda_0}\setminus\Lambda_0})U\otimes I_{\mathcal{L}\setminus(\Lambda_0\cup\mathcal{N}^N_{\Lambda_0})}.
\end{align*}
One can easily prove that this implies
\begin{equation*}
\Tnmat(U\otimes I_{\mathcal{L}\setminus(\Lambda_0\cup\mathcal{N}^N_{\Lambda_0})})^\dag=(I_{\Lambda_0}\otimes V),
\end{equation*}
for some unitary $V\in\mathsf A(\mathcal L\setminus \Lambda_0)$, and finally
\begin{equation}\label{eq:decompun}
\Tnmat=(I_{\Lambda_0}\otimes V)(U\otimes I_{\mathcal{L}\setminus(\Lambda_0\cup\mathcal{N}^N_{\Lambda_0})}).
\end{equation}
Since the neighbourhood of $\Lambda_0$ after $N$ steps is $\Lambda_{\mathcal{N}}$, this gives the above decomposition.
\end{proof}
We now prove that the commutation condition is independent of the lattice $\mathcal L$.
\begin{lemma} If condition in \eqref{eq:conschaingen} is satisfied for a lattice $\mathcal{L}$ where $\tT^N$ is regular, then it is true for every other such lattice.
\end{lemma}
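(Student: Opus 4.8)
The strategy is to show that the global, a priori wrapping-dependent identity $\tT^N_w(\Pi)=\Pi$ is equivalent to a single operator identity supported on a bounded region that is manifestly independent of the wrapping; regularity then guarantees that this region and the operators living on it are the same on every admissible lattice, so the identity holds on one precisely when it holds on all. First I would use that $\tT^N_w$ is an automorphism and that $\Pi=\prod_{x\in\mathcal L'}\Pi_x$ is a product of commuting projectors on disjoint tiles, to write $\tT^N_w(\Pi)=\prod_{x\in\mathcal L'}\tT^N_w(\Pi_x)$, again a product of mutually commuting projectors. Since this product projects onto $\bigcap_x\operatorname{Range}(\tT^N_w(\Pi_x))$, the equality $\tT^N_w(\Pi)=\Pi$ forces $\Pi\le\tT^N_w(\Pi_x)$ for every $x$. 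Conversely, if $\Pi\le\tT^N_w(\Pi_x)$ for all $x$ then $\Pi\le\tT^N_w(\Pi)$; as $\tT^N_w$ is an (inner) automorphism of the full matrix algebra $\mathsf A(\mathcal L)$ it preserves the trace, so $\Tr\tT^N_w(\Pi)=\Tr\Pi$, and on the finite lattice $\mathcal L$ two ordered projectors of equal trace coincide, giving $\tT^N_w(\Pi)=\Pi$. Thus the condition is equivalent to $\Pi\le\tT^N_w(\Pi_x)$ for all $x$, and by homogeneity of $\tT^N_w$ together with $\tau_{Nx}(\Pi)=\Pi$ it suffices to impose it for $x=0$, i.e.\ $\tT^N_w(\Pi_0)\,\Pi=\Pi$.

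Next I would localize this single condition. By the decomposition of $\Tnmat$ established in the previous lemma one has $\tT^N_w(\Pi_0)=U^\dag(\Pi_0\otimes I_{\mathcal N^N_{\Lambda_0}\setminus\Lambda_0})U$, supported on $\mathcal N^N_{\Lambda_0}$; since $U$ is independent of the wrapping, so is $\tT^N_w(\Pi_0)$. Let $S=\{y\in\mathcal L':\Lambda_y\cap\mathcal N^N_{\Lambda_0}\neq\emptyset\}$, let $R=\bigcup_{y\in S}\Lambda_y$, and split $\Pi=\Pi_S\,\Pi_{S^c}$ with $\Pi_S=\prod_{y\in S}\Pi_y$ and $\Pi_{S^c}=\prod_{y\notin S}\Pi_y$. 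As $\tT^N_w(\Pi_0)$ has support inside $R$ it commutes with the nonzero projector $\Pi_{S^c}$, which factors out of both sides of $\tT^N_w(\Pi_0)\Pi=\Pi$ and cancels, leaving the equivalent identity $\tT^N_w(\Pi_0)\,\Pi_S=\Pi_S$ entirely inside $\mathsf A(R)$.

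It then remains to observe that every ingredient of this identity is wrapping-independent. The set $\mathcal N^N_{\Lambda_0}$ is a fixed finite subset of $\mathbb Z^s$, and regularity of $\tT^N$ over $\mathcal L$ ensures it embeds injectively, with no spurious wrap-around overlaps; consequently $S$, $R$, the projectors $\Pi_y=\tau_{Ny}(\Pi_0)$ with $y\in S$, and the unitary $U$ defining $\tT^N_w(\Pi_0)$ are literally the same data on every lattice for which $\tT^N$ is regular. Hence $\tT^N_w(\Pi_0)\,\Pi_S=\Pi_S$ is one and the same equation in $\mathsf A(R)$ for all such lattices, so it holds for one exactly when it holds for all, which is the claim. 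I expect the delicate step to be the reduction in the first paragraph, and in particular its backward direction, where equality of two ordered projectors is deduced from equality of their traces and therefore genuinely relies on the finiteness of the wrapped lattice $\mathcal L$.
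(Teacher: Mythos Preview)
Your argument is correct and takes a genuinely different route from the paper's. The paper conjugates the identity $\Tnmat\Pi\Tnmat^\dag=\Pi$ by $(I_{\Lambda_0}\otimes V)^\dag$, using the decomposition $\Tnmat=(I_{\Lambda_0}\otimes V)(U\otimes I_{\Lambda_{\mathcal W}})$, and then compares the two incompatible tensor factorizations of the resulting operator---one across $\Lambda_0\mid\Lambda_{\mathcal M}\cup\Lambda_{\mathcal W}$, the other across $\Lambda_0\cup\Lambda_{\mathcal M}\mid\Lambda_{\mathcal W}$---to extract a purely local identity of the form $U(\Pi_0\otimes\Pi_{\mathcal M})U^\dag=\Pi_0\otimes Z^\dag\Pi_{\mathcal M}Z$; this is wrapping-independent and is then translated back to the global condition. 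You instead bypass the factorization analysis: you convert the global equality of projectors into the family of inequalities $\Pi\le\tT^N_w(\Pi_x)$, reduce to $x=0$ by homogeneity, and strip off $\Pi_{S^c}$ to land on the local equation $\tT^N_w(\Pi_0)\,\Pi_S=\Pi_S$, with the backward direction closed by a trace-equality argument on the finite wrapping. Your approach is more economical for the lemma as stated; the paper's approach has the advantage of producing, as a by-product, the auxiliary unitary $Z$ and the refined identity recorded in the subsequent Corollary, which is then reused to show that the local rule of $(\tT^N)_r$ is lattice-independent.
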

\begin{proof}
Let us divide the chain $\Pi_\mathcal{L}$ over $\mathcal{L}$ in
\begin{equation}
\begin{split}
&\Pi=\Pi_{0}\otimes\Pi_{{\mathcal{M}}}\otimes \Pi_{\mathcal W},\\
&{\mathcal M}\coloneqq\mathcal{N}\setminus\{0\},\\
&\mathcal W\coloneqq\mathcal{L}\setminus(\mathcal{N}\cup \{0\}),
\end{split}
\end{equation} 
with the convention:
\begin{align*}
\Pi_{{R}}=\left(\bigotimes_{x\in R}\Pi_{{x}}\right).
\end{align*}
The commutation relation~\eqref{eq:commuwrap} can then be expressed as
\begin{equation}
\begin{split}
&\Tnmat(\Pi_{0}\otimes\Pi_{\mathcal M}\otimes \Pi_{\mathcal W})\Tnmat^\dag=\Pi_{0}\otimes\Pi_{\mathcal M}\otimes \Pi_{\mathcal W}.
\end{split}
\label{eq:Tpi}
\end{equation}

We can represent eq.~\eqref{eq:Tpi} diagrammatically, with the decomposition~\eqref{eq:decompun} of $\Tnmat$, as follows
\begin{equation}
\tikzfig{TnPi} = \tikzfig{PiLact}\ .
\end{equation}
Exploiting the decompositon in eq.\eqref{eq:decompun} and conjugating both members of eq.~\eqref{eq:Tpi} with $(I_{\Lambda_{0}}\otimes V)^\dag$ we get to:
\begin{equation}
\begin{split}
&(U \otimes I_{\Lambda_{\mathcal W}})(\Pi_{0}\otimes\Pi_{\mathcal M}\otimes \Pi_{\mathcal{W}})(U\otimes I_{\Lambda_{\mathcal W}})^\dag=\\
&=(I_{\Lambda_{0}}\otimes V)^\dag(\Pi_{0}\otimes\Pi_{\mathcal M}\otimes \Pi_{\mathcal{W}})(I_{\Lambda_{0}}\otimes V).
\end{split}
\end{equation}
Diagrammatically:
\begin{equation}\label{eq:diagfact}
\tikzfig{TnPiLeft}=\tikzfig{TnPiRight}\ .
\end{equation}
The two members of eq.~\eqref{eq:diagfact} are factorised in two different ways. This implies that both factorisations must hold, and then
the transformation $V$ on the right produces a factorised projection map, i.e.
\begin{equation}
\begin{split}
&(I_0\otimes V)^\dag(\Pi_{0}\otimes\Pi_{\mathcal M}\otimes \Pi_{{\mathcal W}})(I_0\otimes V )=\\
&=\Pi_{0}\otimes Z^\dag\Pi_{{\mathcal{M}}}Z\otimes \Pi_{{\mathcal W}},
\end{split}
\end{equation}
for some unitary $Z\in\mathsf A(\Lambda_\mathcal M)$. In particular, we have the following diagrammatic identity
\begin{equation}
\tikzfig{TnPish}=\tikzfig{TnPiZen}\ .
\end{equation}
The latter is independent of the lattice $\mathcal L$, and proves that the image of 
$\Pi$ under $\Tnmat \rvert_\Pi$ is a projection of the form 
$\tilde\Pi_{{\mathcal L'\setminus\{0\}}}\otimes\Pi_{0}$. Finally, by translation 
invariance we can conclude that for every $\mathcal L$ such that $\tT^N$ is regular over 
$\mathcal L$ one has
\begin{align*}
\Tnmat^\dag \Pi \Tnmat=\Pi,
\end{align*}
namely condition~\eqref{eq:conschaingen} holds.
\end{proof}
This immediately implies the following corollary
\begin{Corollary}
If condition in \eqref{eq:conschaingen} is satisfied for a lattice $\mathcal{L}$ where $\tT^N$ is regular, then
\begin{align}
\begin{split}
&(U\otimes I_\mathcal W)^\dag(\Pi_{0}\otimes Z^\dag\Pi_{{\mathcal{M}}}Z\otimes \Pi_{{\mathcal W}})(U\otimes I_\mathcal W)\\
&=\Pi_{0}\otimes \Pi_{{\mathcal{M}}}\otimes \Pi_{{\mathcal W}}.
\end{split}
\end{align}
\end{Corollary}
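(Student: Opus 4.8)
The plan is to recognise this corollary as nothing more than the inverse of a relation already established \emph{within} the proof of the preceding lemma, read in the opposite direction and made explicit through the unitarity of $U$. Concretely, in that proof one conjugated the commutation relation~\eqref{eq:Tpi} by $(I_{\Lambda_0}\otimes V)^\dag$ and, using the decomposition~\eqref{eq:decompun} $\Tnmat=(I_{\Lambda_0}\otimes V)(U\otimes I_\mathcal W)$, arrived at
\begin{equation*}
(U\otimes I_\mathcal W)(\Pi_0\otimes\Pi_{\mathcal M}\otimes\Pi_{\mathcal W})(U\otimes I_\mathcal W)^\dag=(I_{\Lambda_0}\otimes V)^\dag(\Pi_0\otimes\Pi_{\mathcal M}\otimes\Pi_{\mathcal W})(I_{\Lambda_0}\otimes V).
\end{equation*}
The factorisation argument of that same proof then identified the right-hand side with the factorised projection $\Pi_0\otimes Z^\dag\Pi_{\mathcal M}Z\otimes\Pi_{\mathcal W}$, for a suitable unitary $Z\in\mathsf A(\Lambda_{\mathcal M})$.

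First I would simply combine these two facts into the single identity
\begin{equation*}
(U\otimes I_\mathcal W)(\Pi_0\otimes\Pi_{\mathcal M}\otimes\Pi_{\mathcal W})(U\otimes I_\mathcal W)^\dag=\Pi_0\otimes Z^\dag\Pi_{\mathcal M}Z\otimes\Pi_{\mathcal W}.
\end{equation*}
Since $U$ is unitary, the map $X\mapsto(U\otimes I_\mathcal W)X(U\otimes I_\mathcal W)^\dag$ is invertible, with inverse $X\mapsto(U\otimes I_\mathcal W)^\dag X(U\otimes I_\mathcal W)$. I would therefore conjugate both sides of the last display by $(U\otimes I_\mathcal W)^\dag$ on the left and $(U\otimes I_\mathcal W)$ on the right: on the left-hand side the two conjugations cancel, leaving $\Pi_0\otimes\Pi_{\mathcal M}\otimes\Pi_{\mathcal W}$, while the right-hand side becomes precisely $(U\otimes I_\mathcal W)^\dag(\Pi_0\otimes Z^\dag\Pi_{\mathcal M}Z\otimes\Pi_{\mathcal W})(U\otimes I_\mathcal W)$. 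Equating the two yields the claimed identity.

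There is essentially no obstacle here: the content of the corollary is entirely carried by the preceding lemma, and what remains is a purely algebraic rearrangement exploiting the invertibility of conjugation by the unitary $U$. The only point demanding mild care is the bookkeeping of conjugation directions and confirming that, because $U$ acts trivially on the $\mathcal W$ factor, the twisted projector $Z^\dag\Pi_{\mathcal M}Z$ and the projectors on $\Lambda_0$ and $\mathcal W$ end up in the correct tensor slots. Hence I expect the whole argument to be a two-line consequence of the lemma rather than an independent proof.
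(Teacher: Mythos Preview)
Your proposal is correct and matches the paper's own treatment: the paper states the corollary without proof, introducing it with ``This immediately implies the following corollary,'' and your argument is precisely the two-line rearrangement that makes this immediacy explicit. You have correctly identified both intermediate identities from the preceding lemma's proof and the unitary inversion step needed to obtain the stated form.
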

The above result allows us to show that condition~\eqref{eq:commuwrap} is necessary and sufficient for renormalisability through $(\Pi_0,\Lambda_0)$.
\begin{lemma}
Let $\tT^N$ be regular on the lattices $\mathcal L'$ and $\tilde{\mathcal L}'$, and let condition~\eqref{eq:commuwrap} be satisfied for both. Then the local rule of $(\tT^N)_r$ over 
$\mathcal{L'}$ and $\mathcal{\tilde{L}'}$ is the same.
\end{lemma}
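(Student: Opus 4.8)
The plan is to show that the local rule of the induced map $(\tT^N)_r$ is the same on two lattices $\mathcal L'$ and $\tilde{\mathcal L}'$, given that condition~\eqref{eq:commuwrap} holds on both. The key structural input is the decomposition~\eqref{eq:decompun} of the wrapping unitary, $\Tnmat=(I_{\Lambda_0}\otimes V)(U\otimes I_{\Lambda_{\mathcal W}})$, where $U$ is \emph{independent of the wrapping} and acts only on $\Lambda_0\cup\mathcal N^N_{\Lambda_0}$, while $V$ carries all the lattice-dependent information but acts away from $\Lambda_0$. The local rule is determined by how $(\tT^N)_r$ acts on operators localised at the single coarse-grained cell $0\in\mathcal L'$, i.e.\ on $\tV(\mathsf A_0)$, so the strategy is to argue that the dependence on $V$ drops out of this restricted action.

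First I would compute the induced-map matrix $\Tredmat=\J\,\Tnmat\,\J^\dag$ using the decomposition. Since $\J=\bigotimes_{x\in\mathcal L'}\Jop$ factorises over tiles and $\Jop^\dag\Jop=\Pi_x$, I would insert the decomposition and use the Corollary just proved: the combined effect of conjugating the tile projections by $V$ is to replace $\Pi_{\mathcal M}$ by $Z^\dag\Pi_{\mathcal M}Z$ for some unitary $Z\in\mathsf A(\Lambda_{\mathcal M})$ localised in the neighbourhood, and crucially this is absorbed into a relabelling of the basis $\{\ket{\psi_k}\}$ of the projected subspace on the neighbouring tiles. The point is that $V$ only reshuffles the basis vectors \emph{within} the support of the neighbouring projections $\Pi_x$, so when one forms $\J\Tnmat\J^\dag$ and reads off the action on a single cell at $0$, the contribution of $V$ enters only through $U$ (which is wrapping-independent) together with these inner isometric identifications, which are themselves fixed once $\Pi_0$ and the basis are fixed.

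The cleanest way to carry this out is to use the translation invariance already established (the Lemma showing $(\tT^N)_r$ commutes with all shifts $\tau_x$) to reduce the claim to the action on the single tile at the origin and its image. I would then show that the matrix elements $\bra{k}(\tT^N)_r\!\big(\Jop^\dag(\cdot)\Jop\big)\ket{k'}$ defining the transition rule $\tT_*$ of the coarse-grained automaton are expressible purely in terms of $U$ and the basis data $\{\ket{\psi_k}\}$, with no residual appearance of $V$ — exactly the statement that wrapped QCAs sharing the decomposition with the same $U$ share the same $\tT_*$, in the sense of Definition~\eqref{def:wrapped} and the Wrapping Lemma. Since both lattices $\mathcal L'$ and $\tilde{\mathcal L}'$ arise from wrapping the \emph{same} infinite QCA $\tT$ with the \emph{same} coarse-graining $(\Pi_0,\Lambda_0)$, the wrapping-independent datum $U$ and the chosen basis are identical, so the two local rules coincide.

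The main obstacle I anticipate is making precise that the wrapping-dependent unitary $V$ genuinely does not affect the \emph{local rule}, as opposed to the global unitary $\Tredmat$. One must be careful that $V$ acts on $\mathcal L\setminus\Lambda_0$ and hence overlaps the neighbouring tiles whose projections enter the support-algebra definition of the rule; the Corollary guarantees that conjugation by $(I_0\otimes V)$ only transforms $\Pi_{\mathcal M}\mapsto Z^\dag\Pi_{\mathcal M}Z$ rather than mixing across the boundary of $\Lambda_0$, so I would lean heavily on that factorised-projection statement to confine the effect of $V$ to an internal unitary freedom $Z$ on the neighbourhood that cancels when extracting the transition rule $\tT_*$. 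Establishing that cancellation rigorously — rather than just the invariance of the projection $\tT^N_w(\Pi)=\Pi$ — is the delicate step.
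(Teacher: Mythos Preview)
Your proposal follows essentially the same route as the paper: use the decomposition $\Tnmat=(I_{\Lambda_0}\otimes V)(U\otimes I_{\Lambda_{\mathcal W}})$ together with the Corollary about $Z$ to strip off the wrapping-dependent piece $V$ when acting on a single-cell operator. The paper carries this out concretely by writing a local operator as $A_x=\Pi A_{\Lambda_x}\Pi$ and computing
\[
\Tnmat^\dag A_x\,\Tnmat
=U^\dag\bigl(\Pi_xA_{\Lambda_x}\Pi_x\otimes Z^\dag\Pi_{\mathcal N_{\Lambda_x}\setminus\Lambda_x}Z\bigr)U\otimes\Pi_{\mathcal L\setminus(\Lambda_x\cup\mathcal N_{\Lambda_x})},
\]
then observing that everything outside the padding factor is wrapping-independent.

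The one place your outline is slightly off is the ``delicate step'' you flag at the end. You anticipate having to show that the internal unitary $Z$ \emph{cancels} when extracting the transition rule. It does not need to cancel: the diagrammatic identity established just before the Corollary shows that $Z$ is determined by $U$ alone (via $U(\Pi_0\otimes\Pi_{\mathcal M})U^\dag=\Pi_0\otimes Z^\dag\Pi_{\mathcal M}Z$), and hence $Z$ is itself wrapping-independent. So $Z$ simply persists in the formula above, but identically on every lattice; there is nothing to cancel. Once you notice this, the computation is a direct substitution rather than a delicate argument, and your worry about $V$ ``mixing across the boundary of $\Lambda_0$'' is already handled by the factorisation in the Corollary.
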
 
\begin{proof}
It is sufficient to notice that a local operator $A_x$ in $\Aplocc$ is obtained as $\Pi A_{\Lambda_x}\Pi$ for some $A_{\Lambda_x}$ on the macro-cell 
$\Lambda_x$. As a consequence, one has
\begin{align*}
&\Tnmat^\dag A_x \Tnmat=\Tnmat^\dag \Pi A_{\Lambda_x}\Pi\Tnmat\\
&=U^\dag(\Pi_xA_{\Lambda_x}\Pi_x\otimes Z^\dag\Pi_{\mathcal N_{\Lambda_x}\setminus\Lambda_x} Z)U \otimes\Pi_{\mathcal L\setminus(\Lambda_x\cup\mathcal N_{\Lambda_x})},
\end{align*}
where $U$ is independent of $\mathcal L'$. Finally, reminding eq.~\eqref{eq:tred} and~\eqref{eq:tredmat}, one has
\begin{align*}
(\tT^N)_r\rvert_{\tilde{\mathcal{L'}}}(A_x)=(\tT^N)_r\rvert_{\mathcal{L}'}(A_x),
\end{align*}
for every $\tilde{\mathcal L'}$, where we indicated with $(\tT^N)_r\rvert_{\mathcal{L}}$ the evolution $(\tT^N)_r$ restricted to the particular lattice $\mathcal{L}$.
\end{proof}
We have then proved the main result of this section
\begin{proposition}
  \label{prp:thecoarsegrainingisfinite}
  Let $\tT$ be a QCAs on $\mathbb{Z}^s$,
$N$ be a natural number
  and let $\Pi_0 \in \mathsf{A}(\Lambda_0)$ be a
  projection on the local algebra of the region
  $\Lambda_0 \coloneqq [0,N-1]^s$.
  The following conditions are equivalent:
  \begin{enumerate}

  \item there exists a QCA $\tS$
    on $\mathbb{Z}^s$ which is a size-$N$ renormalisation of
    $\tT$ 
through $(\Pi_0,\Lambda_0)$.
\item there exist a lattice
  $\mathcal{L} = \mathbb{Z}^s_{Nf}$ such that the
  regular neighborhood condition is satisfied for
  $\tT^N$ and $[\Tnmat,\Pi]=0$, where $\Tnmat$ is
  the unitary matrix defining
  the QCA on the lattice $\mathcal{L}$.

\end{enumerate}
\end{proposition}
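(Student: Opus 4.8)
The plan is to prove the two implications separately, with $(2)\Rightarrow(1)$ carrying essentially all the content; the forward implication $(1)\Rightarrow(2)$ is an immediate bookkeeping consequence of the lemmas already established. For $(1)\Rightarrow(2)$, I would assume that $\tT$ is renormalisable to some QCA $\tS$ and simply fix any lattice $\mathcal{L}=\mathbb{Z}^s_{Nf}$ with $f$ large enough that $\tT^N$ is regular (such $f$ exists, as discussed after Definition~\ref{def:regnei}). On that lattice eq.~\eqref{eq:cg} holds by hypothesis, so Lemma~\ref{lem:CgT} gives that $(\tT^N)_r=\tS_w$ is an automorphism, and the forward half of Lemma~\ref{lem:comm} then yields $[\Tnmat,\Pi]=0$. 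This is precisely the existence statement in condition (2).

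For the substantive direction $(2)\Rightarrow(1)$, I would start from a single regular lattice $\mathcal{L}$ on which $[\Tnmat,\Pi]=0$ and first spread this condition to all admissible lattices. Using the reformulation $[\Tnmat,\Pi]=0\iff\tT^N_w(\Pi)=\Pi$ in eq.~\eqref{eq:conschaingen}, together with the lattice-independence lemma (which, via the decomposition $\Tnmat=(I_{\Lambda_0}\otimes V)(U\otimes I_{\Lambda_{\mathcal{W}}})$ with $U$ independent of the wrapping, shows that the commutation on one regular lattice forces it on every other), I obtain $[\Tnmat,\Pi]=0$ for every $\mathcal{L}=\mathbb{Z}^s_{Nf}$ where $\tT^N$ is regular. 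The converse half of Lemma~\ref{lem:comm} then produces, on each such lattice, an automorphism $\tS_\mathcal{L}:=(\tT^N)_r$ satisfying $\tV\circ\tS_\mathcal{L}=\tT^N_w\circ\tV$; its homogeneity is guaranteed by the lemma establishing $[(\tT^N)_r,\tau_x]=0$ for all shifts of $\mathcal{L}'$.

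The main obstacle is that these $\tS_\mathcal{L}$ form a \emph{family} of automorphisms indexed by the wrapping, whereas condition (1) demands that they descend from one and the same QCA $\tS$ on $\mathbb{Z}^s$. Here I would invoke the lemma asserting that the local rule of $(\tT^N)_r$ is independent of the regular lattice used: since all the $\tS_\mathcal{L}$ share a single transition rule, the Wrapping Lemma (Lemma~\ref{lem:wrap}) furnishes a unique infinite QCA $\tS$ whose wrapped version on each admissible $\mathcal{L}$ is exactly $\tS_\mathcal{L}$. Consequently $\tV\circ\tS_w=\tT^N_w\circ\tV$ holds on every lattice where $\tT^N$ is regular, which is the defining condition eq.~\eqref{eq:cg} for $\tS$ to be a size-$N$ renormalisation of $\tT$ through $(\Pi_0,\Lambda_0)$. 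This establishes (1) and closes the equivalence.
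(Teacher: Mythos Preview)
Your proof is correct and follows essentially the same route as the paper: the proposition is obtained precisely by assembling Lemmas~\ref{lem:CgT} and~\ref{lem:comm} for the direction $(1)\Rightarrow(2)$, and then the lattice-independence lemma (spreading the commutation to all regular wrappings), the shift-invariance lemma, the local-rule-coincidence lemma, and the Wrapping Lemma for $(2)\Rightarrow(1)$. The paper does not spell out the assembly explicitly---it simply declares the proposition proved after stating those lemmas---so your write-up is, if anything, a more explicit rendering of the intended argument.
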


This proves that the renormalisability condition of
Eq.\eqref{eq:cg}, reduces to a finite dimensional
equation in terms of the unitary matrix defining
the QCA on a wrapped lattice. In particular, the renormalisability condition can be checked by considering the
smallest wrapped lattice such that the regular neighbourhood condition is satisfied.
\begin{remark}
 The chain of projections $\Pi$ in eq.\eqref{eq:commuwrap} identifies the degrees of freedom of the observable algebra of the lattice that we are preserving in the coarse-graining---i.e.~its support---while the discarded ones correspond to the kernel of $\Pi$. However, the choice of $\Pi$ does not fix univocally the operators $\J$.Indeed, the operators $\J$ and $\J'$ give rise to the same projection if :
\begin{align*}
\J'=\bigotimes_{x\in\mathcal{L'}} U_x \J \hspace{10pt} U_x \in SU(d).
\end{align*}
To such family of operators $\J$ we can naturally associate a family of $\tJ,\tV$ as in eq.\eqref{eq:Jiso3}.
Each of those choices give rise to renormalised automata that differ by a local change of basis, namely, considering $\J$ and $\J'$ as above the respective unitaries that implement the induced maps are:
\begin{align*}
&\Tredmat^{\J} = \J  \Tnmat \J^\dag \\
&\Tredmat^{\J'} = 
\left (\bigotimes_{x\in\mathcal{L'}} U_x \right) \Tredmat \left (\bigotimes_{y\in\mathcal{L'}} U_y \right )^\dag.
\end{align*}
In other words, the choice of $\Pi$ fixes the preserved degrees of freedom in the coarse-graining, while the choice of the basis in which we represent the renormalised evolution is still arbitrary.
\end{remark}

\section{Renormalisation of
  one-dimensional quantum circuits}\label{sec:FDQC}
In this section we specify our analysis to QCAs on
a one dimensional lattice that can be realised as
FDQCs. In
particular, we will consider size-$2$ coarse-graining, 
i.e. two cells and two steps are coarse-grained 
into one cell and one step (the
projections $\Pi_{x}$ act on two cells).  Up
to grouping neighbouring cells together, we can
consider the case in which the QCA $\tT$ is
nearest-neighbour. From the index theory of QCAs
~\cite{Index} we have that $\tT$ is realisable as
a quantum circuit if and only if $\ind(\tT) =
1$. Then $\tT$ can be implemented as a Margolous
partitioning scheme (see Proposition
\eqref{prp:indextheorymainprop}) as follows:
\begin{equation}
\tT =\tikzfig{MargInd1}\ ,
\label{eq:marg}
\end{equation}

Let us now consider a size-$2$ renormalisation of
a QCA that can be implemented as in Equation
\eqref{eq:marg}.  Thanks to Proposition
\eqref{prp:thecoarsegrainingisfinite} we can
consider the case in which $\tT$ is defined on a
finite lattice. We start with the following lemma.

\begin{lemma}
  Let $\tT$ be QCA that admits a realisation as a
  finite depth quantum circuit as in  Equation
\eqref{eq:marg}. Then $\tT$  admits a size-$2$ renormalisation if and only if
\begin{equation}
\resizebox{0.4\hsize}{!}{
\tikzfig{SepevConj}}=\resizebox{0.4\hsize}{!}{\tikzfig{RLChain}}
\label{eq:RenIndex}
\end{equation}
where we define
\begin{align}
  \label{eq:4}
  P_{\Lambda_x} = \tikzfig{Piseparated}\coloneqq\tikzfig{Pisep}.
\end{align}
\end{lemma}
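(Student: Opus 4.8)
The plan is to start from the finite-dimensional criterion of Proposition~\eqref{prp:thecoarsegrainingisfinite}, specialised to $N=2$: $\tT$ admits a size-$2$ renormalisation through $(\Pi_0,\Lambda_0)$ if and only if $[\Tnmat,\Pi]=0$, i.e.~$\tT^2_w(\Pi)=\Pi$ as in Eq.~\eqref{eq:conschaingen}, and this need only be checked on the smallest wrapped lattice on which $\tT^2$ is regular. The whole task thus reduces to rewriting the single equation $\tT^2_w(\Pi)=\Pi$ into the diagrammatic form~\eqref{eq:RenIndex}. First I would substitute the Margolus decomposition $\tT=\mathcal W\circ\mathcal V$ of Eq.~\eqref{eq:marg}---available because $\ind(\tT)=1$ by Proposition~\eqref{prp:indextheorymainprop}---to get $\tT^2=\mathcal W\circ\mathcal V\circ\mathcal W\circ\mathcal V$, a depth-four brick-wall circuit in which the $\mathcal V$-gates are supported on the tiles $\Lambda_x=\{2x,2x+1\}$ while the $\mathcal W$-gates straddle neighbouring tiles.

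The structural point is that the innermost layer $\mathcal V$ has exactly the tile support of the projections. Since $\Pi=\bigotimes_x\Pi_x$ and each $V_{2x}$ acts only on $\Lambda_x$, the conjugation factorises tile by tile, $\mathcal V(\Pi)=\bigotimes_x P_{\Lambda_x}$, where $P_{\Lambda_x}=V_{2x}^\dag\Pi_x V_{2x}$ is precisely the separated projection of Eq.~\eqref{eq:4}. Peeling this layer off turns the invariance condition into $(\mathcal W\circ\mathcal V\circ\mathcal W)\!\left(\bigotimes_x P_{\Lambda_x}\right)=\Pi$, which is exactly the diagrammatic identity~\eqref{eq:RenIndex}: the left-hand side is the separated projections $P_{\Lambda_x}$ conjugated by the remaining cross-tile evolution $\mathcal W\circ\mathcal V\circ\mathcal W$, and the right-hand side is the tile-factorised projection $\bigotimes_x\Pi_x$, drawn through the left/right support-algebra blocks $\mathsf L_{2x},\mathsf R_{2x}$ introduced before Proposition~\eqref{prp:indextheorymainprop}. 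I would produce the two diagrams by explicitly tracking how the $\mathcal W$-gates recombine the right part of one separated tile with the left part of the next.

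Because $\mathcal V(\Pi)=\bigotimes_x P_{\Lambda_x}$ is an identity rather than an extra hypothesis, every step above is an equivalence, so both directions of the ``if and only if'' come out together: Eq.~\eqref{eq:RenIndex} holds precisely when $\tT^2_w(\Pi)=\Pi$, hence precisely when $\tT$ is renormalisable by Proposition~\eqref{prp:thecoarsegrainingisfinite}. The hard part is therefore not a logical gap but the faithful diagrammatic translation: one must verify that conjugating the tile-local separated projections by the remaining brick-wall layers produces exactly the chain drawn on the right-hand side of~\eqref{eq:RenIndex}, using the separator interpretation of the Margolus scheme (the $\mathcal V$-layer sends each tile into the preimage of its left and right support algebras, which the $\mathcal W$-layer then recombines across tile boundaries) together with the uniqueness-of-factorisation argument already exploited in the lattice-independence lemma. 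Once this identification is in place, the minimal-lattice reduction and translation invariance established for Proposition~\eqref{prp:thecoarsegrainingisfinite} guarantee that checking~\eqref{eq:RenIndex} on a single tile suffices, completing the proof.
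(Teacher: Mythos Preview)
Your strategy is correct and close to the paper's: both start from Proposition~\ref{prp:thecoarsegrainingisfinite} specialised to $N=2$, write out the Margolus brick-wall for $\tT^2$, and absorb the tile-aligned layer $M_1$ into $\Pi$ to produce the separated projections $P_{\Lambda_x}=M_1^{\dag}\Pi_x M_1$. The logical equivalence you describe is sound.

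Where you diverge from the paper is in one manoeuvre that you omit. Before peeling off $M_1$, the paper \emph{shifts the second evolution step by one cell} using translation invariance. After this shift the four layers $M_1,M_2,M_1,M_2$ line up as $M_1[\text{tiles}]$--$M_2[\text{straddle}]$--$M_1[\text{straddle}]$--$M_2[\text{tiles}]$, so that the two middle layers combine into $G=M_2M_1$ on the straddling pairs and the two outer layers combine into $G$ on the tiles. Absorbing the innermost $M_1[\text{tiles}]$ into $\Pi$ gives eq.~\eqref{eq:evmod}, and then conjugating \emph{both} sides by another $M_1[\text{tiles}]$ turns the right-hand side $\bigotimes_x\Pi_x$ into the $P$-chain and the left-hand side into two full $G$-layers acting on the $P$-chain---this is eq.~\eqref{eq:RenIndex}.

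Your condition $(\mathcal W\circ\mathcal V\circ\mathcal W)(\bigotimes_x P_{\Lambda_x})=\bigotimes_x\Pi_x$ is therefore equivalent to, but not literally, eq.~\eqref{eq:RenIndex}: in the paper's statement the right-hand side (``RLChain'') is the $P$-chain, not the $\Pi$-chain, and the left-hand side (``SepevConj'') is two layers of the composite gate $G$, not three alternating $\mathcal W,\mathcal V,\mathcal W$ layers. So your assertion that your equation ``is exactly'' eq.~\eqref{eq:RenIndex} is slightly off, and without the shift the operator $G$ that drives the rest of Section~\ref{sec:FDQC} does not appear naturally. The fix is short---either perform the shift first, or apply one more $M_1$ conjugation to both sides of your equation---but it should be made explicit.
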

\begin{proof}
We can write
\begin{equation}
\tT^{2}(\bigotimes_{x\in\mathcal{L'}}\Pi_{x})=\resizebox{0.5\hsize}{!}{\tikzfig{T2}}
\end{equation}
Since each step is translationally invariant we can shift the second evolution step in order to have
\begin{equation}
\tT^{2}(\bigotimes_{x\in\mathcal{L'}}\Pi_{x})=\resizebox{0.5\hsize}{!}{\tikzfig{T2shift}}.
\end{equation}
By introducing the projection $P_{\Lambda_x}$, we have
\begin{equation}
\begin{aligned}
\resizebox{0.4\hsize}{!}{\tikzfig{Sepev}}
\end{aligned}
=
\begin{aligned}
\resizebox{0.4\hsize}{!}{\tikzfig{Pichain}
}
\end{aligned}
\label{eq:evmod}
\end{equation}
and by applying $\tikzfig{M1}$ on both sides we obtain the thesis. 
\end{proof}

It is now convenient to define the operator
\begin{equation}
  \label{eq:definitionofG}
G\coloneqq \tikzfig{Ggen}\eqqcolon\tikzfig{Gfig},
\end{equation}
so that we have
\begin{equation}
\begin{aligned}
\resizebox{0.38\hsize}{!}{
\tikzfig{SepevConj}}
\end{aligned}=
\begin{aligned}
\resizebox{0.38\hsize}{!}{\tikzfig{Gev}}
\end{aligned}
.
\label{eq:Gev}
\end{equation}
It is also convenient to consider the
Schmidt decomposition of the projection $P_{\Lambda_x}$:
\begin{equation}
P_{\Lambda_x}= \tikzfig{Piseparated}
=\sum_{\mu_x}\lambda_{\mu_x}\otimes\rho_{\mu_x}.
\label{eq:SchmidtLR}
\end{equation}
In terms of $G$ and of the operators $\{\lambda_{\mu_x}, \rho_{\mu_x}\}$ we have the 
following necessary and sufficient condition for $\tT$ to admit a size-$2$ renormalisation.
\begin{proposition}
  A one dimensional QCA $\tT$ that has a
  Margolus partitioning scheme as in Equation \eqref{eq:marg} admits a size-$2$ renormalisation iff for every Schmidt decomposition $\sum_{\mu}{\lambda}_\mu\otimes{\rho}_\mu$ of $P$, there exists a Schmidt Decomposition $\sum_{\mu}\tilde{\lambda}_\mu\otimes\tilde{\rho}_\mu$ of $GPG^\dag$ 
such that:
\begin{align}
 G^\dag (\rho_{\mu}\otimes\lambda_{\nu}) G=\tilde\rho_{\mu}\otimes\tilde\lambda_{\nu},\quad \forall \mu,\nu,
\label{eq:cond}
\end{align}
\end{proposition}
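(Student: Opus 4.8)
The plan is to convert the renormalisability criterion into an operator identity between two interlocking ``brick-wall'' chains and then to isolate a single link by a linear-independence argument. First I would start from the preceding Lemma, which already reduces renormalisability to the diagrammatic identity \eqref{eq:RenIndex}, and rewrite its left-hand side through the gate $G$ of \eqref{eq:definitionofG} by means of \eqref{eq:Gev}. Substituting the Schmidt decomposition \eqref{eq:SchmidtLR}, $P_{\Lambda_x}=\sum_{\mu}\lambda_{\mu}\otimes\rho_{\mu}$, into the tile-projection chain and noting that each copy of $G$ straddles a tile boundary, the left-hand side becomes $\sum_{\{\mu_x\}}\bigotimes_x G^\dagger(\rho_{\mu_x}\otimes\lambda_{\mu_{x+1}})G$, in which the label $\mu_x$ is the Schmidt index of the tile $\Lambda_x$ and is shared by the two neighbouring links. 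The right-hand side (the chain called RLChain) is the factorised chain $\sum_{\{\mu_x\}}\bigotimes_x \tilde\rho_{\mu_x}\otimes\tilde\lambda_{\mu_{x+1}}$; regrouping the cells across the shifted cut identifies each factor on the new pair of cells with $GPG^\dagger$, whose Schmidt decomposition $\sum_\mu\tilde\lambda_\mu\otimes\tilde\rho_\mu$ supplies the operators $\tilde\lambda_\mu,\tilde\rho_\mu$. Renormalisability is thus equivalent to
\begin{equation*}
\sum_{\{\mu_x\}}\bigotimes_x G^\dagger(\rho_{\mu_x}\otimes\lambda_{\mu_{x+1}})G=\sum_{\{\mu_x\}}\bigotimes_x \tilde\rho_{\mu_x}\otimes\tilde\lambda_{\mu_{x+1}}.
\end{equation*}

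The sufficiency direction is then immediate: if \eqref{eq:cond} holds link by link, substituting $G^\dagger(\rho_\mu\otimes\lambda_\nu)G=\tilde\rho_\mu\otimes\tilde\lambda_\nu$ into the left chain reproduces the right chain term by term, and the regrouping just described exhibits the coarse-grained evolution as a genuine projector chain, hence a QCA, by Lemma~\eqref{lem:comm} and Proposition~\eqref{prp:thecoarsegrainingisfinite}.

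The hard part is necessity, where I must recover the per-link identity \eqref{eq:cond} from the global chain identity. Here I would exploit that the Schmidt components $\{\lambda_\mu\}$ and $\{\rho_\mu\}$ are, by definition of a Schmidt decomposition, linearly independent; hence $\{\rho_\mu\otimes\lambda_\nu\}_{\mu,\nu}$ is a linearly independent family, and since $G$ is unitary so is $\{G^\dagger(\rho_\mu\otimes\lambda_\nu)G\}_{\mu,\nu}$, and likewise $\{\tilde\rho_\mu\}$ and $\{\tilde\lambda_\nu\}$ are independent. The two sides of the chain identity are factorised with respect to two interlocking tilings --- the $G$-windows on one side and the cell-shifted coarse tiles on the other --- so I would invoke the same ``two factorisations must both hold'' mechanism already used around \eqref{eq:diagfact}: by Proposition~\eqref{prp:thecoarsegrainingisfinite} the identity may be checked on the smallest regular lattice, a finite ring, so that pairing every window except one with suitable dual operators (which exist precisely because of the linear independence above) isolates a single link, leaving $G^\dagger(\rho_\mu\otimes\lambda_\nu)G$ on the left and $\tilde\rho_\mu\otimes\tilde\lambda_\nu$ on the right. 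The delicate points I expect to fight with are arranging the dual-operator probes so that they pin down one value of each shared Schmidt index rather than an entangled combination of them, and checking that the operators extracted this way assemble into a single Schmidt decomposition of $GPG^\dagger$, which is what fixes the existential quantifier in the statement; the residual unitary freedom of the Schmidt decomposition of $P$ then accounts for the universal quantifier.
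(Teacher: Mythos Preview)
Your overall plan---reduce \eqref{eq:RenIndex} to a chain identity via \eqref{eq:Gev} and the Schmidt decomposition \eqref{eq:SchmidtLR}, then peel off a single link by linear independence---is exactly the paper's strategy, and your sufficiency direction is fine. There is, however, a real gap in the necessity direction, and a smaller confusion in how you set up the chain identity.

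First the minor point: the right-hand side you write, $\sum_{\{\mu_x\}}\bigotimes_x\tilde\rho_{\mu_x}\otimes\tilde\lambda_{\mu_{x+1}}$ with $\sum_\mu\tilde\lambda_\mu\otimes\tilde\rho_\mu$ a Schmidt decomposition of $GPG^\dagger$, is \emph{not} RLChain. RLChain is the bare $P$-chain. What you have written is the $GPG^\dagger$-chain, which appears only after one full layer of $G$'s has been moved from the left to the right of \eqref{eq:RenIndex}; this is the step the paper calls ``inverting the second layer of $G$'s'', yielding \eqref{eq:Gconv}. You use this form without stating the move, which makes your ``regrouping identifies each factor with $GPG^\dagger$'' read backwards.

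The substantive gap is the isolation step. Your dual-probe idea does not work as stated: a functional chosen to pick out $G^\dagger(\rho_\mu\otimes\lambda_\nu)G$ on the left has no reason to pick out $\tilde\rho_\mu\otimes\tilde\lambda_\nu$ on the right, since these live in a different basis; and because each Schmidt index is shared by two neighbouring links, fixing one index by a probe forces a cascade of constraints around the ring on \emph{both} sides, so you never get a clean single term on the right. The paper avoids this entirely by a second manipulation you are missing: after reaching \eqref{eq:Gconv}, invert $G$ on \emph{every other} link (``half of the $G$'s'') on both sides. This exposes, on both sides, bare single-cell factors $\lambda_{\mu}$ and $\rho_{\mu}$ on alternating original cells, sandwiching the remaining $G^\dagger(\cdot)G$ blocks. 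The two-factorisation argument (the mechanism of \eqref{eq:diagfact}) then localises the identity to four cells,
\[
\sum_{\mu,\nu}\lambda_\mu\otimes G^\dagger(\rho_\mu\otimes\lambda_\nu)G\otimes\rho_\nu
=\sum_{\mu,\nu}\lambda_\mu\otimes\tilde\rho_\mu\otimes\tilde\lambda_\nu\otimes\rho_\nu,
\]
and linear independence of the \emph{outer} $\lambda_\mu$ and $\rho_\nu$---which are the same on both sides---strips them off to give \eqref{eq:cond} directly. Finally, the paper substitutes \eqref{eq:cond} back into \eqref{eq:Gconv} to check that $\sum_\mu\tilde\lambda_\mu\otimes\tilde\rho_\mu$ really is a Schmidt decomposition of $GPG^\dagger$, which is what secures the existential quantifier you correctly flagged as delicate.
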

\begin{proof}
Inverting the second layer of $G$'s on both sides of~\eqref{eq:RenIndex} we have the equivalent condition
\begin{equation}
\resizebox{0.38\hsize}{!}{
\tikzfig{LeftG}}=\resizebox{0.38\hsize}{!}{\tikzfig{Newchain}}\ ,
\end{equation}
where
\begin{align}
\tikzfig{NewDiff}\coloneqq G\left(\sum_{\mu}{\lambda}_\mu\otimes{\rho}_\mu\right)G^\dag.
\end{align} 
In formulas,
\begin{align}
  \begin{aligned}
  &\sum_{\boldsymbol\mu}\bigotimes_{x\in\mathcal{L'}} G^\dag(\rho_{\mu_{x-1}}\otimes\lambda_{\mu_x})G= \\
 & =\bigotimes_{y\in\mathcal{L'}}\sum_{\mu_y}G(\lambda_{\mu_y}\otimes\rho_{\mu_y})G^\dag \\
 &=\bigotimes_{y\in\mathcal{L'}}\sum_{\mu_y}\bar{\lambda}_{\mu_y}\otimes\bar{\rho}_{\mu_y},
  \end{aligned}
\label{eq:Gconv}
\end{align}
where the sum over $\boldsymbol\mu$ is shorthand for the sum over all the indices $\mu_x$,
for $x\in\mathcal L'$, and $\sum_\mu\bar{\lambda}_{\mu}\otimes\bar{\rho}_{\mu}$ is 
some Schmidt decomposition of $GPG^\dag$.
Inverting half of the $G$'s we obtain another equivalent condition:
\begin{equation}
\resizebox{0.38\hsize}{!}{
\tikzfig{LeftGconj}}=\resizebox{0.38\hsize}{!}{\tikzfig{Newchainconj}},\ 
\end{equation}
which in formulas reads
\begin{align}
\begin{aligned}
\sum_{\boldsymbol{\mu}} \bigotimes_{x \in \mathcal{L}'} \lambda_{\mu_{2x}} \otimes G^\dagger (\rho_{\mu_{2x}} \otimes \lambda_{\mu_{2x+1}}) G \otimes \rho_{\mu_{2x+1}} =\\
= \sum_{\boldsymbol{\nu}} \bigotimes_{x \in \mathcal{L}'}\bar{\lambda}_{\nu_{2x-1}} \otimes G \left( \bar{\rho}_{\nu_{2x-1}} \otimes \bar{\lambda}_{\nu_{2x}} \right) G^\dagger \otimes \bar{\rho}_{\nu_{2x}}.
\end{aligned}
\end{align}
Comparing the factorisations on the left and right hand side, and considering the cells 
$\Lambda_{x},\Lambda_{x+1}$, we have:
\begin{align}
\begin{aligned}
\sum_{\mu,\nu} \lambda_\mu \otimes G^\dag(\rho_\mu\otimes \lambda_\nu) G \otimes \rho_\nu =\\
= \sum_{\mu,\nu}{\lambda}_\mu\otimes {\tilde{\rho}}_\mu\otimes {\tilde{\lambda}}_\nu\otimes {\rho}_\nu,
\end{aligned}
\end{align}
for suitable $\tilde{\lambda}_\mu, \tilde{\rho}_\mu$, 
or diagrammatically
\begin{align}
\resizebox{0.38\hsize}{!}{
\tikzfig{LeftGbis}}=\resizebox{0.38\hsize}{!}{\tikzfig{Newchainbis}}\ .
\end{align}
Using the  linear independence of all factors in the terms of the Schmidt decomposition $\lambda_\nu, \rho_\mu$ we get to
\begin{equation}
G^\dag (\rho_{\nu}\otimes\lambda_{\mu})G={\tilde{\rho}}_{\nu}\otimes{\tilde{\lambda}}_{\mu},
\label{eq:tilde}
\end{equation}
for all possible combinations of $\mu,\nu$.
Finally, inserting this result in Eq.~\eqref{eq:Gconv} we get to:
\begin{align}
\begin{aligned}
\sum_{\boldsymbol{\mu}}\bigotimes_{x\in\mathcal{L'}}{\bar{\lambda}}_\mu\otimes {\bar{\rho}}_\mu =\\
=\sum_{\boldsymbol{\mu}}\bigotimes_{x\in\mathcal{L'}}\tilde{\lambda}_\mu\otimes \tilde{\rho}_\mu.
\end{aligned}
\end{align}
Thus, invoking~\eqref{eq:tilde}, one can prove that 
$\sum_{\mu}{\tilde{\lambda}}_\mu\otimes {\tilde{\rho}}_\mu$ is another Schmidt 
decomposition of $G P G^\dag$. The proof of the converse statement is now straightforward, 
and this concludes the proof.
\end{proof}

It is easy to see that the renormalisability conditions~\eqref{eq:cond} are equivalent to
\begin{align}
  G^\dag
  (  {\rho}^{(n)}\otimes  {\lambda}^{(n)})
  G=
\tilde{\rho}^{(n)} \otimes   \tilde{\lambda}^{(n)}
\label{eq:mon}
\end{align}
where ${\rho}^{(n)},  {\lambda}^{(n)},
\tilde{\rho}^{(n)}$ and  $ \tilde{\lambda}^{(n)}$ denote the monomials of order $n$ in
$\rho_\mu, \lambda_\mu , \tilde{\rho}_\mu$ and 
$\tilde{\lambda_\mu} $ respectively, i.e.
\begin{align*}
  &\lambda^{(n)}=\lambda_{\nu_1}...\lambda_{\nu_n},
    \quad
   \rho^{(n)}= \rho_{\mu_1}...\rho_{\mu_n} \\
     &\tilde{\lambda}^{(n)}=\lambda_{\nu_1}...\lambda_{\nu_n},\quad
\tilde{\rho}^{(n)}=\rho_{\mu_1}...\rho_{\mu_n} 
\end{align*}
We have then the following result.

\begin{lemma}
\label{lmm:supportalegbrasofP}
  Let us denote with
  $\mathsf{M} = \langle \{ \rho_\mu \} \rangle$,$\mathsf{N}=\langle \{ \lambda_\mu \} \rangle, \tilde{\mathsf{M}}=\langle \{ \tilde{\rho}_\mu \} \rangle$
  and $\tilde{\mathsf{N}} = \langle \{ \tilde{\lambda}_\mu \} \rangle$ the algebras generated
  by $\rho_\mu, \lambda_\mu , \tilde{\rho}_\mu$
  and $\tilde{\lambda_\mu} $. Then, for any
  $ A \in \mathsf{N}$, $B \in \mathsf{M}$ there exist
$\tilde{A} \in \tilde{\mathsf{N}}$ and
$\tilde{B}\in   \tilde{\mathsf{M}}$
such that
\begin{equation}
\begin{split}
G^\dag (A\otimes B)G=\tilde{A}\otimes\tilde{B}
\end{split}
\label{eq:GAlg}
\end{equation}
\end{lemma}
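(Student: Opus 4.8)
The plan is to exploit that $G$ is unitary, so that $\Phi(\cdot):=G^\dag(\cdot)G$ is a $*$-automorphism of the full operator algebra on the two cells on which $G$ acts; in particular $\Phi$ is linear and multiplicative, $\Phi(XY)=\Phi(X)\Phi(Y)$. Since $\mathsf M=\langle\{\rho_\mu\}\rangle$ and $\mathsf N=\langle\{\lambda_\mu\}\rangle$, every $B\in\mathsf M$ and $A\in\mathsf N$ is a linear combination of monomials $\rho^{(k)}=\rho_{\mu_1}\cdots\rho_{\mu_k}$ and $\lambda^{(l)}=\lambda_{\nu_1}\cdots\lambda_{\nu_l}$, so by bilinearity it suffices to prove \eqref{eq:GAlg} when $A,B$ are such monomials and then sum. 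For monomials of \emph{equal} order $k=l$ this is immediate: writing $\rho^{(k)}\otimes\lambda^{(k)}=\prod_{i=1}^{k}(\rho_{\mu_i}\otimes\lambda_{\nu_i})$ and using multiplicativity of $\Phi$ together with \eqref{eq:cond} reproduces exactly \eqref{eq:mon}, giving $\Phi(\rho^{(k)}\otimes\lambda^{(k)})=\tilde\rho^{(k)}\otimes\tilde\lambda^{(k)}\in\tilde{\mathsf M}\otimes\tilde{\mathsf N}$, which is already of the required factorised form.

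The substance of the lemma therefore lies in the \emph{unequal}-order monomials $\rho^{(k)}\otimes\lambda^{(l)}$ with $k\neq l$: these are \emph{not} products of the elementary tensors $\rho_\mu\otimes\lambda_\nu$, so \eqref{eq:cond} and multiplicativity alone do not reach them. I would reduce the full statement to the two \emph{localised} relations
\begin{equation*}
\Phi(\rho_\mu\otimes e_{\mathsf N})=\tilde\rho_\mu\otimes \tilde e_{\mathsf N},\qquad \Phi(e_{\mathsf M}\otimes\lambda_\nu)=\tilde e_{\mathsf M}\otimes\tilde\lambda_\nu ,
\end{equation*}
where $e_{\mathsf M},e_{\mathsf N}$ are the units of $\mathsf M,\mathsf N$ and $\tilde e_{\mathsf M},\tilde e_{\mathsf N}$ those of $\tilde{\mathsf M},\tilde{\mathsf N}$. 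Indeed, once these hold, multiplicativity gives $\Phi(\rho^{(k)}\otimes e_{\mathsf N})=\tilde\rho^{(k)}\otimes\tilde e_{\mathsf N}$ and $\Phi(e_{\mathsf M}\otimes\lambda^{(l)})=\tilde e_{\mathsf M}\otimes\tilde\lambda^{(l)}$, and since $(\rho^{(k)}\otimes e_{\mathsf N})(e_{\mathsf M}\otimes\lambda^{(l)})=\rho^{(k)}\otimes\lambda^{(l)}$ we obtain $\Phi(\rho^{(k)}\otimes\lambda^{(l)})=\tilde\rho^{(k)}\otimes\tilde\lambda^{(l)}$ for \emph{all} $k,l$, which is \eqref{eq:GAlg}.

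To establish the localised relations I would set $W_\mu:=\Phi(\rho_\mu\otimes e_{\mathsf N})$ and $V_\nu:=\Phi(e_{\mathsf M}\otimes\lambda_\nu)$. Since $\rho_\mu\otimes e_{\mathsf N}$ commutes with $e_{\mathsf M}\otimes\lambda_\nu$ and their product is $\rho_\mu\otimes\lambda_\nu$, applying $\Phi$ and using \eqref{eq:cond} yields
\begin{equation*}
[W_\mu,V_\nu]=0,\qquad W_\mu V_\nu=V_\nu W_\mu=\tilde\rho_\mu\otimes\tilde\lambda_\nu,\qquad\forall\mu,\nu .
\end{equation*}
At this point I would bring in the structural input, available from index theory, that the support algebras $\mathsf M,\mathsf N,\tilde{\mathsf M},\tilde{\mathsf N}$ are finite-dimensional \emph{factors} (full matrix algebras), and that $\tilde{\mathsf M},\tilde{\mathsf N}$ are precisely the support algebras of the conjugated projection $GPG^\dag$. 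Then $\Phi(\mathsf M\otimes\mathsf N)$ is a factor containing the two commuting subfactors generated by $\{W_\mu\}$ and $\{V_\nu\}$; identifying these with $\tilde{\mathsf M}\otimes\tilde e_{\mathsf N}$ and $\tilde e_{\mathsf M}\otimes\tilde{\mathsf N}$ via the relative-commutant theorem, and using that one may choose coefficients making $\sum_\nu c_\nu\tilde\lambda_\nu$ invertible in the matrix algebra $\tilde{\mathsf N}$, pins $W_\mu=\tilde\rho_\mu\otimes\tilde e_{\mathsf N}$ (and symmetrically $V_\nu=\tilde e_{\mathsf M}\otimes\tilde\lambda_\nu$), which are the sought relations.

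The main obstacle is exactly this localisation step. Purely formal use of \eqref{eq:cond} controls $\Phi$ only on the subalgebra generated by the elementary tensors $\rho_\mu\otimes\lambda_\nu$, which is a \emph{proper} subalgebra of $\mathsf M\otimes\mathsf N$ whenever the generators fail to span $\mathsf M$ and $\mathsf N$ linearly (equivalently, whenever the defining relations are non-homogeneous in the generators, as a simple diagonal example already shows). Closing the gap genuinely requires the matrix-algebra structure of the support algebras together with the relative-commutant and invertibility arguments above, and one must in particular secure the containments $\Phi(\mathsf M\otimes\mathsf N)\subseteq\tilde{\mathsf M}\otimes\tilde{\mathsf N}$ and $\Phi(e_{\mathsf M}\otimes\mathsf N)\subseteq\tilde e_{\mathsf M}\otimes\tilde{\mathsf N}$; I would obtain these from the support-algebra description of $GPG^\dag$ rather than from \eqref{eq:cond} alone. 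Some care with the units $e_{\mathsf M},e_{\mathsf N}$ is also needed when the $\rho_\mu,\lambda_\nu$ do not generate unital subalgebras of the single-cell algebra.
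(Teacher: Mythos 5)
You correctly isolate the difficulty---Equation~\eqref{eq:cond} together with multiplicativity of $\Phi(\cdot)=G^\dag(\cdot)G$ only controls the subalgebra generated by the elementary tensors $\rho_\mu\otimes\lambda_\nu$, so mixed monomials $\rho^{(k)}\otimes\lambda^{(l)}$ with $k\neq l$ are not reached---but your way of closing the gap does not work, and it misses the single ingredient the paper actually uses: the projection is idempotent. Since $P^2=P$, expanding the Schmidt decomposition $P=\sum_\mu\lambda_\mu\otimes\rho_\mu$ gives $\sum_\mu\lambda_\mu\otimes\rho_\mu=\sum_{j,k}\lambda_j\lambda_k\otimes\rho_j\rho_k$, and linear independence of the Schmidt factors (pair one side with a dual basis) yields $\lambda_\mu=\sum_{j,k}c_\mu^{j,k}\lambda_j\lambda_k$ and $\rho_\nu=\sum_{j,k}d_\nu^{j,k}\rho_j\rho_k$, i.e.~Equation~\eqref{eq:6}. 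Iterating these substitutions, any monomial of degree $m$ can be rewritten as a homogeneous polynomial of degree $m+k$ for every $k\geq 0$, so arbitrary $A\in\mathsf N$ and $B\in\mathsf M$ can both be written as homogeneous polynomials of one common, sufficiently large degree $n$; then Equation~\eqref{eq:mon} applies term by term, the double sum factorises, and $G^\dag(A\otimes B)G=\tilde A\otimes\tilde B$ follows with no structural input beyond $P$ being a projection. In particular, your ``localised relations'' $\Phi(\rho_\mu\otimes e_{\mathsf N})=\tilde\rho_\mu\otimes\tilde e_{\mathsf N}$ would themselves be consequences of this homogenisation trick rather than something requiring operator-algebraic machinery.

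The substitute you propose for the missing step is unsound. The claim that $\mathsf M,\mathsf N,\tilde{\mathsf M},\tilde{\mathsf N}$ are finite-dimensional factors is not ``available from index theory'': index theory concerns the support algebras $\mathsf L_{2x},\mathsf R_{2x}$ of the QCA evolution, whereas $\mathsf M,\mathsf N$ are generated by the Schmidt factors of the coarse-graining projection $P$, and these are typically not full matrix algebras. For $P=\ketbra{0}{0}\otimes\ketbra{0}{0}+\ketbra{1}{1}\otimes\ketbra{1}{1}$ (the projection $Q_1$ of Table~\eqref{tab:res}) they are abelian, and Lemma~\eqref{lem:factalg} treats exactly this abelian case as one of the two generic alternatives; so the relative-commutant/invertibility argument cannot get started in the cases that matter most for the paper's classification. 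Finally, the containment $\Phi(\mathsf M\otimes\mathsf N)\subseteq\tilde{\mathsf M}\otimes\tilde{\mathsf N}$ that you plan to ``obtain from the support-algebra description of $GPG^\dag$'' is, up to its factorised form, precisely what the lemma asserts, and you give no independent derivation of it---as written, that step is circular.
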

\begin{proof}
  Since $P^2=P$, we have
  \begin{align}
    \label{eq:6}
    \lambda_\mu=\sum_{j,k}
  c_\mu^{j,k}\lambda_j\lambda_k \quad 
  \rho_\nu=\sum_{j,k}d_\nu^{j,k}\rho_j\rho_k.  
  \end{align}
  for suitable coefficients $c_\mu^{j,k}$ and
  $d_\nu^{j,k}$. By iteratively using Equation~\eqref{eq:6} we can express any homogeneous polynomial of degree $n$
  in the variables $\lambda_\mu$ as a homogeneous polynomial of degree $n+k$ for any $k$,  and the same holds for homogeneous polynomial in the variables $\rho_\mu$.
  Then, any arbitrary $A \in \mathsf{M}$ and $B \in \mathsf{N}$ can be written as homogeneous polynomials of the same degree $n$ for sufficiently large $n$.
  The thesis now follows from Eq.~\eqref{eq:mon}).
\end{proof}
\begin{Corollary}
\label{cor:Gcommuteswithswap}
If $G$ commutes with the swap

then  $[G^2,P]=0$ and
\begin{align}
  \label{eq:5}
  \begin{aligned}
&\forall A \otimes B \in \mathsf{K}, \exists
\tilde{A} \otimes \tilde{B} \in \tilde{\mathsf{K}}  \mbox{ such that }\\
  &G^\dag (A\otimes B)G=\tilde{A}\otimes\tilde{B},     
  \end{aligned}
\end{align}
where we defined
\begin{align}
  \label{eq:7}
  \begin{aligned}
    \mathsf{K}:=
    (\mathsf{M}\otimes\mathsf{N} \cup
    \mathsf{N}\otimes\mathsf{M} )''
    \\
    \tilde{\mathsf{K}}:=
        (\tilde{\mathsf{M}}\otimes \tilde{\mathsf{N}} \cup
  \tilde{  \mathsf{N}} \otimes  \tilde{\mathsf{M}} )''
  \end{aligned}
\end{align}
\end{Corollary}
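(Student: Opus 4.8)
Writing $S$ for the swap, the hypothesis is $SGS=G$, equivalently $GS=SG$ and $G^\dag S=SG^\dag$; its role is to let me transport any statement about the tensor ordering $\mathsf{M}\otimes\mathsf{N}$ to the opposite ordering $\mathsf{N}\otimes\mathsf{M}$. For the commutation $[G^2,P]=0$ I would use the two facts already at hand: the Proposition provides the Schmidt decomposition $GPG^\dag=\sum_\mu\tilde\lambda_\mu\otimes\tilde\rho_\mu$, and condition~\eqref{eq:cond}, inverted, reads $G(\tilde\rho_\mu\otimes\tilde\lambda_\nu)G^\dag=\rho_\mu\otimes\lambda_\nu$. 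Conjugating $GPG^\dag$ once more by $G$ and reordering each term with the swap,
\begin{align*}
G^2PG^{\dag 2}&=G\,(GPG^\dag)\,G^\dag=\sum_\mu G(\tilde\lambda_\mu\otimes\tilde\rho_\mu)G^\dag\\
&=\sum_\mu S\,G(\tilde\rho_\mu\otimes\tilde\lambda_\mu)G^\dag\,S=\sum_\mu\lambda_\mu\otimes\rho_\mu=P,
\end{align*}
where the last two steps use $G(\tilde\rho_\mu\otimes\tilde\lambda_\mu)G^\dag=\rho_\mu\otimes\lambda_\mu$ and $S(\rho_\mu\otimes\lambda_\mu)S=\lambda_\mu\otimes\rho_\mu$. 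This is exactly $[G^2,P]=0$, and the hypothesis enters only at the second equality, where $\tilde\lambda_\mu\otimes\tilde\rho_\mu=S(\tilde\rho_\mu\otimes\tilde\lambda_\mu)S$ is pushed through $G$ via $GS=SG$.

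For the factorisation property, the first step is to extend Lemma~\eqref{lmm:supportalegbrasofP} from the ordering it treats to the reversed one: for $A\in\mathsf{N}$ and $B\in\mathsf{M}$ I write $A\otimes B=S(B\otimes A)S$, so that $G^\dag(A\otimes B)G=S\,G^\dag(B\otimes A)G\,S=S(\tilde B\otimes\tilde A)S=\tilde A\otimes\tilde B\in\tilde{\mathsf{N}}\otimes\tilde{\mathsf{M}}$. Hence conjugation by $G$ carries each of the two generating subalgebras $\mathsf{M}\otimes\mathsf{N}$ and $\mathsf{N}\otimes\mathsf{M}$ of $\mathsf{K}$ onto the corresponding generating subalgebra of $\tilde{\mathsf{K}}$, sending simple tensors to simple tensors; since $X\mapsto G^\dag XG$ is a $*$-isomorphism it maps $\mathsf{K}=(\mathsf{M}\otimes\mathsf{N}\cup\mathsf{N}\otimes\mathsf{M})''$ onto $\tilde{\mathsf{K}}$. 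What remains is to see that simple tensors are preserved not just on the two generating pieces but on all of $\mathsf{K}$.

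To close this gap I would prove that the restriction of $X\mapsto G^\dag XG$ to $\mathsf{K}$ factorises as a tensor product $\Phi_1\otimes\Phi_2$ of homomorphisms of the algebra $(\mathsf{M}\cup\mathsf{N})''$ generated by $\mathsf{M}$ and $\mathsf{N}$. By the rigidity of tensor factorisations of finite-dimensional $C^*$-algebras, a $*$-isomorphism $\mathsf{M}\otimes\mathsf{N}\to\tilde{\mathsf{M}}\otimes\tilde{\mathsf{N}}$ that carries simple tensors to simple tensors and respects the slot order---the latter fixed by~\eqref{eq:cond}, which keeps the $\tilde\rho$-factor in the first slot---must be of the form $\phi_1\otimes\phi_2$, and the swap-extended statement supplies the compatible maps on $\mathsf{N}\otimes\mathsf{M}$. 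These glue into single-valued homomorphisms $\Phi_1,\Phi_2$ on $(\mathsf{M}\cup\mathsf{N})''$, well-definedness being inherited from the fact that $X\mapsto G^\dag XG$ is itself a function on $\mathsf{K}$; agreement on generators then forces $G^\dag XG=(\Phi_1\otimes\Phi_2)(X)$ on all of $\mathsf{K}$. Any simple tensor $A\otimes B\in\mathsf{K}$ is thereby sent to $\Phi_1(A)\otimes\Phi_2(B)$, a simple tensor lying in $G^\dag\mathsf{K}G=\tilde{\mathsf{K}}$, which is~\eqref{eq:5}. I expect this last step to be the main obstacle: a simple tensor of $\mathsf{K}$ is in general a \emph{sum} of products of generators, and a homomorphism can a priori raise the operator Schmidt rank; the crux is that slot-preserving rigidity together with the swap symmetry rules this out.
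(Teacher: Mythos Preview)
Your derivation of $[G^2,P]=0$ is exactly the paper's: both invert eq.~\eqref{eq:cond} and use $GS=SG$ to obtain $G(\tilde\lambda_\mu\otimes\tilde\rho_\nu)G^\dag=\lambda_\mu\otimes\rho_\nu$ (this is eq.~\eqref{eq:9} in the paper), then compute $G^2PG^{\dag2}=G\big(\sum_\mu\tilde\lambda_\mu\otimes\tilde\rho_\mu\big)G^\dag=\sum_\mu\lambda_\mu\otimes\rho_\mu=P$.

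For the factorisation statement your approach is again the paper's, only spelled out with more care. The paper's entire argument for~\eqref{eq:5} is the single sentence ``From Equation~\eqref{eq:9} we have that Equation~\eqref{eq:GAlg} holds for every operator generated by $\{\lambda_\mu\otimes\rho_\nu\}\cup\{\rho_\nu\otimes\lambda_\mu\}$'': having both eq.~\eqref{eq:cond} and its swapped version, multiplicativity of $X\mapsto G^\dag XG$ gives factorisation on every \emph{monomial} in these generators, and the paper stops there. You are right that this does not by itself cover an arbitrary simple tensor $A\otimes B\in\mathsf K$, since such a tensor is in general a linear combination of monomials whose two slots carry complementary $\rho/\lambda$ patterns, and linear combinations can a priori raise the operator Schmidt rank. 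The paper does not close this gap explicitly; your proposal to exhibit $G^\dag(\cdot)G|_{\mathsf K}=\Phi_1\otimes\Phi_2$ is precisely the missing ingredient. Your ``well-definedness inherited from $G^\dag(\cdot)G$ being a function on $\mathsf K$'' is the right lever, but as stated it is circular: isolating one slot by tensoring with a fixed element in the other slot only works once factorisation is already known on all of $\mathsf K$. A clean way to finish is to observe that when $\mathsf M$ and $\mathsf N$ are unital (the only cases that survive in Lemma~\ref{lem:factalg}), one can use $I\in\mathsf M$ and $I\in\mathsf N$ together with the degree-raising trick of Lemma~\ref{lmm:supportalegbrasofP} to rewrite any $w_1\otimes w_2$ with $w_i$ words in $\rho,\lambda$ as a combination of monomials with complementary patterns, after which the monomial factorisation yields the full~\eqref{eq:5}.
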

\begin{proof}
  If $G$ commutes with the swap, Equation
  $\eqref{eq:cond})$ implies that
  \begin{align}
    \label{eq:9}
    \lambda_{\mu}\otimes\rho_{\nu}=
    G (
    \tilde{\lambda}_{\mu}\otimes\tilde{\rho}_{\nu}
)
    G^{\dag}   .
  \end{align}
  Then we have:
  $G^2PG^{\dag 2} = G(G P G^{\dag} ) G^{\dag} = G(
  \sum_ \mu  \tilde{\lambda}_{\mu}\otimes\tilde{\rho}_{\mu})  G^{\dag}  =
  \sum_ \mu  \lambda_{\mu}\otimes \rho_{\mu} = P$.
  From Equation \eqref{eq:9} we have that
Equation \eqref{eq:GAlg} 
holds for every operator generated by $\{\lambda_\mu\otimes\rho_\nu\}
\cup \{\rho_\nu\otimes\lambda_\mu\}$.
\end{proof}

\subsection{Renormalisation of a quantum circuit to a QCA with a different index}

In this section we we will consider the case in
which a one dimensional QCA with index equal to
$1$ admits a size-$2$ renormalisation to a QCA
which is given by a left or right shift
followed by local unitaries.

Thus,
we
need to be able to identify a full matrix algebra
of $d\times d$ matrices in the macro-cell
$\Lambda_x$, that after two steps is totally moved
to the left macro-cell. The following proposition
specialises the renormalisability condition to this
case.
\begin{proposition}
The condition~\eqref{eq:cg} in the case of renormalisation of a unit index QCA to a QCA with different index reduces to
\begin{equation}
\sum_{\boldsymbol\mu}\bigotimes_{x\in\mathcal{L}}\tilde{\lambda}_{\mu_{x+1}}\otimes\tilde{\rho}_{\mu_{x-1}} =\sum_{\boldsymbol\mu}\bigotimes_{x\in\mathcal{L}}\lambda_{\mu_{x}}\otimes\rho_{\mu_{x}} ,
\end{equation}
where $\tilde{\lambda}_{\mu_{x+1}}=V{\lambda}_{\mu_{x+1}}V^\dag$ and $\tilde{\rho}_{\mu_{x-1}}=V{\rho}_{\mu_{x-1}}V^\dag$ for some unitary operator $V$.
\end{proposition}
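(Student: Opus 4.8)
The plan is to specialise the general renormalisation condition in Eq.~\eqref{eq:RenIndex} to the case where the renormalised QCA $\tS$ is a shift (say, a left shift) dressed by local unitaries. Concretely, I would start from the diagrammatic identity~\eqref{eq:RenIndex}, which equates $\tT^2$ applied to the chain of projections $\bigotimes_x \Pi_x$ with the chain built from $P_{\Lambda_x}$, and recall the Schmidt decomposition $P_{\Lambda_x}=\sum_\mu \lambda_\mu\otimes\rho_\mu$ of Eq.~\eqref{eq:SchmidtLR}. The key new ingredient is that renormalising to a shift means that the full matrix algebra of the macro-cell $\Lambda_x$ is transported \emph{entirely} one cell to the left after two steps. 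In the language of the support factors $\{\lambda_\mu,\rho_\mu\}$, this forces the image of each macro-cell under $G$ to end up supported on the neighbouring macro-cell, so that the roles of $\lambda$ and $\rho$ are swapped with respect to the lattice position, producing the index-shifting pattern $\tilde\lambda_{\mu_{x+1}}\otimes\tilde\rho_{\mu_{x-1}}$ on the left-hand side of the claimed identity.

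The main steps I would carry out are as follows. First I would write out the left-hand side of~\eqref{eq:RenIndex} using the decomposition~\eqref{eq:decompun} of $\Tnmat$ and the definition~\eqref{eq:definitionofG} of $G$, so that two steps of $\tT$ acting on the projections reorganise into a chain of $G$-conjugated tensor factors, as already done in Eq.~\eqref{eq:Gconv}. Second, I would impose the shift structure: since $\tS$ is a left shift followed by local unitaries $V$, its action on the coarse-grained algebra sends the content of macro-cell $\Lambda_x$ to $\Lambda_{x-1}$ up to conjugation by $V$. This fixes the form $\tilde\lambda_{\mu_{x+1}}=V\lambda_{\mu_{x+1}}V^\dag$ and $\tilde\rho_{\mu_{x-1}}=V\rho_{\mu_{x-1}}V^\dag$, because the transported support factors must coincide with the original ones up to the single-cell basis change $V$ that the local unitaries implement. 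Third, matching the two factorised chains cell-by-cell---exactly as in the comparison leading to Eq.~\eqref{eq:tilde}---yields the displayed equality between the shifted chain and the diagonal chain $\bigotimes_x \lambda_{\mu_x}\otimes\rho_{\mu_x}$.

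The step I expect to be the main obstacle is pinning down precisely \emph{why} the shift forces the index of the support factors to be displaced by $+1$ on the $\lambda$ side and $-1$ on the $\rho$ side (hence the asymmetric subscripts $\mu_{x+1}$ and $\mu_{x-1}$), and verifying that this is consistent with the Margolus structure of Eq.~\eqref{eq:marg} and the bookkeeping of which physical cells each $\lambda_\mu$, $\rho_\mu$ lives on. This requires carefully tracking how the left and right support algebras $\mathsf L$, $\mathsf R$ move under the two half-steps of the circuit, and confirming that a genuine shift (index different from $1$) can only arise when the entire matrix algebra---not a proper subalgebra---is transported, so that the $\lambda$ and $\rho$ factors realise a full $d\times d$ algebra. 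Once that indexing is settled, the remaining verification that $V$ is a genuine single-cell unitary, and that the linear independence of the Schmidt factors lets one read off the per-cell identity, is routine and parallels the argument already given for the general renormalisation condition.
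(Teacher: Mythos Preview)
Your proposal is on the right track but has a genuine gap at the place you yourself flag as the ``main obstacle.'' The paper's argument does not work by imposing the shift structure on $\tS$ and reading off consequences; it first proves a structural constraint on $G$ itself. Concretely, a dimension count on support algebras shows that if a rank-$d$ projection $\Pi$ is to see a full $d\times d$ matrix algebra transported from $\Lambda_x$ into $\Lambda_{x\pm 1}$ after two steps, one must have $\mathsf L' = \tT^{-1}(\mathsf L_{2x-1})\cap\mathsf L_{2x}$ equal to the full left support algebra, hence $\tT(\mathsf L)\equiv\mathsf L$. This in turn forces $M_2$ to undo $M_1$ up to a swap and local unitaries, so that $G=M_2M_1=(U\otimes U)S$ with $S$ the swap (eq.~\eqref{eq:Uswap}). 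That is the missing lemma in your plan.

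Once $G=(U\otimes U)S$ is in hand, the two layers of $G$ in eq.~\eqref{eq:Gev} act as a two-cell shift dressed by $U^{\otimes 2}$ on each layer; the swap in each $G$ is precisely what displaces the Schmidt-factor index by $\pm 1$ per macro-cell, giving the $\mu_{x+1}$ on the $\tilde\lambda$ side and $\mu_{x-1}$ on the $\tilde\rho$ side, with $V=U^2$. Note in particular that the unitary $V$ in the statement is \emph{not} the local unitary dressing the renormalised QCA $\tS$, as you suggest, but rather $U^2$ coming from the two $G$-layers in $\tT^2$. Without first pinning down $G=(U\otimes U)S$, neither the index-shifting pattern nor the identification of $V$ can be derived, and your plan to ``match cell by cell'' as in eq.~\eqref{eq:tilde} has no concrete input to work with. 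Also, the reference to eq.~\eqref{eq:decompun} is a red herring here: the argument runs entirely through the Margolus structure, not through the generic $(I\otimes V)(U\otimes I)$ decomposition of $\Tnmat$.
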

\begin{proof}
 The support algebra of a macro-cell $\Lambda_x$ in $\Lambda_{x-1}$ after two steps is a subalgebra $\mathsf L'$ of $\mathsf L_{2x-1}\cap\tT(\mathsf L_{2x})\simeq\mathsf \tT^{-1}(L_{2x-1})\cap\mathsf L_{2x}$, and that of $\Lambda_x$ in $\Lambda_{x+1}$ after two steps is a subalgebra $\mathsf R'$ of $\mathsf R_{2x+1}\cap\tT(\mathsf R_{2x})\simeq\mathsf \tT^{-1}(R_{2x+1})\cap\mathsf R_{2x}$. Starting from an index 1-QCA $\tT$, we have $\dim\mathsf L=\dim\mathsf R=d^2$. If we now consider a projection $\Pi$ with rank $d$, and a general subalgebra $\mathsf S$ of $\mathsf A_{\Lambda_{x\pm1}}$, we have
\begin{align*}
\dim(\Pi\mathsf S\Pi)\leq\dim(\mathsf S),
\end{align*}
and thus, if we want to have
\begin{align*}
\dim(\Pi\mathsf L'\Pi)=d^2\qquad(\textrm{or }\dim(\Pi\mathsf R'\Pi)=d^2),
\end{align*}
we need to have 
\begin{align*}
&\mathsf L'=\tT^{-1}(\mathsf L_{2x-1})\cap\mathsf L_{2x}\equiv\tT^{-1}(\mathsf L_{2x-1})\equiv\mathsf L_{2x}\\
&\ (\mathsf R'=\tT^{-1}(\mathsf R_{2x+1})\cap\mathsf R_{2x}\equiv\tT^{-1}(\mathsf R_{2x+1})\equiv\mathsf R_{2x})
\end{align*}
Finally, this implies that, in order to have a renormalisation to a left shift or right shift, we need to have $\tT(\mathsf L)\equiv\mathsf L$ or $\tT(\mathsf R)\equiv\mathsf R$, respectively. Notice however that $\tT(\mathsf L)\equiv\mathsf L$ if and only if 
$\tT(\mathsf R)\equiv\mathsf R$.

This implies that $M_2$ must map the algebra in its right input cell to 
$\tT^{-1}(\mathsf L)$ and that in its left input in $\tT^{-1}(\mathsf R)$, i.e. it must be equal to a swap followed by local isomorphisms followed by $M^{-1}_1$. Finally, this implies that

the composition of $\tikzfig{M2}$ and $\tikzfig{M1}$ results in a swap up to a local unitary, or---diagrammatically--- 
\begin{equation}
\tikzfig{Ggen}=\tikzfig{Uswap}\\
\label{eq:Uswap}
\end{equation}
In this way eq.$\eqref{eq:RenIndex}$ becomes
\begin{equation}
\resizebox{0.38\hsize}{!}{\tikzfig{RenInd1}}=\resizebox{0.38\hsize}{!}{\tikzfig{RLChain}},
\label{eq:not1}
\end{equation}
Using the decomposition in eq.$\eqref{eq:SchmidtLR})$ and computing the left side of the previous equation we get
\begin{equation}
\begin{aligned}
\resizebox{0.38\hsize}{!}{\tikzfig{RenInd1}}
\end{aligned}
=\sum_{\boldsymbol\mu}\bigotimes_{x\in\mathcal{L}}\tilde{\lambda}^{(x)}_{\mu_{x+1}}\otimes\tilde{\rho}^{(x)}_{\mu_{x-1}} ,
\end{equation}
where $\tilde{\lambda}^{(x)}_{\mu_{x+1}}=U^2{\lambda}^{(x)}_{\mu_{x+1}}{U^\dag}^2$ and $\tilde{\rho}^{(x)}_{\mu_{x-1}}=U^2{\rho}^{(x)}_{\mu_{x-1}}{U^\dag}^2$
Here we introduced a superscript to keep track of the the macro-cell on which an operator acts $\textit{after}$ the evolution, while the subscript is a summation index that is reminiscent of the macro cell on which the operator acted $\textit{before}$ the evolution. In this way eq.$\eqref{eq:not1})$ becomes
\begin{equation}
\sum_{\boldsymbol\mu}\bigotimes_{x\in\mathcal{L}}\tilde{\lambda}^{(x)}_{\mu_{x+1}}\otimes\tilde{\rho}^{(x)}_{\mu_{x-1}} =\sum_{\boldsymbol\mu}\bigotimes_{x\in\mathcal{L}}\lambda^{(x)}_{\mu_{x}}\otimes\rho^{(x)}_{\mu_{x}} \,.
\label{eq:diffind}
\end{equation}
\end{proof}
In the following we prove that, if eq.\eqref{eq:diffind} holds, then no renormalisation 
other than the one that changes the index is allowed.
\begin{proposition}
If eq.\eqref{eq:diffind} holds for a given unitary index QCA, that QCA can only be 
renormalised into a QCA with non-unitary index. Moreover, the rank-2 projections $P$ can 
only be:
\begin{equation}
P=\tikzfig{Piseparated}=\begin{cases}
\ket{\phi_0}\bra{\phi_0}\otimes I\\
I\otimes\ket{\psi_0}\bra{\psi_0}\end{cases}
\end{equation}
\end{proposition}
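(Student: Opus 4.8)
The plan is to exploit the fact that the hypothesis that \eqref{eq:diffind} holds places us in the regime where $G$ is a swap up to local unitaries, i.e.\ Equation~\eqref{eq:Uswap} holds, so that $G$ commutes with the swap and Corollary~\ref{cor:Gcommuteswithswap} applies. Writing $G=\mathrm{SWAP}\,(U\otimes U)$, I would substitute this form directly into the renormalisability condition \eqref{eq:cond}. A one-line computation using $\mathrm{SWAP}\,(\rho_\mu\otimes\lambda_\nu)\,\mathrm{SWAP}=\lambda_\nu\otimes\rho_\mu$ gives
\begin{equation}
G^\dag(\rho_\mu\otimes\lambda_\nu)G=(U^\dag\lambda_\nu U)\otimes(U^\dag\rho_\mu U),
\end{equation}
so that \eqref{eq:cond} becomes the family of identities $\tilde\rho_\mu\otimes\tilde\lambda_\nu=(U^\dag\lambda_\nu U)\otimes(U^\dag\rho_\mu U)$, valid for all pairs $\mu,\nu$. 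Since the QCA is assumed renormalisable, \eqref{eq:cond} is indeed available to us.

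The heart of the argument is to read off from these identities that the operator Schmidt rank $r$ of $P$ must equal $1$. Because $\{\tilde\lambda_\mu\otimes\tilde\rho_\mu\}$ is a genuine Schmidt decomposition of $GPG^\dag$, the families $\{\tilde\rho_\mu\}$ and $\{\tilde\lambda_\nu\}$ are each linearly independent; likewise $\{\lambda_\nu\}$ and $\{\rho_\mu\}$ are linearly independent by definition of the decomposition \eqref{eq:SchmidtLR}. By uniqueness of a nonzero tensor factorisation up to scalars, the displayed identity forces $\tilde\rho_\mu\propto U^\dag\lambda_\nu U$ for \emph{every} $\nu$; comparing two values $\nu_1\neq\nu_2$ with $\tilde\rho_\mu$ held fixed then yields $\lambda_{\nu_1}\propto\lambda_{\nu_2}$, contradicting linear independence unless there is a single index. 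This is the step I expect to be the main obstacle: one must track carefully which index sits in which tensor factor and verify that the condition is genuinely imposed for all pairs $\mu,\nu$ (not merely the diagonal $\mu=\nu$), so that independence of $\{\lambda_\nu\}$ can be invoked. Hence $r=1$ and $P=\lambda\otimes\rho$.

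To complete the characterisation of $P$, I would classify the rank-$2$ projections on $\mathbb{C}^2\otimes\mathbb{C}^2$ that factorise. Hermiticity and idempotency of $P=\lambda\otimes\rho$ force the eigenvalues of $\lambda$ and of $\rho$ to have pairwise products lying in $\{0,1\}$; demanding that exactly two of the four products equal $1$ rules out the ``diagonal'' pattern and leaves only the possibility that one factor is a rank-$1$ projector while the other is the identity. This gives precisely $P=\ket{\phi_0}\bra{\phi_0}\otimes I$ or $P=I\otimes\ket{\psi_0}\bra{\psi_0}$, as claimed.

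For the first assertion, I would argue that a product $P$ of this type preserves the full matrix algebra of exactly one of the two cells in each tile while freezing the other to a pure state. Under the swap-type $G$ of \eqref{eq:Uswap}, this surviving single-cell algebra is transported intact to a neighbouring tile after two steps, so the induced map is a shift composed with local unitaries; by Proposition~\ref{prp:indextheorymainprop} such an automorphism does not admit a Margolus partitioning and therefore has index different from $1$. Conversely, a renormalisation onto a unit-index (finite-depth) QCA would require $G$ not to reduce to a swap, which is excluded once \eqref{eq:diffind} holds. Thus the swap regime and the unit-index-target regime are mutually exclusive, and the only admissible renormalisation is to a QCA with non-unit index.
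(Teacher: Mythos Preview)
Your argument is correct and takes a genuinely different route from the paper's. The paper works directly with the chain identity~\eqref{eq:diffind}: on the left-hand side the summation index links $\tilde\lambda^{(x)}$ to $\tilde\rho^{(x+2)}$ across \emph{distant} macro-cells, so by multiplying both sides by suitable $(\tilde\rho_\nu^{(x+2)})^\dag$ and $(\tilde\lambda_\gamma^{(x-2)})^\dag$ and taking partial traces, one isolates on $\Lambda_x$ a single factorised term $\tilde\lambda_\nu\otimes\tilde\rho_\gamma$ equal to the full Schmidt sum $\sum_\mu\lambda_\mu\otimes\rho_\mu$, forcing the sum to collapse. You instead feed the explicit swap form of $G$ into the already established \emph{local} condition~\eqref{eq:cond}: the swap exchanges the tensor slots so that the $\mu$-dependence lands entirely in one factor and the $\nu$-dependence in the other, whence linear independence of the Schmidt families forces a single index. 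Your route avoids the multi-cell partial-trace bookkeeping at the price of invoking~\eqref{eq:cond} as a black box; the paper's is self-contained from~\eqref{eq:diffind} alone. Both finish with the same elementary classification of rank-two factorised projections and the same identification of the induced map as a shift. The paper adds the observation that~\eqref{eq:not1} then forces $\ket{\phi_0}$ (resp.\ $\ket{\psi_0}$) to be an eigenvector of $U$, a detail you do not mention but which is not needed for the proposition as stated. Your final ``conversely'' sentence is superfluous: once you have shown that every admissible $P$ is factorised and yields a shift, no unit-index target can arise, and nothing further needs arguing.
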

\begin{proof}

On the left we have that the summation index $\mu$ is the same for $\tilde{\lambda}^{(x)}_\mu$ and $\tilde{\rho}^{({x+2})}_\mu$. So if we multiply both sides by a fixed 
$(\tilde{\rho}^{({x+2})}_\nu)^\dag$ in the macro-cell $(\Lambda_{x+2})$ and 
$(\tilde{\lambda}^{({x-2})}_\gamma)^\dag$ in the macro-cell $(\Lambda_{x-2})$, and take the partial trace on the corresponding spaces, the expression on both sides is left such that the operator on the macro-cell $\Lambda_{x}$ is factorized from the rest, and 
we have
\begin{equation}
\tilde{\lambda}^{({x})}_{\nu}\otimes\tilde{\rho}^{({x})}_{\gamma}=\sum_{\mu_x}\lambda^{(x)}_{\mu_x}\otimes\rho^{(x)}_{\mu_x}.
\end{equation}
In order to have the equality between the two members we need to have only one term in the expression of eq.$\eqref{eq:SchmidtLR})$, i.e.
\begin{equation}
\tikzfig{Pisep}=\tikzfig{Piseparated}=\begin{cases}
\ket{\phi_0}\bra{\phi_0}\otimes I\\
I\otimes\ket{\psi_0}\bra{\psi_0}\end{cases}
\end{equation}
Moreover, in order for eq.$\eqref{eq:not1})$ to be
satisfied, either $\ket{\phi_0}$ or $\ket{\psi_0}$
should be eigenstates of the local unitary U in
eq.$\eqref{eq:Uswap})$.  But this means that our
projection will select only the left or the right
algebra. Thus the renormalised evolution will
consist in an algebra moving to the left or to the
right, depending on the choice of the projection.
\end{proof}

\section{Renormalisation of one dimensional Qubit QCA}\label{sec:Qubits}
In this section we will apply the tools developed
so far to the renormalisation of one dimensional
qubit QCAs with index $1$, i.e. QCA on
$\mathbb{Z}$ such that each cell is a two
dimensional quantum system and such that they can
be realized as a finite depth quantum circuit.

Comparing Equation \eqref{eq:10} with Equation \eqref{eq:marg} we have
\begin{align}
\begin{aligned}
\tikzfig{M2}=\tikzfig{M2qbit}\ ,\ \\
\\
\tikzfig{M1}=\tikzfig{def2}\ .
\end{aligned}
\end{align}
and the operator $G$ of Equation \eqref{eq:definitionofG} becomes
\begin{align}
G=C_{\phi}(U\otimes U)C_\phi
\label{eq:Gqubit}
\end{align}

\begin{lemma}
\label{lem:youshouldnotchangeindex}
  A qubit QCA that is renormalisable only admits unit index renormalisations.
\end{lemma}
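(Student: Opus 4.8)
The plan is to reduce the lemma to a single two-qubit statement about the gate $G=C_\phi(U\otimes U)C_\phi$ of Equation~\eqref{eq:Gqubit} and then settle it by a short computation. The leverage comes from the earlier different-index proposition: a unit-index QCA can be renormalised to a QCA of a \emph{different} index only if $G$ is a swap up to local unitaries, i.e.\ only if it has the form of Equation~\eqref{eq:Uswap}. The qubit QCAs under consideration here have unit index (they are FDQCs, Equation~\eqref{eq:10}), so a renormalisation either preserves the index or changes it, and the latter forces $G$ into swap-up-to-local form. It therefore suffices to prove that $G=C_\phi(U\otimes U)C_\phi$ is \emph{never} a swap up to local unitaries, for any single-cell unitary $U$ and any phase $\phi$.

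First I would record the signature of a swap-up-to-local gate $W=(W_1\otimes W_2)\,\mathrm{SWAP}\,(W_3\otimes W_4)$: conjugation by it sends the whole left-cell algebra into the right-cell algebra, that is $W(\mathcal O\otimes I)W^\dagger=I\otimes(\cdots)$ for every single-cell $\mathcal O$. Hence it is enough to exhibit one left-cell operator whose image under $G(\cdot)G^\dagger$ still acts nontrivially on the left cell.

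The computation is short. Since $Z\otimes I$ is diagonal it commutes with $C_\phi$, so
\[
G(Z\otimes I)G^\dagger=C_\phi\,(UZU^\dagger\otimes I)\,C_\phi^\dagger .
\]
Set $Z'\coloneqq UZU^\dagger$, a nonzero traceless (hence non-scalar) operator on the left cell. Because $C_\phi$ acts trivially whenever the control cell is in $\ket0$, projecting the second cell onto $\ket0$ yields $(I\otimes\bra0)\,G(Z\otimes I)G^\dagger\,(I\otimes\ket0)=Z'$. If instead $G(Z\otimes I)G^\dagger$ had the product form $I\otimes M$ demanded by the swap-up-to-local structure, the same projection would return the scalar operator $\bra0 M\ket0\,I$. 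As $Z'$ is not a multiple of the identity, this is a contradiction, valid for every $U$ and every $\phi$, with no case split on $\phi$. Therefore $G$ is never a swap up to local unitaries, so no index-changing renormalisation exists, and every renormalisation of such a qubit QCA preserves unit index.

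The main obstacle is conceptual rather than computational: one must recognise that the sole obstruction to changing the index is precisely the swap-up-to-local form of $G$---supplied by the earlier proposition, whose conclusion Equation~\eqref{eq:Uswap} holds for arbitrary cell dimension and hence applies at $d=2$---and then that the controlled-phase $C_\phi$ can never transport an entire single-cell degree of freedom across the cut, which is exactly what the $\ket0$-block computation makes manifest. As a structural remark, $G$ also commutes with the swap, since both $C_\phi$ and $U\otimes U$ are exchange-symmetric, placing us in the setting of Corollary~\ref{cor:Gcommuteswithswap}; this already constrains any swap-up-to-local $G$ to the special form $(A\otimes A)\,\mathrm{SWAP}$, but the block computation renders even this refinement unnecessary.
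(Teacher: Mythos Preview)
Your argument is correct and self-contained. Both proofs invoke the same reduction from Section~\ref{sec:FDQC}: a change of index forces $G$ into swap-up-to-local form (Equation~\eqref{eq:Uswap}). The difference is in how this form is excluded. The paper appeals to the gate-complexity fact that a swap requires at least three controlled two-qubit gates, whereas $G=C_\phi(U\otimes U)C_\phi$ contains only two; this is true but is invoked as folklore rather than proved. You instead give a direct algebraic obstruction: $G(Z\otimes I)G^\dagger=C_\phi(UZU^\dagger\otimes I)C_\phi^\dagger$ and the $\ket0$-block on the second cell returns the non-scalar $UZU^\dagger$, which is incompatible with the $I\otimes M$ pattern that any swap-up-to-local gate would produce. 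Your route is more elementary and uniform in $\phi$ (no gate-counting, no hidden case analysis), and in fact rules out the more general form $(W_1\otimes W_2)S(W_3\otimes W_4)$, not just the symmetric $(U'\otimes U')S$ actually required by Equation~\eqref{eq:Uswap}. The paper's route, by contrast, explains structurally \emph{why} the obstruction arises---too few entangling layers---which is a useful heuristic for the higher-dimensional generalisations hinted at in the Conclusion.
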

\begin{proof}
By the argument in the previous section, in order to have a renormalised QCA with index different from unit we should have
\begin{equation}
G=(U'\otimes U'){S},
\end{equation}
where ${S}$ is the swap operator, $S\ket{\psi}\ket{\phi} = \ket{\phi}\ket{\psi} $.
Diagrammatically this reads
\begin{equation}
\tikzfig{G}=\tikzfig{Uprimeswap} .
\end{equation}
This is impossible, since we need at least three controlled gates in order to have a swap, and only two controlled phases appear in the expression of $G$. Thus the renormalised evolution will have index 1.
\end{proof}
\begin{table*}
\centering
\renewcommand{\arraystretch}{3.5}
\begin{tabular}{|c | c |c|c|}

\hline\hline

 \multicolumn{4}{|c|}{\normalsize{\textbf{Renormalised evolutions}}} \\
\hline\hline
 $\tT$ & $P=Q_1$ & $P=Q_2$ & $P=I\otimes\ketbra{c}{c}$\\
\hline\hline
\multirow{2}{*}{$\begin{cases}\phi\not\in\{0, \pi\}\\
n_x=n_y=0\end{cases}$} & $\begin{cases}
\phi'=2\phi\\
\theta'=\phi-4\theta
\end{cases}$ 
&\multirow{2}{*}{  $\begin{cases}
\phi'=-2\phi\\
\theta'=\phi
\end{cases}$} & \multirow{2}{*}{$\begin{cases}
\phi'=0\\
\theta'=\pm 2(\theta+\delta_{c,0} \phi)
\end{cases}$}\\
& $\begin{cases}
\phi'=2\phi\\
\theta'=4\theta-3\phi
\end{cases}$ & &\\
\hline
$\begin{cases}\phi\not\in\{0, \pi\}\\
n_z=0\\
\theta=\frac{\pi}{2}
\end{cases}$& $\begin{cases}
\phi'=2\phi\\
\theta'=-\phi
\end{cases}$ & $\begin{cases}
\phi'=-2\phi\\
\theta'=\phi
\end{cases}$ & $\begin{cases}
\phi'=0\\
\theta'=\mp (2c-1)\phi
\end{cases}$\\
\hline
\end{tabular}
\caption{The table sums up the renormalisation flow of QCAs with $\phi\neq 0$: for each choice of the parameters in $\tT$ the renormalised evolutions associated to a particular choice of the projection P are shown: $Q_1\coloneqq\ketbra{0}{0}\otimes\ketbra{0}{0}+\ketbra{1}{1}\otimes\ketbra{1}{1}$ and $Q_2\coloneqq\ketbra{0}{0}\otimes\ketbra{1}{1}+\ketbra{1}{1}\otimes\ketbra{0}{0}$. \label{tab:res}}
\end{table*}
\begin{lemma}\label{lem:factalg}
  Let $\tT$ be a one dimensional qubit QCA with
  index $1$ which admits a size-$2$ renormalisation and let
  us denote with  $\mathcal{M}$ the algebra of 
  $2\times 2$ complex matrices.
  Then either we have
  \begin{align}
    \label{eq:factorfullmatrix}
  \begin{aligned}
  &
    \forall A, B \in \mathcal{M},  \exists
   \tilde{A}, \tilde{B} \in \mathcal{M}  \mbox{ such that}\\
  &G (A\otimes B) G^\dag=\tilde{A}\otimes\tilde{B}    
  \end{aligned}
\end{align}
or there exist a unital abelian
algebra $\mathcal{A}$ and a unital abelian algebra
$\tilde{\mathcal{A}}$
of commuting $2\times 2$ matrices such that
\begin{align}
  \label{eq:factorabelian}
  \begin{aligned}
  &\forall  A, B \in \mathcal{A},  \exists
  \tilde{A}, \tilde{B} \in \tilde{ \mathcal{A}}
  \mbox{ such that}\\
  &G (A\otimes B) G^\dag=\tilde{A}\otimes\tilde{B}    
  \end{aligned}
\end{align}
\end{lemma}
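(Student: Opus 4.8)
The plan is to reduce the statement to the classification of the support algebras $\mathsf{N}=\langle\{\lambda_\mu\}\rangle$ and $\mathsf{M}=\langle\{\rho_\mu\}\rangle$ of the rank-$2$ projection $P$ as $*$-subalgebras of $\mathcal{M}=\MdC$ with $d=2$. The first thing I would record is that, for qubits, the operator $G=C_\phi(U\otimes U)C_\phi$ of Eq.~\eqref{eq:Gqubit} commutes with the swap $S$, because both $C_\phi$ and $U\otimes U$ are invariant under conjugation by $S$. Consequently Corollary~\ref{cor:Gcommuteswithswap} applies without change: $[G^2,P]=0$ and the factorisation~\eqref{eq:5} holds on the algebras $\mathsf{K},\tilde{\mathsf{K}}$ of Eq.~\eqref{eq:7}. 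Together with Lemma~\ref{lmm:supportalegbrasofP}, this means that conjugation by $G$ is an algebra homomorphism that sends any product $A\otimes B$ with $A\in\mathsf{N}$, $B\in\mathsf{M}$ (and the swapped pairing) to a factorised element of $\tilde{\mathsf{N}}\otimes\tilde{\mathsf{M}}$; the direction $G$ versus $G^\dag$ is immaterial, since $G^\dag$ is again of the form $C_{-\phi}(U^\dag\otimes U^\dag)C_{-\phi}$ and one may interchange the roles of $P$ and $GPG^\dag$.

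I would then invoke the classification of $*$-subalgebras of $\mathcal{M}=M_2(\mathbb{C})$: the only options are $\mathcal{M}$ itself, a maximal abelian algebra of dimension $2$, the scalars $\mathbb{C}I$, or $\mathbb{C}p$ for a rank-one projection $p$. In particular there is no three-dimensional $*$-subalgebra, so \emph{every proper $*$-subalgebra is abelian}. This is what produces the dichotomy. If the support algebras are full, $\mathsf{M}=\mathsf{N}=\mathcal{M}$, then Lemma~\ref{lmm:supportalegbrasofP} already furnishes the factorisation of $G(A\otimes B)G^\dag$ for all $A,B\in\mathcal{M}$, which is precisely the first alternative~\eqref{eq:factorfullmatrix}. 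If instead they are proper, hence abelian, one obtains the second alternative~\eqref{eq:factorabelian} by taking $\mathcal{A}$ and $\tilde{\mathcal{A}}$ to be the abelian algebras that carry the $\lambda_\mu,\rho_\mu$ and their tilded counterparts. The idempotency $P^2=P$ enters here through Eq.~\eqref{eq:6}: it guarantees that the support algebras are unital and lets me rewrite any element as a homogeneous polynomial of arbitrarily large degree, so that the monomial form~\eqref{eq:mon} of the renormalisability condition is available throughout.

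The step I expect to be the main obstacle is reconciling the configurations in which $\mathsf{M}$ and $\mathsf{N}$ are neither equal nor of the same type---for instance one full and one abelian, or both abelian but diagonal in different eigenbases (as can happen already for $P=\tfrac12(I\otimes I+\sigma_z\otimes\sigma_x)$)---and showing that each such configuration still lands in exactly one of the two alternatives. The tool I would use is that conjugation by $G$ is a homomorphism, so factorisation is stable under products: writing $A\otimes B=(A\otimes I)(I\otimes B)$ and factorising each factor (which is legitimate once $I$ belongs to the relevant support algebra, as secured by unitality) shows that $G(A\otimes B)G^\dag$ is factorised whenever both $A\otimes I$ and $I\otimes B$ are. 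Combining this with the swap-symmetric pair $\mathsf{M}\otimes\mathsf{N}$, $\mathsf{N}\otimes\mathsf{M}$ of Corollary~\ref{cor:Gcommuteswithswap}, I would argue that as soon as the algebra generated by $\mathsf{M}\cup\mathsf{N}$ is all of $\mathcal{M}$ the factorisation propagates to the whole of $\mathcal{M}\otimes\mathcal{M}$, giving~\eqref{eq:factorfullmatrix}, whereas in the complementary case that generated algebra is abelian and directly supplies the commuting algebras $\mathcal{A},\tilde{\mathcal{A}}$ of~\eqref{eq:factorabelian}. Checking that these outcomes are genuinely exhaustive, i.e. that the renormalisability constraint~\eqref{eq:cond} leaves no mixed configuration straddling both alternatives, is the delicate bookkeeping that the proof must settle.
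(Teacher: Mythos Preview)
Your proposal follows essentially the same route as the paper: use $[G,S]=0$ to invoke Corollary~\ref{cor:Gcommuteswithswap}, classify the $*$-subalgebras of $M_2(\mathbb C)$, and then argue case by case that the algebra $\mathsf K$ of Eq.~\eqref{eq:7} is either $\mathcal M\otimes\mathcal M$ or $\mathcal A\otimes\mathcal A$. Your reformulation in terms of ``the algebra generated by $\mathsf M\cup\mathsf N$'' is exactly the organising principle behind the paper's case list; the paper just spells the cases out one by one, using the rank-$2$ constraint on $P$ to exclude $\mathsf M=\mathsf O,\ \mathsf N=\mathsf O'$ (rank~$1$) and $\mathsf M=\mathsf N=\mathsf I$ (rank~$4$).

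One small correction: the claim that $P^2=P$ forces the support algebras to be \emph{unital} is not right. For $P=I\otimes\ket\psi\bra\psi$ one has $\mathsf N=\mathbb C\,\ket\psi\bra\psi$, the non-unital algebra you listed as $\mathbb C p$. This matters for your ``factorisation propagates'' step, which hinges on $I$ belonging to the support algebras. The paper sidesteps the issue by observing that at least one of $\mathsf M,\mathsf N$ must be unital (otherwise $\rank P=1$), assuming without loss of generality that $\mathsf M$ is, and then treating the residual case $\mathsf M=\mathsf I,\ \mathsf N=\mathsf O$ separately, where the double commutant still yields $\mathcal A\otimes\mathcal A$ with $\mathcal A=\langle I,\ket\psi\bra\psi\rangle$. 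With that amendment your outline matches the paper's proof.
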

\begin{proof}
  From Equation \eqref{eq:Gqubit} we have that
  $[G,S] = 0$, where $S$ is the swap operator.
  Then Corollary \eqref{cor:Gcommuteswithswap}
  holds.  If $\mathsf{X}$ is a $C^*$-subalgebra
  of $\mathcal{M}$, then one of the following must be the case
  $ i)$
  $\mathsf{X} = \mathcal{M}$, $ii)$
  $\mathsf{X} = \mathcal{A}$ where $ \mathcal{A}$
  is 2-dimensional unital abelian algebra, $iii)$
  $\mathsf{X} = \mathsf{I} := \{z I, z\in\mathbb
  C\}$, or $iv)$
  $\mathsf{X}=\mathsf{O} :=
  \{z\ket\psi\bra\psi\mid z\in\mathbb C\})$.  Let
  $\mathsf{M}$, $\mathsf{N}$, $\tilde{\mathsf{M}}$
  and $\tilde{\mathsf{N}}$ be the algebras defined
  in Lemma~\eqref{lmm:supportalegbrasofP} and let
  $\mathsf{K}$ and $\tilde{\mathsf{K}}$ be the
  algebras defined in Corollary
  \eqref{cor:Gcommuteswithswap}.  If
  $\mathsf M = \mathsf{O}$ and
  $\mathsf N = \mathsf{O}'$ then $\rank(P) = 1$,
  and since we require that $\rank(P) = 2$, this
  case is ruled out.  Then at least one between
  $\mathsf M $ and $\mathsf N $ must be a unital
  algebra.  Without loss of generality, let us
  assume that $\mathsf M $ is unital. Let us now
  examine the several cases that can occur.  If
  either $\mathsf M = \mathcal{M}$ or
  $[\mathsf M , \mathsf N] \neq 0$ then it must be
  $ \mathsf{K} = \mathcal{M} \otimes \mathcal{M}$.
  If $\mathsf M = \mathcal{A}$ and
  $[\mathsf M , \mathsf N] = 0$ then
  $ \mathsf{K} = \mathcal{A} \otimes \mathcal{A}$.
  If $\mathsf M = \mathsf{I}$ then
  $\mathsf N \neq \mathsf{I}$, since otherwise we
  would have $\rank{P} = 4$.  If
  $\mathsf M = \mathsf{I}$ and
  $\mathsf N = \mathcal{M}$ then
  $ \mathsf{K} = \mathcal{M} \otimes \mathcal{M}$.
  If $\mathsf M = \mathsf{I}$ and
  $\mathsf N = \mathcal{A}$ then
  $ \mathsf{K} = \mathcal{A} \otimes \mathcal{A}$.
  Finally, If $\mathsf M = \mathsf{I}$ and
  $\mathsf N = \mathsf{O}$ then
  $ \mathsf{K} = \mathcal{A} \otimes \mathcal{A}$
  where $\mathcal{A}$ is the abelian algebra
  generated by $I$ and $\ketbra{\psi}{\psi}$.
  Then either
  $\mathsf{K} = \mathcal{A} \otimes \mathcal{A}$
  or
  $\mathsf{K} = \mathcal{M} \otimes \mathcal{M}$.
  In the same way, we can prove that either
  $\tilde{\mathsf{K}} = \mathcal{M} \otimes
  \mathcal{M}$ or
  $\tilde{\mathsf{K}} = \tilde{\mathcal{A}
  }\otimes \tilde{\mathcal{A}}$.  Since $G$ is
  unitary,
  $\mathsf{K} = \mathcal{M} \otimes \mathcal{M}$
  implies
  $\tilde{\mathsf{K}} = \mathcal{M} \otimes \mathcal{M}$ and
  $\mathsf{K} = \mathcal{A} \otimes \mathcal{A}$
  implies
  $\tilde{\mathsf{K}} = \mathcal{A} \otimes \mathcal{A}$.
\end{proof}

\begin{proposition}
\label{prop:coarse-graining-qubitmainresult}
  Let $\tT$ be a one dimensional qubit QCA with
  index $1$, and let us express the corresponding operator $G$ as follows: $G = C_\phi  (U_n \otimes U_n) C_\phi $ where
  \begin{align}
    \label{eq:8}
    \begin{aligned}
    &  U_n := \exp(-i \theta\vc{n} \cdot \sigma ) \\
      &C_\phi\ket{a}\ket{b} = e^{i \phi ab }\ket{a}\ket{b}, \quad a,b \in \{0,1\} \\
  &    \vc{n} = (n_x,n_y,n_z)^T, \lvert \vc{n}\rvert =1 \quad
      \sigma_z\ket{a} = (-1)^a\ket{a}.
    \end{aligned}
  \end{align}
 Then $\tT$ admits a size-2 renormalisation iff 
 \begin{align}
   \label{eq:12}
   \begin{aligned}
          & (\phi=0) \lor (n_x=n_y= 0 )\lor( n_z =0 , \theta=\tfrac{\pi}{2}).
   \end{aligned}
 \end{align}
\end{proposition}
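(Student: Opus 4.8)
The plan is to reduce the whole statement to the factorisation dichotomy for the single two-qubit operator $G=C_\phi(U_n\otimes U_n)C_\phi$ of Equation~\eqref{eq:Gqubit}, and then to solve the resulting constraints on $(\phi,\theta,\vc{n})$ by hand. From~\eqref{eq:Gqubit} one has $[G,S]=0$, so Corollary~\eqref{cor:Gcommuteswithswap} applies, and Lemma~\eqref{lem:factalg} shows that if $\tT$ admits a size-$2$ renormalisation then $G$ must satisfy either the full-matrix factorisation~\eqref{eq:factorfullmatrix} or the abelian factorisation~\eqref{eq:factorabelian}. Hence for the ``only if'' direction it suffices to determine, for each alternative, the parameters for which it can hold and to check that their union is contained in~\eqref{eq:12}; the ``if'' direction is handled separately by exhibiting an explicit projection.

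First I would treat the full-matrix case~\eqref{eq:factorfullmatrix}. Demanding $G(A\otimes B)G^\dag=\tilde A\otimes\tilde B$ for \emph{every} $A,B\in\mathcal M$ means that conjugation by $G$ preserves the product structure $\mathcal M\otimes\mathcal M$; together with $[G,S]=0$ and the exclusion of a swap factor established in Lemma~\eqref{lem:youshouldnotchangeindex}, this forces $G=V\otimes W$ to be a product of single-cell unitaries. Imposing this amounts to conjugating the generators $\sigma_x\otimes I,\sigma_y\otimes I,\sigma_z\otimes I$ through the two $C_\phi$ layers and requiring each image to stay in $\mathcal M\otimes I$: the inner $C_\phi$ fixes $\sigma_z\otimes I$ but sends $\sigma_x\otimes I,\sigma_y\otimes I$ to operators with a nontrivial second factor $\mathrm{diag}(1,e^{\pm i\phi})$, so for $\phi\neq0$ the outer layer can restore a product only if $U_n$ maps the $\sigma_z$ axis to $\pm\sigma_z$. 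This leaves exactly $\phi=0$, or $n_x=n_y=0$ (diagonal $U_n$), or $n_z=0,\theta=\tfrac{\pi}{2}$ (a $\pi$-rotation in the $xy$-plane), all inside~\eqref{eq:12}.

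Next I would treat the abelian case~\eqref{eq:factorabelian}, which I expect to be the crux. Here a rank-$1$ projector $Q$ generating the two-dimensional abelian algebra $\mathcal A$ must obey $G(Q\otimes I)G^\dag=\tilde A\otimes\tilde B$; since the left-hand side is a rank-$2$ projector and the right-hand side is a product of commuting factors, it can only be $\tilde Q\otimes I$ or $I\otimes\tilde Q$, the two options being interchanged by $S$ because $[G,S]=0$. Propagating $Q\otimes I$ through the inner $C_\phi$, the rotation $U_n\otimes U_n$, and the outer $C_\phi$, and demanding that the surviving second tensor factor be the identity, yields a system of phase equations in $\phi,\theta,\vc{n}$ and in the entries of $Q$. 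The delicate point is to rule out that some \emph{non-diagonal} $Q$ solves these equations for generic parameters; a careful bookkeeping of the phases through both $C_\phi$ layers shows that, away from $\phi=0$, one is forced to take $Q$ diagonal and $U_n$ to permute the computational basis, i.e.~$n_x=n_y=0$ or $n_z=0,\theta=\tfrac{\pi}{2}$. In both subcases $G$ is seen to preserve the computational-basis diagonal algebra, so this branch contributes nothing outside~\eqref{eq:12}.

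Finally, for the ``if'' direction I would exhibit for each branch of~\eqref{eq:12} an explicit rank-$2$ projection satisfying the renormalisability condition~\eqref{eq:cond}: when $\phi=0$ one has $G=U_n\otimes U_n$ and any product projector works, while for $n_x=n_y=0$ and for $n_z=0,\theta=\tfrac{\pi}{2}$ the computational-basis choices $P=Q_1,Q_2$ and $P=I\otimes\ketbra{c}{c}$ of Table~\eqref{tab:res} satisfy~\eqref{eq:cond} by the diagonal computation of the previous paragraph. Combining the two directions gives the stated equivalence. As anticipated, the main obstacle is the abelian case: unlike the full-matrix case it does not reduce to a single algebraic identity, and one must track the interplay of the two controlled-phase layers with a general rank-$1$ projector in order to be sure that no additional solutions arise beyond those in~\eqref{eq:12}.
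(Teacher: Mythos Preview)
Your overall strategy---reduce via the dichotomy of Lemma~\ref{lem:factalg}, treat the full-matrix and abelian branches separately, then exhibit explicit projections for the converse---coincides with the paper's. The full-matrix branch and the converse direction are handled essentially as in the paper (the paper writes the identity $V\otimes V=C_\phi(U\otimes U)C_\phi$ out in block form in the $\sigma_z$ basis rather than conjugating Pauli generators, but both routes lead to the same three alternatives).

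The genuine gap is in the abelian branch~\eqref{eq:factorabelian}. You identify it correctly as the crux, but the key sentence ``a careful bookkeeping of the phases through both $C_\phi$ layers shows that, away from $\phi=0$, one is forced to take $Q$ diagonal and $U_n$ to permute the computational basis'' is an assertion, not an argument. In the paper this case occupies the whole of Appendix~\ref{sec:nightmare}, and the analysis is considerably more involved than a phase chase through the three layers of $G$: one first recasts $G$ as $(V\otimes V)F$ with $F$ diagonal in the $\mathcal A$-basis, then extracts determinant and trace constraints relating the angles $(\phi,\theta,\alpha,\gamma,\delta)$, and finally carries out an eigenvalue/eigenvector computation in the Bell basis (using that $\dket{\sigma_y}$ is always fixed and forcing $\dket{\sigma_x}$ to be an eigenvector as well) to eliminate the possibility that some non-computational abelian algebra $\mathcal A$ satisfies~\eqref{eq:factorabelian} for special parameter combinations. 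Several exceptional subcases---notably $\phi=\pi$, and the branch $a=\lvert b\rvert=1/\sqrt2$ where the spectra of two candidate $4\times4$ matrices have to be matched and the change-of-basis $\tilde W$ is shown to be forced into $O(2)$---require their own treatment and do not fall out of a direct propagation of $Q\otimes I$ through $C_\phi(U_n\otimes U_n)C_\phi$. Your plan gives no mechanism for excluding these; asserting that the bookkeeping succeeds is precisely the part of the proof that has to be done.
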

\begin{proof}
  Let us suppose that $\tT$ admits a
  size-$2$ renormalisation.  From Lemma
  \eqref{lem:factalg} we know that either Equation
  \eqref{eq:factorfullmatrix} or Equation
  \eqref{eq:factorabelian} must hold.  If Equation
  \eqref{eq:factorfullmatrix} holds then $G$ is a
  unitary operator that maps factorised states into
  factorised states, which implies
  \cite{brylinski2002universal} that either
  $G = V\otimes W$ or $G = (V\otimes W) S$ where
  $V,W$ are some unitary operators and $S$ is the
  swap gate.  Since $[G,S] = 0$ we must have
  $G = V\otimes V$ or $G = (V\otimes V) S$.
  However, the case $G = (V\otimes V) S$ was ruled
  out in the proof of Lemma
  \eqref{lem:youshouldnotchangeindex} and and we are
  left with $G = V\otimes V$.
  Then we must have the identity  $ V\otimes V =  C_\phi  (U_n \otimes U_n) C_\phi $, which in the basis of the eigenstates of $\sigma_z$ reads as follows
  \begin{align*}
&  \begin{pmatrix}
      v_{00}V & v_{01}V \\
            v_{10}V & v_{11}V 
          \end{pmatrix}
                       \\
&=      \begin{pmatrix}
      I& 0\\
            0 & \Phi
          \end{pmatrix}
                          \begin{pmatrix}
      u_{00}U & u_{01}U \\
            u_{10}U & u_{11}U 
          \end{pmatrix}
\begin{pmatrix}
      I& 0\\
            0 & \Phi
          \end{pmatrix}  \\
&=\begin{pmatrix}
      u_{00}U & u_{01}U\Phi \\
            u_{10}\Phi U & u_{11}\Phi U \Phi 
          \end{pmatrix}\\
    & V = 
\begin{pmatrix}
      v_{00} & v_{01} \\
            v_{10} & v_{11}
          \end{pmatrix},
      \ 
                     U =
                     \begin{pmatrix}
      u_{00} & u_{01} \\
            u_{10} & u_{11}
          \end{pmatrix},
                     \\ 
                    & \Phi =
                     \begin{pmatrix}
                       1 &0 \\
                       0 &e^{i \phi}
                     \end{pmatrix}
  \end{align*}
  If $v_{00} \neq 0$ then $v_{00} V = u_{00}U$
  implies that $u_{00} \neq 0$ and $V \propto U$.
  Then the condition $v_{01}V = u_{01}U\Phi $
  implies that either $\Phi \propto I$, i.e.
  $\phi = 0$, or $v_{01}=u_{01} = 0$,
  i.e. $n_y=n_x =0$.  On the other hand if
  $v_{00} = 0$ then we must have that
  $u_{00} = u_{11} = 0$, i.e. $n_z = 0, \theta=\frac{\pi}{2}$.
  The
  case in which Equation \eqref{eq:factorabelian}
  holds is more elaborate and it is carried out in
  Appendix~\eqref{sec:nightmare}.
  Conversely, suppose that
  $ (\phi=0) \lor( n_x = n_y= 0 )\lor (n_z =0, \theta={\pi}/{2})$.  If
  $\phi=0$ then $\tT$ is a QCA made of local
  unitaries and it clearly admits a size-$2$
  renormalisation (see Fig. \eqref{fig:CG}). We then focus on the remaining cases. Let us 
  consider a wrapping of $\tT$ on $\mathbb{Z}_n$ where $n$ is
  an even number greater than $8$~\footnote{Actually, it is shown in the Appendix that it is sufficient to consider a lattice $\mathbb{Z}_n$ with n even and greater than $6$.}. One 
  can see that the regular neighborhood condition is
  satisfied for $\tT^2$ and that $\tT^2$ is
  diagonal in the computational basis, i.e.
  $\tT^2 = \sum_{i_1=0}^1\dots \sum_{i_n=0}^1
  \lambda (i_1, \dots, i_n) \bigotimes_{x=1}^n
  \ketbra{i_x}{i_x} $
  for suitable phases $ \lambda (i_1, \dots,
  i_n)$. Then $[U_{\tT^2},\Pi] = 0$ for
  $\Pi := \otimes_x \Pi_x$ and
  $\Pi_x := \ketbra{a}{a}\otimes \ketbra{b}{b} +
  \ketbra{a'}{a'}\otimes \ketbra{b'}{b'}$ for any
  choices of $a,b,a',b' \in \{0,1\}$ such that
  $\Pi_x$ has rank $2$.
\end{proof}

We now want to classify all the renormaisable QCAs, along with 
their respective renormalised evolution. The explicit calculations are straightforward 
but tedious and are carried out in Appendix~\eqref{sec:cgd}. There are three main classes 
of such QCA:
\begin{enumerate}
\item \textbf{QCA that trivialise after two steps}: This class of QCA corresponds to $\tT^2=\mathcal{I}$ and is renormalised in a local unitary. Since two steps of the evolution correspond to the identity, any choice of the projection $P$ constitutes a legitimate choice for the coarse-graining.
\item \textbf{QCA that apply a local unitary transformation}: This class of QCA corresponds to an evolution made of local unitaries. Imposing the commutation of two steps with the projection, it is easy to see that the only suitable projections are the abelian one over a factorised basis of $U^2$ (see Figure \eqref{fig:CG}).
\item \textbf{QCA that after two steps are diagonal on a factorised basis}: This class of QCA can be renormalised through a projection diagonal in the factorised computational basis. Those are the only QCAs that admit a size-2 renormalisation without reducing to a local unitary, i.e. they have $\phi\neq 0$. There are two main subclasses:
\begin{enumerate}
\item QCA with a single step diagonal over the computational basis\\
\item QCA with a single step anti-diagonal over the computational basis
\end{enumerate}
\end{enumerate}
The first two cases are trivial, and essentially boil down to the example in Fig.\eqref{fig:CG}r . Table~\eqref{tab:res} summarises the results for the non trivial case where $\phi\neq 0$. 
We can observe that all the size-2 renormalisable QCAs do not propagate information, since either they are factorised, or their Margolus partitioning scheme involves commuting layers, which implies that all the odd layers can be grouped into one, and the same for the even layers,
thus collapsing in a transformation with a neighbourhood contained in that of a single step, independently of how many steps are performed (see Fig.\eqref{fig:collapse}). r

By direct inspection of the solutions in Table~\eqref{tab:res}, one can straightforwardly 
see that the only fixed point with $\phi\neq 0$ is given by case $3.(a)$. This is given by 
the evolution:
\begin{align}
\begin{cases}
\phi=\frac{2n\pi}{3},\\
\theta=\frac{2n\pi}{3},
\end{cases}
\end{align}
which is mapped into itself by means of the projection $P=\ketbra{0}{0}\otimes\ketbra{1}{1}+\ketbra{1}{1}\otimes\ketbra{0}{0}.$
This case corresponds to the local evolution given by:
\begin{align}
\begin{aligned}
    &\tT_0({A}_0)=X^{\dag}(I\otimes e^{-i\frac{2n\pi}{3}\sigma_z}A_0 e^{i\frac{2n\pi}{3}\sigma_z}\otimes I)X,\\
    &X=(C_{\frac{2n\pi}{3}}\otimes I_3)(I_1\otimes C_{\frac{2n\pi}{3}})\,.
\end{aligned}
\end{align}
\begin{figure}
\begin{tikzpicture}
\node (A) at (-1,0) {\resizebox{0.20\textwidth}{!}{\tikzfig{T2noproj}}};
\node (B) at ($(A)+(4,0)$) {\resizebox{0.20\textwidth}{!}{\tikzfig{MargInd1sqr}}};
\draw[->] (A)--(B);
\end{tikzpicture}
\caption{Two steps of the evolutions that can be renormalised collapse in a single one: no information can be propagated after the first step.}
\label{fig:collapse}
\end{figure}
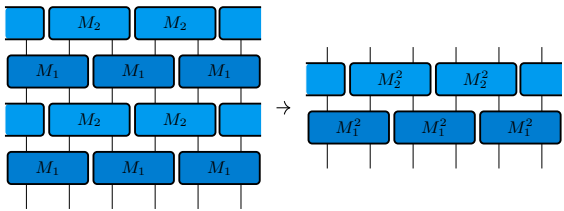

\section{Conclusion}
In conclusion, we analysed the problem of renormalising cellular automata on $\mathbb Z^s$ 
via a Kadanoff-like coarse-graining procedure. An analysis similar in its conceptual structure is at the core of the recently proposed \emph{operator algebraic} approach to Wilson's renormalisation group~\cite{PhysRevLett.127.230601,Osborne:2023aa,Luijk:2024aa,Stottmeister:2023aa}.
We provide necessary and sufficient conditions for renormalisability that are amenable to further analysis. We then specialised to the case where $s=1$ and the cells are qubits, with a coarse-graining where two cells and two time steps are mapped into a single one. We completely solved this case. The solution that we found tells two important facts: for our special case $s=1$ and qubit cells, we cannot change the index upon renormalisation, and most QCAs are not renormalisable, with the exception of QCAs where information does not actually propagate. We conjecture that this very restrictive result is due to the extremely limited choice of degrees of freedom that one can ``erase'' in the coarse-graining process in this simple case. In order to prove that more interesting cases exist, the present analysis should be developed either for coarse-grainings where more than two cells and more than two steps are renormalised into a single one, or move beyond the qubit-cell case. One can easily prove that the second choice includes the first one as a special case. Further developments can then be achieved provided a generalisation of the classification of Margolus partitioning schemes. 

Another interesting development would be to carry a similar analysis in the case of 
\emph{Fermionic} cellular automata. Also this case requires a generalisation of the classification of Margolus partitioning schemes, which is then a required intermediate step.

Finally, our necessary and sufficient condition might be exploited to study the 
renormalisation problem for QCAs on higher-dimensional graphs, such as $\mathbb Z^2$ or 
$\mathbb Z^3$. We expect that in these cases more interesting situations can arise, with 
fixed points that might represent phase transitions for the statistical mechanics of QCAs,
opening interesting questions about the universality classes of QCAs and their relation to 
other models, both in statistical mechanics and in quantum field theory. It is worth noting that our necessary and sufficient condition for a QCA to admit a coarse-graining coincides with the one identified in \cite{ArrighiQW} for a Quantum Walk to be Lorentz covariant. Additionally, we have focused on the problem of exact coarse-graining. This approach can serve as a foundation for developing more flexible notions of coarse-graining, where some of the current framework's constraints are relaxed. For instance, in \cite{Rotundo}, an approximate coarse-graining method is proposed for quantum walks, in which a given evolution is replaced by the unitary dynamics that best approximates it—after tracing out part of the system’s degrees of freedom.

\acknowledgments
PP acknowledges financial support from European Union---Next Generation EU through the MUR project Progetti di Ricerca d'Interesse Nazionale (PRIN) QCAPP No. 2022LCEA9Y.
AB acknowledges financial support from European Union---Next Generation EU through the MUR project Progetti di Ricerca d'Interesse Nazionale (PRIN) DISTRUCT No. P2022T2JZ9.
LT acknowledges financial support from European Union---Next Generation EU through the National Research Centre for HPC, Big Data
and Quantum Computing, PNRR MUR Project CN0000013-ICSC.

\bibliographystyle{quantum}
\bibliography{bibliography}

\appendix

\section{The quasi-local algebra}\label{sec:qloc}
Here we define rigorously the quasi-local algebra. 
In the following $\Omega$ and $\Omega_i$ denote finite subsets of $\mathbb Z^s$.
Consider now the  algebras of operators acting on a finite set of sites $\Omega\subset\mathbb{Z}^s$
\begin{equation}
\mathsf{A}(\Omega)=\bigotimes_{x\in\Omega}\mathsf{A}_{x},
\end{equation}
where $\mathsf{A}_x$ is the C*-algebra of operators acting on cell $x$.
For $\Omega_1\subset\Omega_2$ we can consider $\mathsf{A}(\Omega_1)$ as a subalgebra of $\mathsf{A}(\Omega_2)$ by tensoring with the identity over the region $\Omega_2\setminus\Omega_1$ the operators in $\mathsf{A}(\Omega_1)$, that is:
\begin{equation}\label{eq:embed}
O\otimes I_{\Omega_2\setminus\Omega_1}\in\mathsf{A}(\Omega_2)\hspace{5pt}\forall O\in\mathsf{A}(\Omega_1).
\end{equation}
In this way the product $O_1O_2$ is well defined over $\Omega_1\cup \Omega_2$ for $O_1\in\mathsf{A}(\Omega_1)$ and $O_2\in\mathsf{A}(\Omega_2)$. We can now define the equivalence relation $\thicksim$ by saying that $O_1\thicksim O_2$ with $O_i\in\mathsf{A}(\Omega_i)$ if
\begin{equation}
O_1\otimes I_{{(\Omega_1\cup\Omega_2)\setminus\Omega_1}}=O_2\otimes I_{{(\Omega_1\cup\Omega_2)\setminus\Omega_2}} .
\end{equation}
In other words two operators over $\Omega_1$ and $\Omega_2$ are equivalent under $\thicksim$ if they are are the same operator when seen as elements of $\Omega_1\cup\Omega_2$. The local algebra is then the algebra of the equivalence classes under $\thicksim$ of all operators algebras over finite set of sites, or
\begin{definition}[Local Algebra]\label{def:loc}
The Local Algebra $\mathsf{A}^{\mathrm{loc}}$ is defined as
\begin{equation}
\mathsf{A}^{\mathrm{loc}}=\bigcup_{\Omega}\mathsf{A}(\Omega)\hspace{3pt} \big/ \thicksim
\end{equation}
\end{definition}
Moreover, since $\mathsf{A}(\Omega)$ for every finite set $\Omega\subseteq \mathbb Z^s$ is 
a normed space with the uniform operator norm $\|\cdot\|_\infty$, and 
$\|O\otimes I\|_\infty=\|O\|_\infty$, the norm of $O\in\mathsf A_{\Omega_1}$ when embedded 
in $\mathsf{A}(\Omega)$ as in Eq.~\eqref{eq:embed} is the same as that of $O$ seen as an 
element of $\mathsf A_{\Omega_1}$. We can then make also $\mathsf{A}^\mathrm{loc}$ into a 
normed space by defining the norm of an equivalence class as the norm of any of its 
representatives which is indeed independent of the chosen representative.
 
We can then define an equivalence relation for Cauchy sequences, setting $\{\mathsf A_{n}\}_{n=0}^\infty \simeq \{\mathsf B_{n}\}_{n=0}^\infty$ if 
\begin{equation}
\forall \epsilon>0 \hspace{5pt} \exists n_0 \hspace{3pt} \mbox{s.t.}\hspace{3pt} \lVert \mathsf A_{k}-\mathsf B_{k}\rVert_{\infty} < \epsilon \hspace{10pt} \forall k \geq n_0.
\end{equation}
We can then complete the local algebra by the standard procedure of defining the algebra of equivalence classes of Cauchy sequences. As a result of this procedure, we obtain
the quasi-local algebra $\mathsf A(\mathbb Z^s)$. The detailed definition follows.
\begin{definition}[quasi-local algebra]
Let $\mathcal{C}(\mathsf{A}^\mathrm{loc})$ denote the space of Cauchy sequences over 
$\mathsf{A}^\mathrm{loc}$.
The quasi-local algebra $\mathsf{A}(\mathbb{Z}^s)$ is defined as
\begin{equation}
\mathsf{A}(\mathbb{Z}^s)=\mathcal{C}(\mathsf{A}^\mathrm{loc}) \big / \simeq
\end{equation}

\end{definition}
Intuitively speaking, this algebra contains all the operators that can be arbitrarily well approximated by local operators in $\mathsf{A}^\mathrm{loc}$. We remark that $\mathsf{A}^\mathrm{loc}\subset\mathsf A(\mathbb Z^s)$.

\section{Uniqueness of the Margolus partitioning scheme}

Let $M'_1$ and $M'_2$ be a pair of unitary operators such that 
eq.~\eqref{eq:marg} is satisfied. Then we must have
\begin{align}
\resizebox{0.8\hsize}{!}{
\tikzfig{MargInd1}=\tikzfig{MargInd12}},
\end{align}
which implies that
\begin{align}
\resizebox{0.8\hsize}{!}{
\tikzfig{M1id}=\tikzfig{M2id}},
\end{align}
The two different factorisations imply that both diagrams must be equal to a factorised unitary of the form
\begin{align*}
\resizebox{0.4\hsize}{!}{
\tikzfig{M1id}}=\resizebox{0.4\hsize}{!}{\tikzfig{M1unit}},
\end{align*}
and this shows that
\begin{align}
M'_1=(U\otimes V)M_1, \quad M'_2=M_2(V^\dag\otimes U^\dag).
\label{eq:unimarg}
\end{align}

\section{Proof of Proposition
  \eqref{prop:coarse-graining-qubitmainresult}:
  $\mathsf{K} = \mathcal{A} \otimes \mathcal{A} $
  case}\label{sec:nightmare}
To prove the proposition, we start showing that if Eq.\eqref{eq:factorabelian} holds then $(\phi=0)\lor (n_x=n_y=0)\lor (n_z=0)$,
 which means that either $C_\phi=I$, or $U$ is either diagonal or antidiagonal. 
Let  then $\{ \ket{\mu_i} \}_{i=0,1}$ be an orthonormal
basis such that
$\mathcal{A} :=
\langle
\ketbra{\mu_0}{\mu_0}
\cup
\ketbra{\mu_1}{\mu_1} 
\rangle$.
If Equation \eqref{eq:factorabelian} holds,
we have
\begin{align}\label{eq:Gfact}
\begin{aligned}
P &= \sum_{i,j} c_{i,j} \ket{\mu_i}\bra{\mu_i}\otimes\ket{\mu_j}\bra{\mu_j}\\
 & G^\dag (\ket{\mu_i}\bra{\mu_i}\otimes\ket{\mu_j}\bra{\mu_j})G=\ket{\tilde\mu_i}\bra{\tilde\mu_i}\otimes\ket{\tilde\mu_j}\bra{\tilde\mu_j},
\end{aligned}
\end{align}
where $c_{i,j} \in \{0,1\} $ and 
$\{ \ket{\tilde{\mu}_i} \}_{i=0,1}$
is another orthonormal basis. Clearly, we have
$\tilde{\mathcal{A}} :=
\langle
\ketbra{\tilde{\mu}_0}{\tilde{\mu}_0}
\cup
\ketbra{\tilde{\mu}_1}{\tilde{\mu}_1} 
\rangle$. We now find a convenient form to express G that acts as above.
\begin{lemma}
If eq.\eqref{eq:Gfact} holds, then we can write G as:
\begin{align}\label{eq:vvf}
&G=(V\otimes V) F\\
&V=\begin{pmatrix}\lvert a\rvert & b\\ -b^* & \lvert a\rvert\end{pmatrix} &F=\begin{pmatrix}e^{i\theta_{00}}&0&0&0\\0&1&0&0\\
0&0&1&0\\
0&0&0&e^{i\theta_{11}}\end{pmatrix}
\end{align}
\end{lemma}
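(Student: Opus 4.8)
The plan is to read eq.\eqref{eq:Gfact} as a statement about how $G$ acts on product bases, and then to exploit the one ingredient of $G=C_\phi(U_n\otimes U_n)C_\phi$ that is \emph{not} a swap-symmetric local unitary, namely the controlled phase $C_\phi$, which is diagonal precisely in the computational ($\sigma_z$) basis. Concretely, the second line of eq.\eqref{eq:Gfact} says that $G$ sends each product vector $\ket{\tilde\mu_i}\otimes\ket{\tilde\mu_j}$ to $\ket{\mu_i}\otimes\ket{\mu_j}$ up to a phase $e^{i\gamma_{ij}}$, index by index. The index structure shows the single-qubit relabeling $\ket{\tilde\mu_i}\mapsto\ket{\mu_i}$ is the same map $W$ on both factors, while $[G,S]=0$ (both $C_\phi$ and $U_n\otimes U_n$ commute with the swap $S$) forces $\gamma_{ij}=\gamma_{ji}$. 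Hence $G=(W\otimes W)D$, with $W$ a single-qubit unitary and $D$ diagonal in the $\tilde\mu$-product basis. Spending the free phases of the $\ket{\mu_i}$ to make $W\in SU(2)$, and using $\det U_n=1$ together with $C_\phi\ket{\Psi^-}=\ket{\Psi^-}$ (so that $G$ fixes the singlet $\ket{\Psi^-}=\tfrac{1}{\sqrt2}(\ket{01}-\ket{10})$), the antisymmetric phase must vanish, $\gamma_{01}=\gamma_{10}=0$, so $D$ acts trivially on $\ket{\tilde\mu_0}\otimes\ket{\tilde\mu_1}$ and $\ket{\tilde\mu_1}\otimes\ket{\tilde\mu_0}$.

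The decisive step is to align the source basis with the computational basis. Since $G$ must send the symmetric vector $\ket{\tilde\mu_i}\otimes\ket{\tilde\mu_i}$ to a product vector and $U_n\otimes U_n$ preserves productness, the first factor $C_\phi$ must already map $\ket{\tilde\mu_i}^{\otimes 2}$ to a product vector. Writing a symmetric two-qubit vector as $a\ket{00}+b(\ket{01}+\ket{10})+c\ket{11}$, which is a product iff $b^2=ac$, a direct check shows that $C_\phi$ preserves this relation on $\ket{\tilde\mu_i}^{\otimes 2}$ only if $\phi=0$ or $\ket{\tilde\mu_i}\in\{\ket0,\ket1\}$. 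Hence either $\phi=0$, or the source basis $\{\ket{\tilde\mu_i}\}$ is the computational basis.

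It then remains to read off the normal form in the two cases. If $\phi=0$ then $G=U_n\otimes U_n$, and an Euler-type factorization $U_n=V\,e^{-i\kappa\sigma_z}$ with $V$ a rotation about an axis in the $xy$-plane gives $G=(V\otimes V)F$ with $F=e^{-i\kappa\sigma_z}\otimes e^{-i\kappa\sigma_z}=\operatorname{diag}(e^{-2i\kappa},1,1,e^{2i\kappa})$, exactly of the stated shape. If $\phi\neq 0$, then $\{\ket{\tilde\mu_i}\}$ is computational, so $G$ sends the computational product basis to $\{\ket{\mu_i}\otimes\ket{\mu_j}\}$ up to the phases $\gamma_{ij}$; defining $V$ by $V\ket i=\ket{\mu_i}$ and absorbing a residual $z$-rotation to make the diagonal of $V$ real and equal (which keeps $F$ middle-fixing, since $e^{-i\delta\sigma_z}\otimes e^{-i\delta\sigma_z}$ is itself of that form), one obtains $G=(V\otimes V)F$ with $V=\begin{pmatrix}\lvert a\rvert & b\\ -b^*&\lvert a\rvert\end{pmatrix}$ and $F=\operatorname{diag}(e^{i\gamma_{00}},1,1,e^{i\gamma_{11}})$.

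I expect the main obstacle to be precisely this basis-alignment step: eq.\eqref{eq:Gfact} guarantees only that \emph{some} product basis is preserved, and one must convert this into a statement about the computational basis, which is possible only because $C_\phi$ breaks the $SU(2)\times SU(2)$ symmetry along the $\sigma_z$ eigenbasis; the symmetric-vector productness computation above is the lever that accomplishes this. A secondary but delicate point is the careful bookkeeping of the $U(1)$ phase freedom in the $\ket{\mu_i}$ and $\ket{\tilde\mu_i}$, which must be spent simultaneously to normalize $W$ into $SU(2)$, to force the middle phases of $F$ to be unity, and to bring $V$ into the equal-real-diagonal equatorial form demanded by the statement.
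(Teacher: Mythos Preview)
Your first paragraph already contains the paper's entire proof. The lemma does \emph{not} assert that $F$ is diagonal in the computational basis: in the paper's argument the matrices $V$ and $F$ are written in the $\{\ket{\mu_i}\}$ basis (the eigenbasis of the abelian algebra $\mathcal{A}$), and the later consequence $[F,P]=0$ holds precisely because both $F$ and $P$ are diagonal there. Once you have $G=(W\otimes W)D$ with $D=\operatorname{diag}(e^{i\gamma_{00}},1,1,e^{i\gamma_{11}})$ in the $\mu$ basis and $W\in SU(2)$, you finish exactly as you do in your third paragraph, but abstractly: write $W=\begin{pmatrix}a&b\\-b^*&a^*\end{pmatrix}$ in the $\mu$ basis, factor $W=\tilde V R$ with $R=\operatorname{diag}(e^{i\arg a},e^{-i\arg a})$, and absorb $R\otimes R$ (which is $\operatorname{diag}(e^{2i\arg a},1,1,e^{-2i\arg a})$ in that same basis) into $D$. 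No reference to $C_\phi$ or the computational basis is needed.

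Your ``decisive step'' is therefore unnecessary, and the argument you give for it has a gap. From $G\ket{\tilde\mu_i}^{\otimes 2}$ being a product you want to conclude that $C_\phi\ket{\tilde\mu_i}^{\otimes 2}$ is a product. But $G\ket{\tilde\mu_i}^{\otimes 2}=C_\phi\bigl[(U_n\otimes U_n)C_\phi\ket{\tilde\mu_i}^{\otimes 2}\bigr]=e^{i\gamma_{ii}}\ket{\mu_i}^{\otimes 2}$ only tells you that $(U_n\otimes U_n)C_\phi\ket{\tilde\mu_i}^{\otimes 2}=e^{i\gamma_{ii}}C_\phi^{-1}\ket{\mu_i}^{\otimes 2}$; for the left side to be a product you would need $C_\phi^{-1}\ket{\mu_i}^{\otimes 2}$ to be a product, i.e.\ the \emph{target} basis to be computational---which is exactly the kind of statement you are trying to prove. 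The outer $C_\phi$ can disentangle, so productness of the output does not propagate backwards through it. The paper establishes the relation to the computational basis only later, through a much longer trace/determinant and eigenvector analysis; it is not available at this stage.
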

\begin{proof}
Let us now define a unitary operator $V$ such that
\begin{align}
\ket{\tilde{\mu_i}}\coloneqq V\ket{\mu_i},\quad i=0,1.
\end{align}
Then, from Equation \eqref{eq:Gfact}
we have
\begin{align*}
\begin{aligned}
  &G=\sum_{i,j}e^{i\theta(i,j)}
  \ket{\tilde{\mu}_i}\bra{\mu_i}\otimes
  \ket{\tilde{\mu}_j}\bra{\mu_j}
  =(V\otimes V)F, \\
  &F := \sum_{i,j}e^{i\theta(i,j)}
  \ket{{\mu}_i}\bra{\mu_i}\otimes
  \ket{{\mu}_j}\bra{\mu_j}
\end{aligned}
\end{align*}
We remark that invariance of $G$ under conjugation by the swap implies that also $EFE=F$, 
and thus $\theta_{01}=\theta_{10}$, i.e.~$F$ has a degenerate eigenspace with dimension $d\geq 2$, with a 2-dimensional subspace spanned by $\{\ket0\ket1,\ket1\ket0\}$. In particular, since we are not interested in global phases, we can set F to be:
\begin{equation}\label{eq:F}
F=\begin{pmatrix}e^{i\theta_{00}}&0&0&0\\
0&1&0&0\\
0&0&1&0\\
0&0&0&e^{i\theta_{11}}\end{pmatrix}
\end{equation}
Moreover, if we write V as:
\begin{align*}
&V=\begin{pmatrix}a&b\\-b^*&a^*\end{pmatrix},\\
&a=\lvert a\rvert e^{i\phi}, b\in \mathbb{C},
\end{align*}
we have
\begin{align*}
V&= \begin{pmatrix}a&b\\-b^*&a^*\end{pmatrix}
\\
&=\begin{pmatrix}\lvert a\rvert&\tilde{b}\\-\tilde{b}^*&\lvert a\rvert\end{pmatrix}
\begin{pmatrix}e^{i\phi}&0\\0&e^{-i\phi}\end{pmatrix}
=\tilde{V}R_\phi,
\end{align*}
with
\begin{align*}
\tilde V\coloneqq\begin{pmatrix}\lvert a\rvert&\tilde{b}\\-\tilde{b}^*&\lvert a\rvert\end{pmatrix},\qquad R_\phi\coloneqq\begin{pmatrix}e^{i\phi}&0\\0&e^{-i\phi}\end{pmatrix}
\end{align*}
Thus we have:
\begin{equation*}
G=(V\otimes V) F = (\tilde{V}\otimes\tilde{V}) \tilde{F} ,
\end{equation*}
where $\tilde{F}\coloneqq(R_\phi\otimes R_\phi)F$ is:
\begin{equation*}
\tilde{F}=
\begin{pmatrix}e^{i\tilde\theta_{00}}&0&0&0\\
0&1&0&0\\
0&0&1&0\\
0&0&0&e^{i\tilde\theta_{11}}\end{pmatrix},
\end{equation*}
with
\begin{align*}
\tilde\theta_{00}\coloneqq\theta_{00}+\phi,\quad\tilde\theta_{11}\coloneqq\theta_{11}-\phi.
\end{align*}
We then have the thesis.
\end{proof}
As a consequence of the above result, we clearly have
\begin{equation}
[F,P]=0.
\label{eq:spcomm}
\end{equation}
Inserting the above decomposition of $G$ in $[G^2,P]=0$, we have
\begin{equation}
[(V\otimes V)F(V\otimes V)F,P]=0,
\end{equation}
and considering eq.$\eqref{eq:spcomm})$ we have
\begin{equation}\label{eq:blackmagic}
[(V\otimes V)F(V\otimes V),P]=0
\end{equation}
or, in other words,
\begin{equation}\label{eq:commmyst}
(V\otimes V)F(V\otimes V)P=P(V\otimes V)F(V\otimes V)
\end{equation}
Notice that eq.~\eqref{eq:commmyst} also implies
\begin{equation}\label{eq:commystort}
(V\otimes V)F(V\otimes V)\bar P=\bar P(V\otimes V)F(V\otimes V),
\end{equation}
with $\bar P\coloneqq I-P$, thus the operator $(V\otimes V)F(V\otimes V)$ is block-diagonal in the basis $\{\ket\mu\ket\nu\}$, i.e.
\begin{align*}
R\coloneqq(V\otimes V)F(V\otimes V)=\begin{pmatrix}
R_{00}&0\\
0&R_{11}
\end{pmatrix},
\end{align*}
where $R_{00}$ and $R_{11}$ are unitary.
\begin{lemma}
If we consider the projection
\begin{equation*}
P= I\otimes \ket{\mu}\bra{\mu},
\end{equation*}
then either $G=U\otimes U$ or $V=U$ is diagonal or antidiagonal.
\end{lemma}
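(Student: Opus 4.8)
The plan is to feed the projection $P = I\otimes\ketbra{\mu}{\mu}$ into the block-diagonal operator $R := (V\otimes V)F(V\otimes V)$ obtained just above from $[G^2,P]=0$ together with $[F,P]=0$, and to read off constraints on $V$ and $F$. First I would expand $R$ in the eigenbasis $\{\ket{\mu_i}\}$ of $F$. Writing $E_i := V\ketbra{\mu_i}{\mu_i}V$ (so that $E_i=\ket{v_i}\bra{w_i}$ with $\ket{v_i}=V\ket{\mu_i}$ and $\ket{w_i}=V^\dag\ket{\mu_i}$), the diagonal form $F=\mathrm{diag}(e^{i\theta_{00}},1,1,e^{i\theta_{11}})$ yields
\begin{equation}
R = e^{i\theta_{00}}\,E_0\otimes E_0 + E_0\otimes E_1 + E_1\otimes E_0 + e^{i\theta_{11}}\,E_1\otimes E_1 .
\end{equation}
Two facts about the $E_i$ are the backbone of the argument: they are linearly independent (since $V$ is invertible, $\ket{v_0},\ket{v_1}$ and $\ket{w_0},\ket{w_1}$ are each independent), while their sum is the \emph{full-rank} unitary $E_0+E_1=V^2$.

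Next I would impose that $R$ commute with $P$, i.e.\ that its two off-diagonal blocks in the second-factor basis $\{\ket{\mu},\ket{\mu^\perp}\}$ vanish. Sandwiching the displayed $R$ between $I\otimes\bra{\mu^\perp}$ on the left and $I\otimes\ket{\mu}$ on the right collects the $E_0$- and $E_1$-parts separately, and the linear independence of $E_0,E_1$ forces
\begin{equation}
\begin{pmatrix} e^{i\theta_{00}} & 1 \\ 1 & e^{i\theta_{11}} \end{pmatrix}\begin{pmatrix} x \\ y \end{pmatrix}=0, \qquad x:=\bra{\mu^\perp}E_0\ket{\mu},\ \ y:=\bra{\mu^\perp}E_1\ket{\mu},
\end{equation}
with the transposed block giving the same system for $\bra{\mu}E_i\ket{\mu^\perp}$. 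The whole lemma then hinges on the determinant $e^{i(\theta_{00}+\theta_{11})}-1$ of this $2\times2$ matrix, and I expect this determinant dichotomy to be the decisive step.

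Finally I would split on that determinant. If it is nonzero, then $x=y=0$ together with the transposed relations say that $E_0$ and $E_1$ are both diagonal in $\{\ket{\mu},\ket{\mu^\perp}\}$; being rank-one with full-rank sum $V^2$, they must occupy opposite diagonal slots, so that (say) $E_0\propto\ketbra{\mu}{\mu}$ forces the columns $V\ket{\mu_0}=(|a|,-b^*)^{T}$ and $V^\dag\ket{\mu_0}=(|a|,b^*)^{T}$ to be parallel; comparing entries gives either $b=0$ ($V$ diagonal) or $|a|=0$ ($V$ antidiagonal). If instead the determinant vanishes, then $\theta_{00}+\theta_{11}\equiv 0 \pmod{2\pi}$, so $F$ factorises, and since $F$ is swap-invariant the factorisation is symmetric, $F=A\otimes A$ with $A$ diagonal, whence $G=(V\otimes V)F=(VA)\otimes(VA)=U\otimes U$. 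This exhausts the alternatives and proves the claim. The main obstacle is the bookkeeping that converts the operator identity $[R,P]=0$ into the scalar $2\times2$ system; once the rewriting $R=\sum_{a,b}F_{ab}\,E_a\otimes E_b$ with $E_0,E_1$ independent but $E_0+E_1=V^2$ is in place, the remaining case analysis is short.
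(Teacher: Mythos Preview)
Your argument is correct. The organisation differs from the paper's but the content is the same: both exploit that $R=(V\otimes V)F(V\otimes V)$ must be block-diagonal with respect to $P=I\otimes\ketbra{\mu}{\mu}$. The paper goes the other way around---it writes $F=(V^\dag\otimes V^\dag)R(V^\dag\otimes V^\dag)$, expands in the block decomposition $R=\mathrm{diag}(R_{00},R_{11})$, and reads off the off-diagonal blocks of $F$ (which must vanish because $F$ is diagonal) as $\pm ab\,V^\dag(R_{00}+R_{11})V^\dag$. This yields the same dichotomy: either $ab=0$ (your ``determinant nonzero'' branch, giving $V$ diagonal or antidiagonal), or $R_{00}=-R_{11}$, so $R$ and hence $F$ and $G$ factorise (your ``determinant zero'' branch). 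Your reduction to the $2\times2$ system with determinant $e^{i(\theta_{00}+\theta_{11})}-1$ makes the split very transparent and avoids the block-matrix bookkeeping; the paper's version is a direct block computation. Either way the work is short once one has $[R,P]=0$ together with $[F,P]=0$.
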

\begin{proof}
From the above considerations, we can write :
\begin{align}
F=(V^\dag\otimes V^\dag)R(V^\dag\otimes V^\dag)=
\begin{pmatrix}
F_{00}&F_{01}\\
F_{10}&F_{11}
\end{pmatrix}, 
\end{align}
with blocks
\begin{align*}
&F_{00}=a^2V^\dag R_{00} V^\dag-\lvert b\rvert^2V^\dag R_{11} V^\dag,\\
&F_{01}=- ab(V^\dag R_{00} V^\dag+V^\dag R_{11} V^\dag),\\
&F_{10}=ab^*(V^\dag R_{00} V^\dag+V^\dag R_{11} V^\dag),\\
&F_{11}=-\lvert b\rvert^2 V^\dag R_{00} V^\dag+ a^2 V^\dag R_{11} V^\dag.
\end{align*}

Since $F$ is block-diagonal, we must have $F_{01}=F_{10}=0$, that is, 
(upon multiplying by $V$ on both sides)
\begin{align}
\begin{cases}
ab( R_{00} + R_{11} ) =0,\\
ab^*( R_{00} + R_{11}) =0.
\end{cases}
\end{align}
We have now two possibilities: i) $R_{00}\propto R_{11}$, which however implies 
$F=e^{i\theta Z}\otimes e^{i\theta Z}$ and $G=U\otimes U$; ii) $ab=0$  that implies that $V$ is either diagonal or off-diagonal, and this concludes the proof.
\end{proof}
We want now to consider the case where
\begin{equation*}
P=\ket{\mu}\bra{\mu}\otimes\ket{\nu}\bra{\nu} +\ket{\mu\oplus 1}\bra{\mu\oplus 1}\otimes\ket{\nu\oplus 1}\bra{\nu\oplus 1}.
\end{equation*}
It is convenient to define
\begin{align}
&\delta\coloneqq \frac{\theta_{00}-\theta_{11}}{2},\\
&U\eqqcolon e^{i\alpha_1 \sigma_z}e^{i\gamma \sigma_y}e^{i\alpha_2\sigma_z},\label{eq:euler}\\
&\alpha\coloneqq\alpha_1+\alpha_2,\\
&\chi\coloneqq 2\alpha-\phi,
\end{align}
where in~\eqref{eq:euler} we used the representation of $\mathbb{SU}(2)$ matrices in terms of Euler angles. We can impose preliminary constraints on G by computing the determinant and the trace of $G=(V\otimes V) F$ and $G=C_\phi (U\otimes U) C_\phi$ and imposing the equality between the two forms. The constraints are summarised in the following Lemma.
\begin{lemma}
Considering the form of G in eq.\eqref{eq:vvf} and the one in eq.\eqref{eq:8} it should hold:
\begin{align}\label{eq:conditionphases}
\begin{cases}
a^2(1+\cos\phi\cos\delta)=\cos^2\gamma(1+(-1)^n\cos\phi\cos\chi),\\
a^2\sin\phi\cos\delta=(-1)^n\cos^2\gamma\sin\phi\cos\chi.
\end{cases}
\end{align}

\end{lemma}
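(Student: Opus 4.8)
The plan is to read off the scalar identity \eqref{eq:conditionphases} from the operator equation $(V\otimes V)F=C_\phi(U\otimes U)C_\phi$ by evaluating the two simplest conjugation-invariants of a unitary---its determinant and its trace---on each of the two factorised expressions for $G$ and then matching them. All of these quantities are elementary given the diagonal structure of $F$, $C_\phi$, and the tensor products, so the whole computation reduces to two complex-number identities.

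First I would compute the determinant. Both $V$ (as in \eqref{eq:vvf}, with $\lvert a\rvert^2+\lvert b\rvert^2=1$) and $U$ (as in \eqref{eq:8}) lie in $SU(2)$, so $\det(V\otimes V)=\det(U\otimes U)=1$, while $\det F=e^{i(\theta_{00}+\theta_{11})}$ and $\det C_\phi=e^{i\phi}$. Matching $\det[(V\otimes V)F]=e^{i(\theta_{00}+\theta_{11})}$ with $\det[C_\phi(U\otimes U)C_\phi]=e^{2i\phi}$ forces $\theta_{00}+\theta_{11}=2\phi\pmod{2\pi}$, i.e. $\theta_{00}+\theta_{11}=2\phi+2\pi n$ for some integer $n$. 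This is precisely the step that produces the sign $(-1)^n$ in the statement: halving the phase yields $e^{i(\theta_{00}+\theta_{11})/2}=(-1)^n e^{i\phi}$, and $n$ is the integer carried by the determinant comparison.

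Next I would compute the trace. On the $F$-side the crucial simplification is that both diagonal entries of $V$ equal $\lvert a\rvert$, so every diagonal entry of $V\otimes V$ equals $\lvert a\rvert^2$; since $F$ is diagonal this gives $\Tr[(V\otimes V)F]=\lvert a\rvert^2(e^{i\theta_{00}}+1+1+e^{i\theta_{11}})=2\lvert a\rvert^2\bigl(1+e^{i(\theta_{00}+\theta_{11})/2}\cos\delta\bigr)$, using $\delta=(\theta_{00}-\theta_{11})/2$. On the $C_\phi$-side, cyclicity gives $\Tr[C_\phi(U\otimes U)C_\phi]=\Tr[(U\otimes U)C_\phi^2]$ with $C_\phi^2=\operatorname{diag}(1,1,1,e^{2i\phi})$; writing $U$ in the Euler form \eqref{eq:euler} yields the diagonal entries $U_{00}=e^{i\alpha}\cos\gamma$ and $U_{11}=e^{-i\alpha}\cos\gamma$, with $\alpha=\alpha_1+\alpha_2$. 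Substituting, the trace collapses to $\cos^2\gamma\,\bigl(e^{2i\alpha}+2+e^{2i(\phi-\alpha)}\bigr)=2\cos^2\gamma\bigl(1+e^{i\phi}\cos\chi\bigr)$ with $\chi=2\alpha-\phi$, where the last step uses $e^{2i\alpha}+e^{2i(\phi-\alpha)}=2e^{i\phi}\cos(2\alpha-\phi)$.

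Finally I would equate the two traces, substitute the phase relation $e^{i(\theta_{00}+\theta_{11})/2}=(-1)^n e^{i\phi}$ from the determinant step, and separate the resulting single complex equation into its real and imaginary parts, which yields exactly the pair \eqref{eq:conditionphases}. The delicate point, and the main obstacle, is the bookkeeping of the overall phase that was discarded when $F$ was brought to the normalised form \eqref{eq:F}: this phase must be tracked consistently through \emph{both} the determinant and the trace so that the factor $(-1)^n$ attaches to the $\cos\chi$ term as displayed (in particular one must check that the imaginary-part equation, which is insensitive to where $(-1)^n$ is placed because it only rescales by an overall sign, and the real-part equation are both satisfied). Once the phase is fixed by the determinant comparison, the remaining manipulations are routine evaluations of $4\times 4$ diagonal traces via the tensor structure.
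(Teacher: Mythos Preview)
Your proposal is correct and follows essentially the same approach as the paper: compute $\det G$ and $\Tr G$ from each of the two expressions $(V\otimes V)F$ and $C_\phi(U\otimes U)C_\phi$, use the determinant match to obtain $e^{i(\theta_{00}+\theta_{11})/2}=(-1)^n e^{i\phi}$, substitute this into the trace identity, and split into real and imaginary parts. Your remark about the bookkeeping of the overall phase and the placement of $(-1)^n$ is the only nontrivial point, and it is exactly the same freedom the paper exploits (the sign can be absorbed into $\cos\delta$ versus $\cos\chi$ since $\delta$ is only determined up to $\pi$ once the half-sum is fixed).
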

\begin{proof}
Computing
\begin{align*}
\det(G)=\det(F)=\det(C_\phi)^2,
\end{align*}
and remembering that $\det(U)=\det(V)=1$ we get to:
\begin{align}\label{eq:thetaphi}
e^{i(\theta_{00}+\theta_{11})}=e^{i2\phi}.
\end{align}
Moreover, considering $G=(V\otimes V)F$,
\begin{align*}
\Tr[G]=a^2[2+2e^{i\frac{\theta_{00}+\theta_{11}}2}\cos\delta],
\end{align*}
while, for $G=C_\phi(U\otimes U)C_\phi$, we have
\begin{align*}
\Tr[G]=\cos^2\gamma[2+2e^{i\phi}\cos\chi],
\end{align*}

Taking $(\theta_{00}+\theta_{11})/2=\phi+n\pi$, with $n\in\{0,1\}$, as from Eq.~\eqref{eq:thetaphi}, we obtain
\begin{align}\label{eq:systemF}
a^2[1+e^{i\phi}\cos\delta]=\cos^2\gamma[1+(-1)^n e^{i\phi}\cos\chi],
\end{align}
and equating the real and imaginary parts of both sides gives the thesis:
\begin{align*}
\begin{cases}
a^2(1+\cos\phi\cos\delta)=\cos^2\gamma(1+(-1)^n\cos\phi\cos\chi),\\
a^2\sin\phi\cos\delta=(-1)^n\cos^2\gamma\sin\phi\cos\chi.
\end{cases}
\end{align*}
\end{proof}
Now, the cases are the following
\begin{enumerate}
\item
$\phi=0$, which implies that $G=U\otimes U$.
\item \label{it:cpi}
$\phi=\pi$, that we will analyse as the last.
\item\label{eq:interest}
$\phi\neq n\pi$ for $n\in\mathbb N$, in which case
\begin{align}\label{eq:coschi}
a^2\cos\delta=(-1)^n\cos^2\gamma\cos\chi.
\end{align}
\end{enumerate}

In case~\eqref{eq:interest}, condition~\eqref{eq:coschi} can be substituted in~\eqref{eq:systemF} to obtain
\begin{align*}
a^2+a^2e^{i\phi}\cos\delta=\cos^2\gamma+a^2e^{i\phi}\cos\delta,
\end{align*}
leading to $a^2=\cos^2\gamma$. Since the sign of $a$ is irrelevant, we can take
$a=\cos\gamma$. Moreover, either $a=0$, which leads to the case where $U=i\mathbf n\cdot\boldsymbol\sigma$, or, from~\eqref{eq:coschi}, $\cos\delta=(-1)^n\cos\chi$, i.e.
\begin{align*}
\begin{cases}
\frac{\theta_{00}-\theta_{11}}2=s(2\alpha-\phi)+n\pi,\\
\frac{\theta_{00}+\theta_{11}}2=\phi+n\pi,
\end{cases}
\end{align*}
with $s\in\{+1,-1\}$. Finally, we get to
\begin{align*}
&\theta_{00}=2s\alpha+(1-s)\phi+2n\pi,\\
&\theta_{11}=-2s\alpha+(1+s)\phi.
\end{align*}
If $s=+1$, then
\begin{align*}
&\theta_{00}=2\alpha,\\
&\theta_{11}=-2\alpha+2\phi,
\end{align*}
while if $s=-1$, then
\begin{align*}
&\theta_{00}=-2\alpha+2\phi,\\
&\theta_{11}=2\alpha.
\end{align*}
Now, since $a=\cos\gamma$, we can write
\begin{align*}
&V=e^{-i\beta Z}e^{i\gamma Y}e^{i\beta Z},
\end{align*}
where where $X,Y,Z$ represent the Pauli matrices in the basis where $F$ is diagonal, and 
$b=|b|e^{-2i\beta}$. Moreover, since $\alpha=\alpha_1+\alpha_2$, if $s=1$ we can write 

\begin{align*}
&F=(e^{i\alpha Z}\otimes e^{i\alpha Z})\tilde C_{2\phi},
\end{align*}
where $\tilde C_{2\phi}$ represents a controlled-phase diagonal in the same basis as $F$. 
Now, considering that
\begin{align*}
G&=(e^{-i\beta Z}e^{i\gamma Y}e^{i(\alpha+\beta) Z})^{\otimes 2}\tilde C_{2\phi}\\
&=(e^{-i(\beta+\alpha_1) Z}e^{i\alpha_1 Z}e^{i\gamma Y}e^{i\alpha_2 Z}e^{i(\beta+\alpha_1) Z})^{\otimes 2}\tilde C_{2\phi}\\
&=(e^{-i(\beta+\alpha_1) Z})^{\otimes2}(e^{i\alpha_1 Z}e^{i\gamma Y}e^{i\alpha_2 Z})^{\otimes 2}\\
&\ \times\tilde C_{2\phi}(e^{i(\beta+\alpha_1) Z})^{\otimes2},
\end{align*}
we can choose
\begin{align*}
&V=e^{-i(\beta+\alpha_1) Z}(e^{i\alpha_1 Z}e^{i\gamma Y}e^{i\alpha_2 Z})e^{i(\beta+\alpha_1) Z},\\
&F=(e^{-i(\beta+\alpha_1) Z})^{\otimes2}\tilde C_{2\phi}(e^{i(\beta+\alpha_1) Z})^{\otimes2}.
\end{align*}
On the other hand, if $s=-1$ we can write
\begin{align*}
&F=(Xe^{i\alpha Z}\otimes Xe^{i\alpha Z})\tilde C_{2\phi}(X\otimes X).
\end{align*}
In this case, considering that
\begin{align*}
G&=(e^{-i\beta Z}e^{i\gamma Y}e^{i\beta Z}Xe^{i\alpha Z})^{\otimes 2}\tilde C_{2\phi}X^{\otimes 2}\\
&=X^{\otimes 2}(e^{i\beta Z}e^{-i\gamma Y}e^{-i\beta Z}e^{i\alpha Z})^{\otimes 2}\\
&\ \times\tilde C_{2\phi}X^{\otimes 2}\\
&=X^{\otimes 2}(e^{i(\beta+\frac\pi2) Z}e^{i\gamma Y}e^{-i(\beta+\frac\pi2) Z}e^{i\alpha Z})^{\otimes 2}\tilde C_{2\phi}X^{\otimes 2}\\
&=X^{\otimes 2}(e^{i\tilde\beta Z}e^{i\alpha_1 Z}e^{i\gamma Y}e^{i\alpha_2 Z}e^{-i\tilde\beta Z})^{\otimes 2}\tilde C_{2\phi}X^{\otimes 2}\\
&=(Xe^{i\tilde\beta Z})^{\otimes2}(e^{i\alpha_1 Z}e^{i\gamma Y}e^{i\alpha_2 Z})^{\otimes 2}\\
&\ \times\tilde C_{2\phi}(e^{-i\tilde\beta Z}X)^{\otimes 2},
\end{align*}
with $\tilde\beta\coloneqq(\beta+\tfrac\pi2-\alpha_1)$, we can choose
\begin{align*}
&V=Xe^{i\tilde\beta Z}(e^{i\alpha_1 Z}e^{i\gamma Y}e^{i\alpha_2 Z})e^{-i\tilde\beta Z}X,\\
&F=(Xe^{i\tilde\beta Z})^{\otimes2}\tilde C_{2\phi}(e^{-i\tilde\beta Z}X)^{\otimes 2}.
\end{align*}
Thus, in both cases, for a suitable special unitary $W$,
\begin{align*}
V=WUW^\dag,\quad F=(W\otimes W)C_{\phi}^2(W^\dag\otimes W^\dag).
\end{align*}

Eq.~\eqref{eq:commmyst} and ~\eqref{eq:commystort} imply that the operator $R$ has the form
\begin{align*}
&R=\begin{pmatrix} R_{00} & R_{01}\\ R_{10}& R_{11}\end{pmatrix},
\end{align*}
with $R_{00}$ and $R_{11}$ diagonal, and $R_{01}$ and $R_{10}$ anti-diagonal.
To impose this form, it is convenient to consider the basis given by triplet and singlet states:
\begin{align}
&\dket{I}=\frac{1}{\sqrt{2}}(\ket{0}\ket{0}+\ket{1}\ket{1}),\\
&\dket{\sigma_z}=\frac{1}{\sqrt{2}}(\ket{0}\ket{0}-\ket{1}\ket{1}),\\
&\dket{\sigma_x}=\frac{1}{\sqrt{2}}(\ket{0}\ket{1}+\ket{1}\ket{0}),\\
&\dket{\sigma_y}=\frac{-i}{\sqrt{2}}(\ket{0}\ket{1}-\ket{1}\ket{0}).
\end{align}
It is easy to see that:
\begin{align*}
&\ketbra{0}{0}\otimes\ketbra{0}{0}+\ketbra{1}{1}\otimes\ketbra{1}{1}=\dketbra{I}{I}+\dketbra{\sigma_z}{\sigma_z},\\
&\ketbra{0}{0}\otimes\ketbra{1}{1}+\ketbra{1}{1}\otimes\ketbra{0}{0}=\dketbra{\sigma_x}{\sigma_x}+\dketbra{\sigma_y}{\sigma_y}.
\end{align*}
Thus, $R$ must be block diagonal in the above basis.

Moreover, we have:
\begin{align*}
&E\dket{I}=
\dket{I},\quad E\dket{\sigma_z}=\dket{\sigma_z},\quad E\dket{\sigma_x}=\dket{\sigma_x},\\
&E\dket{\sigma_y}=-\dket{\sigma_y}.
\end{align*}
Reminding that, for $V$ with unit determinant,
\begin{align*}
(V\otimes V)\dket{\sigma_y}=\dket{\sigma_y},
\end{align*}
and by direct inspection of the matrix form of $F$ 
\begin{align*}
 F\dket{\sigma_y}=\dket{\sigma_y},
\end{align*}
we have
\begin{equation*}
R\dket{\sigma_y}=\dket{\sigma_y}.
\end{equation*}

Thus, being $R$ block diagonal, 
and being $\dket{\sigma_y}$ necessarily an eigenvector, it must hold that
\begin{equation}\label{eq:omegax}
R\dket{\sigma_x}=\omega \dket{\sigma_x}, \hspace{10pt} \lvert\omega\rvert=1.
\end{equation}
Moreover, since 
\begin{align*}
&(W\otimes W)(U\otimes U)C_\phi^2(W^\dag\otimes W^\dag)\ket\mu\ket\nu\\
&=e^{i\theta_{\mu\nu}}\ket\mu\ket\nu,
\end{align*}
and since the unique tensorised basis that is mapped to a tensorised basis by $C_\gamma$ for $\gamma\neq k\pi$, $k\in\mathbb Z$, is the computational basis, we must have
\begin{align*}
(W^\dag\otimes W^\dag)\dket{ \sigma_\mu}=(e^{iZ\chi}\otimes e^{iZ\chi})\dket{M},
\end{align*}
where $M=I$ for $\mu=0$, and $M=X,Y,Z$ for $\mu=x,y,z$, respectively, and $\dket M$ 
represents the four Pauli operators in the computational basis (as opposed to $\dket{\sigma_\mu}$ that represent them in the eigenbasis of $P$).
Hence
\begin{align*}
(U\otimes U)C_\phi^2(U\otimes U)\dket{X}=\omega\dket{X}. 
\end{align*}
Equivalently, we can write
\begin{align}\label{eq:eigenX}
C_\phi^2\dket{UXU^T}=\omega\dket{U^\dag XU^*}. 
\end{align}
We can now explicitly calculate
\begin{align*}
&UX U^T=
\begin{pmatrix}
2abe^{i\alpha}&a^2-|b|^2\\
a^2-|b|^2&-2ab^*e^{-i\alpha}
\end{pmatrix}\\
&U^\dag XU^*=
\begin{pmatrix}
-2abe^{-i\alpha}&a^2-|b|^2\\
a^2-|b|^2&2ab^*e^{i\alpha}
\end{pmatrix}\,.
\end{align*}
This allows us to write the vectors in eq.~\eqref{eq:eigenX}, upon suitable redefinition of $b$ by multiplication by a phase factor as
\begin{align*}
\dket{U X U^T}=&2abe^{i\alpha}\ket{00}-2ab^*e^{-i\alpha}\ket{11}\\
&+(a^2-|b|^2)\dket{ X},\\
\dket{U^\dag X U^*}=&2ab^*e^{-i\alpha}\ket{11}-2abe^{i\alpha}\ket{00}\\
&+(a^2-|b|^2)\dket{ X},
\end{align*}
and after applying $C_{\phi}^2=C_{2\phi}$ to the first one we can rewrite eq.~\eqref{eq:eigenX} as
\begin{align*}
&2abe^{i\alpha}\ket{00}-2ab^*e^{i(2\phi-\alpha)}\ket{11}+(a^2-|b|^2)\dket{ X}\\
&=\omega(-2abe^{-i\alpha}\ket{00}+2ab^*e^{i\alpha}\ket{11}+(a^2-|b|^2)\dket{ X})
\end{align*}
If $a-|b|^2\neq0$, we must have $\omega=1$, and thus
\begin{align*}
&abe^{2i\alpha}=-ab,\\
&ab^*e^{2i(\phi-\alpha)}=-ab^*,
\end{align*}
which admits a solution for $ab=0$---we then have either $U=e^{i\alpha\sigma_z}$ or $U=i\mathbf n\cdot\boldsymbol\Sigma$ with $\mathbf n=(n_x,n_y,0)$ and $\boldsymbol\Sigma=(X,Y,Z)$---or for $\alpha=(2k+1)\pi/2$ and $\phi=n\pi$, which falls under case \eqref{it:cpi} that we will treat separately. If $a=|b|=1/\sqrt2$, one has
\begin{align*}
&e^{2i\alpha}=e^{2i(\phi-\alpha)}=-\omega,
\end{align*}
which implies $2\alpha-\phi=n\pi$. In this case, we have
\begin{align*}
(U\otimes U)C_\phi^2=(V_y\otimes V_y) F^s_{\phi},
\end{align*}
where, up to conjugation by $e^{-i\alpha_1\sigma_z}$ on every factor, we have
\begin{align*}
&V_y=\frac{1}{\sqrt 2}(I+ i\sigma_y),\\
& F^s_{\phi}=\begin{pmatrix}
s e^{i\phi}&0&0&0\\
0&1&0&0\\
0&0&1&0\\
0&0&0&s e^{i\phi}
\end{pmatrix}=(e^{i\alpha\sigma_z}\otimes e^{i\alpha\sigma_z})C_{2\phi}.
\end{align*}
Where $s=e^{in\pi}\in\{-1,1\}$. We thus end up having:
\begin{align*}
&(V_y\otimes V_y) F^s_\phi= (W^\dag \otimes W^\dag) C_\phi (U\otimes U )C_\phi (W\otimes W),\\
&W\otimes W\dket{M}=\dket{\sigma_\mu}.
\end{align*} 
We can decompose $C_\phi (U\otimes U) C_\phi$ as
\begin{align*}
&C_\phi (e^{i\alpha_1\sigma_z}V_ye^{i\alpha_2\sigma_z}\otimes e^{i\alpha_1\sigma_z}V_ye^{i\alpha_2\sigma_z})C_\phi=\\
&=(e^{i\alpha_1\sigma_z}\otimes e^{i\alpha_1\sigma_z}) C_\phi (V_y\otimes V_y)\\
&\ \times\tilde{C}^s_\phi (e^{-i\alpha_1\sigma_z}\otimes e^{-i\alpha_1\sigma_z})\\
&\tilde C^s_\phi=\begin{pmatrix}s e^{i\phi} & 0& 0& 0\\
0&1&0&0\\
0&0&1&0\\
0&0&0&s \end{pmatrix}.
\end{align*}
Then, writing:
\begin{align*}
&C_\phi=(e^{-i\frac{\phi}{4}\sigma_z}\otimes e^{-i\frac{\phi}{4}\sigma_z}) F^+_{\frac{\phi}{2}}\\
&\tilde{C}^s_\phi=F^{s}_\frac{\phi}{2}( e^{i\frac{\phi}{4}\sigma_z}\otimes e^{i\frac{\phi}{4}\sigma_z}),
\end{align*}
we get to the equality:
\begin{align*}
(V_y\otimes V_y) F^s_\phi=& (\tilde{W}^\dag\otimes \tilde{W}^\dag) F_{\frac{\phi}{2}}^+ (V_y\otimes V_y) \\
&\ \times F_{\frac{\phi}{2}}^s (\tilde{W}\otimes \tilde{W}),
\end{align*}
where $\tilde{W}=e^{i\frac{\phi}{4}\sigma_z}e^{-i\alpha_1\sigma_z}W$. In order to find the 
values of $\phi$ that satisfy this equality, let us compute eigenvalues and eigenvectors 
of the two matrices. Computing $(V_y\otimes V_y) F^s_\phi$ in the basis 
$\dket{\sigma_\mu}$ 
ordered as  
$\{\dket{\sigma_0},\dket{\sigma_y},\dket{\sigma_x},\dket{\sigma_z}\}$ we get to:
\begin{align*}
(V_y\otimes V_y )F^s_\phi=\begin{pmatrix}s e^{i\phi}&0&0&0\\
0&1&0&0\\
0&0&0&-s e^{i\phi}\\
0&0&1&0\end{pmatrix}.
\end{align*}
This has eigenvalues $\{s e^{i\phi},1,(\frac{s+1}{2}i+\frac{s-1}{2})e^{i\frac{\phi}{2}},-(\frac{s+1}{2}i+\frac{s-1}{2})e^{i\frac{\phi}{2}}\}$. To ease the notation we introduce the function $f:\{-1,1\}\to\{i,1\}$ defined as follows:
\begin{align*}
&f(1)=i && f(-1)=1
\end{align*}
with the property that $f(s)^2=-s$.
With this notation the spectrum of the above matrix reads: $\{s e^{i\phi},1,f(s)e^{i\frac{\phi}{2}},-f(s)e^{i\frac{\phi}{2}}\}$.
On the other hand, computing $F_{\frac{\phi}{2}}^+ (V_y\otimes V_y) F_{\frac{\phi}{2}}^s $ in the computational basis $\dket{M}$ ordered as $\{\dket{I},\dket{Y},\dket{X},\dket{Z}\}$ we have:
\begin{align*}
F_{\frac{\phi}{2}}^+ (V_y\otimes V_y) F_{\frac{\phi}{2}}^s=\begin{pmatrix}s e^{i\phi}&0&0&0\\
0&1&0&0\\
0&0&0&-s e^{i\frac{\phi}{2}}\\
0&0&e^{i\frac{\phi}{2}}&0\end{pmatrix} .
\end{align*}
This has the same spectrum $\{s e^{i\phi},1,f(s)e^{i\frac{\phi}{2}},-f(s)e^{i\frac{\phi}{2}}\}$. Let us now compute the eigenvectors. If $\phi\neq n\pi$, there is no degeneracy and it must be $\dket{\sigma_0}=e^{i\gamma_0}\dket{I}$, thus:
\begin{align*}
\tilde{W}\otimes \tilde{W}\dket{I}=e^{i\gamma_0}\dket{I},
\end{align*}
or, in other words
\begin{align*}
\tilde W\tilde W^T=e^{i\gamma_0}I.
\end{align*}
This means that $e^{-i\gamma_0/2}\tilde W\in O(2)$. However, since $\det(\tilde W)=1$, it must be $e^{-i\gamma_0}=1$, i.e. $\gamma_0=0$. Moreover, the singlet state is invariant under local transformations with unit determinant, so $\dket{Y}=\dket{\sigma_y}$.
Then the matrix form of $\tilde W$ is
\begin{align*}
c_0I+ic_1\sigma_y=c_0I+ic_1Y,\quad c_0,c_1\in\mathbb R,\ c_0^2+c_1^2=1.
\end{align*}
Since $\tilde W$ is special, the remaining eigenvalues of $\tilde W\otimes\tilde W$ must be $\{e^{i\gamma},e^{-i\gamma}\}$. 
The other two eigenvectors of $(V_y\otimes V_y) F_\phi^2$ are given by linear combinations of 
$\dket{\sigma_x}$ and $\dket{\sigma_z}$. In particular, restricting to this two dimensional subspace, we have the two equations:
\begin{align*}
\begin{pmatrix}0&-se^{i\phi}\\1&0\end{pmatrix}\dket{U}=f(s)e^{i\frac{\phi}{2}}\dket{U},\\
\begin{pmatrix}0&-se^{i\phi}\\1&0\end{pmatrix}\dket{V}=-f(s)e^{i\frac{\phi}{2}}\dket{V}
\end{align*}
This gives:
\begin{align*}
\dket{U}=\frac{1}{\sqrt{2}}(e^{-i\frac{\phi}{4}}\dket{\sigma_z}+f(s)e^{i\frac{\phi}4}\dket{\sigma_x}),\\
\dket{V}=\frac{1}{\sqrt{2}}(e^{-i\frac{\phi}{4}}\dket{\sigma_z}-f(s)e^{i\frac{\phi}{4}}\dket{\sigma_x}).
\end{align*}
In the same way, we can compute the eigenvectors $\dket{\tilde{U}},\dket{\tilde{V}}$ in the $\dket{X},\dket{Z}$ basis. We have:
\begin{align*}
\dket{\tilde{U}}=\frac{1}{\sqrt{2}}(\dket{Z}+f(s)\dket{X})\\
\dket{\tilde{V}}=\frac{1}{\sqrt{2}}(\dket{Z}-f(s)\dket{X})\\
\end{align*}
Finally, we have to impose:
\begin{align*}
&\tilde{W}\otimes\tilde{W}\dket{\tilde{U}}=\dket{\tilde W\tilde U \tilde W^T}=e^{i\gamma}\dket{\tilde U},\\
&\tilde{W}\otimes\tilde{W}\dket{\tilde{V}}=\dket{\tilde W\tilde V \tilde W^T}=e^{-i\gamma}\dket{\tilde V}.
\end{align*}

We have:
\begin{align*}
&\dket{\tilde W\tilde U \tilde W^T}=\\
&=\frac{1}{\sqrt{2}}(\dket{\tilde W Z\tilde W^\dag}-f(s)\dket{\tilde W X\tilde W^\dag})=\\
&=\frac{1}{\sqrt{2}}(\dket{\tilde W^2 Z}-f(s)\dket{\tilde W^2 X}),\\
&\dket{\tilde W\tilde V \tilde W^T}=\\
&=\frac{1}{\sqrt{2}}(\dket{\tilde W Z\tilde W^\dag}+f(s)\dket{\tilde W X\tilde W^\dag})=\\
&=\frac{1}{\sqrt{2}}(\dket{\tilde W^2 Z}+f(s)\dket{\tilde W^2 X}).
\end{align*}
Considering that $\tilde W^2=(c_0^2-c_1^2)I+2ic_0c_1Y$, we finally have
\begin{align*}
&\dket{\tilde W\tilde U \tilde W^T}\\
&=\frac1{\sqrt2}[(c_0^2-c_1^2-2f(s)c_0c_1)\dket Z\\
&-f(s)(c_0^2-c_1^2-2sf(s)c_0c_1)\dket X],\\
&\dket{\tilde W\tilde V \tilde W^T}\\
&=\frac1{\sqrt2}[(c_0^2-c_1^2+2f(s)c_0c_1)\dket Z\\
&+f(s)(c_0^2-c_1^2+2sf(s)c_0c_1)\dket X]
\end{align*}
This yields
\begin{align*}
&(c_0^2-c_1^2-2f(s)c_0c_1)=e^{i\gamma},\\
&(c_0^2-c_1^2-2sf(s)c_0c_1)=e^{i\gamma},\\
&(c_0^2-c_1^2+2f(s)c_0c_1)=e^{-i\gamma},\\
&(c_0^2-c_1^2+2sf(s)c_0c_1)=e^{-i\gamma}.
\end{align*}
Clearly, l.h.s.~of the above equations is a real number, thus 
$e^{i\gamma}=e^{-i\gamma}=\pm1$. This implies that
\begin{align*}
&\tilde W\otimes\tilde W\dket X=\pm\dket X,\\
&\tilde W\otimes\tilde W\dket Z=\pm\dket Z,
\end{align*}
and consequently
\begin{align*}
\begin{cases}
\tilde W=(iY)&e^{i\gamma}=-1,\\
\tilde W=I&e^{i\gamma}=1.
\end{cases}
\end{align*}
Thus, 
\begin{align*}
&f(s)e^{i\frac\phi4}\sigma_x=\frac1{\sqrt2}(U-V)=\pm X,\\
&e^{-i\frac\phi4}\sigma_z=\frac1{\sqrt2}(U+V)=\pm Z.
\end{align*}

If $s=-1$ (and $f(s)=1$), it must be $\phi=0$ and we end up with a completely factorised 
QCA, that has been already treated. If $s=1$ (and $f(s=1)=i$), on the other hand, it must 
be
\begin{align*}
-e^{-i\frac\phi2}=e^{i\frac\phi2}=1,
\end{align*}
which is impossible.

Let us then finally analyse case~\eqref{it:cpi}, i.e.~$\phi=\pi$. In this case, by eq.~\eqref{eq:vvf}, we have
\begin{align*}
C_\pi(U\otimes U)C_\pi= (V\otimes V)F, 
\end{align*}
and now the determinant of r.h.s.~is 1, so we must have $\theta_{11}=2n\pi-\theta_{00}$, that leads to $F=(e^{i\theta\sigma_z})^{\otimes2}$, that is $G=W\otimes W$. We only need to determine 
what are the $U$ that allow for this condition. Now, the first case that we consider is $W^2=I$, that is $U^2=I$, and then either $U=I$ or $U=i\bvec n\cdot\boldsymbol\sigma$. In the latter situation one has 
\begin{align*}
&C_\pi(U\otimes U)C_\pi=\sum_{i,j=0,1}U_{ij}\ketbra ij\otimes \sigma_z^iU\sigma_z^j\\
&= U_{01}\ketbra{0}{1}\otimes U\sigma_z+U_{10}\ketbra 10\otimes \sigma_zU\\
&=(U_{01}\ketbra 01-U_{10}\ketbra10)\otimes (n_x\sigma_y-n_y\sigma_x),
\end{align*}
thus $C_\pi(U\otimes U)C_\pi=W\otimes W$ if and only if $U=i\sigma_j$ with $j=x,y$. We remain with the case $W^2\neq I$, which implies that $W\otimes W$ has the same eigenspaces as $W^2\otimes W^2$, and then $[G,P]=0$. In this case one has
\begin{align*}
&(U\otimes U)C_\pi\ket\mu\ket\nu=e^{i\theta_{\mu\nu}}C_\pi\ket\mu\ket\nu,\\
&\theta_{00}=-\theta_{11},\quad\theta_{01}=\theta_{10}=0,
\end{align*}
and then
\begin{align}
C_\pi\ket\mu\ket\nu=\ket{\tilde\mu}\ket{\tilde\nu},
\end{align}
where $\{\ket{\tilde0},\ket{\tilde 1}\}$ is the eigenbasis of $U$. However, this is possible if and only if $\ket{\tilde\mu}=\ket\mu$ for $\mu=1,0$ is the computational basis. Then eq.~\eqref{eq:omegax} becomes
\begin{align*}
U\otimes U\dket{\sigma_x}=\omega\dket{\sigma_x},
\end{align*}
or equivalently
\begin{align*}
V\otimes V\dket{\sigma_x}=\omega\dket{\sigma_x},
\end{align*}

that is
\begin{align*}
&2ab\ket{00}-2ab^*\ket{11}+(a^2-|b|^2)\dket\sigma_x\\
&=\omega[-2ab\ket{00}+2ab^*\ket{11}+(a^2-|b|^2)\dket{\sigma_x}].
\end{align*}
The solutions are i) $ab=0$ and $\omega=1$, ii) $a^2=|b|^2=1/2$ and $\omega=-1$. In the case (i), we have $V=U=I$, or $V=i\sigma_x$ and $U=i\sigma_y$, or $V=i\sigma_y$ and $U=i\sigma_x$, which contradict the hypothesis. In case (ii) we have, without loss of generality, $V=H$, the Hadamard gate. However, $C_\pi(H\otimes H)C_\pi$ is not factorised, and thus it cannot be accepted as a solution.

\section{Renormalised evolution}\label{sec:cgd}
Considering eq.~\eqref{eq:inducedN}) it is easy to see that the renormalised evolution of the single cell algebra $\mathsf{B}_0$ over the periodic lattice $\mathcal{L'}$ is:
\begin{equation}
\mathcal{S}(B)\coloneqq (\tT^2)_{r}(B)=\tJ\circ \tT^{2}_{w} \circ \tV (B),\quad\forall B\in\mathsf B_0.
\label{eq:Sev}
\end{equation}
Since the evolution $\mathcal{S}$ consists in a single step of a nearest neighbours automaton, the minimum regular lattice $\mathcal{L'}$ is made of four cells, i.e. $\mathcal{L'}=\{-1,0,1,2\}$, while for the evolution $\tT^2_{\mathcal{L}}$ we need $\lvert\mathcal{L}\rvert =8$. However, the following Lemma tells us that the renormalised local rule can be computed over a lattice with $\lvert\mathcal{L}\rvert =6.$
\begin{lemma}
The matrix implementing the renormalised local rule is 
\begin{equation}
V_{\mathcal{S}}=\left(\bigotimes_{x\in\mathcal{L'}}\tilde{J}^\dag_{\Lambda_x}\right)\tilde{U}^2_{\tT}\left(\bigotimes_{x\in\mathcal{L'}}\tilde{J}_{\Lambda_x}\right), \label{eq:Vtilderen}
\end{equation}
with:
\begin{equation*}
\tilde{J}_{\Lambda_x}\coloneqq M_1J_{\Lambda_x}.
\end{equation*}
To compute this rule, we can use a lattice $\mathcal{L}$ with $\lvert\mathcal{L}\rvert =6.$
\end{lemma}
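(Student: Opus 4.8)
The plan is to establish the two claims in turn: the explicit form~\eqref{eq:Vtilderen}, which is pure algebra, and the reduction to a six-cell wrapping, which is where the actual work lies.

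First I would specialise the induced-map unitary of eqs.~\eqref{eq:tred}--\eqref{eq:tredmat} to $N=2$, writing $V_{\mathcal S}=J^\dagger U_{\tT^2}J$ with $U_{\tT^2}=U_\tT^2$ and $J=\bigotimes_{x\in\mathcal L'}J_{\Lambda_x}$, and insert the Margolus decomposition~\eqref{eq:marg}, $U_\tT=M_2 M_1$ (outer $M_2$-layer, inner $M_1$-layer), so that $U_\tT^2=M_2 M_1 M_2 M_1$. The crucial point is that the $M_1$-layer is the half-step whose gates lie inside the tiles $\Lambda_x=\{2x,2x+1\}$; hence it factorises as $\bigotimes_x M_1^{(\Lambda_x)}$ and commutes with the tile tensor structure of $J$. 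Thus $M_1 J=\bigotimes_x (M_1 J_{\Lambda_x})=\bigotimes_x\tilde J_{\Lambda_x}=\tilde J$, and, using $J^\dagger=\tilde J^\dagger M_1$, the rightmost and leftmost $M_1$-layers recombine with the embeddings to give $V_{\mathcal S}=\tilde J^\dagger(M_1 M_2)^2\tilde J=\tilde J^\dagger\tilde U_\tT^2\tilde J$, with $\tilde U_\tT=M_1 M_2=M_1 U_\tT M_1^\dagger$ the reordered Margolus unitary. A one-line check that $\tilde J_{\Lambda_x}^\dagger\tilde J_{\Lambda_x}=J_{\Lambda_x}^\dagger J_{\Lambda_x}$ confirms that the $\tilde J_{\Lambda_x}$ are again isometries, so~\eqref{eq:Vtilderen} holds.

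For the lattice-size claim I would argue that $\mathcal S=(\tT^2)_r$ is a single step of a nearest-neighbour QCA on $\mathcal L'$, whose transition rule $\mathcal S_0$ maps $\mathsf B_0$ into the algebra of the three tiles $\Lambda_{-1},\Lambda_0,\Lambda_1$, i.e. six physical cells. The form~\eqref{eq:Vtilderen} makes this transparent: an observable $B_0$ on $\Lambda_0$ is sent to $\tilde J B_0\tilde J^\dagger$, which has genuine content only on $\Lambda_0$ and equals the rotated projection elsewhere; conjugating by $\tilde U_\tT^2$ and projecting back, the renormalisability condition $[U_{\tT^2},\Pi]=0$ of~\eqref{eq:commuwrap} (Proposition~\eqref{prp:thecoarsegrainingisfinite}) forces the output to lie again in \Aplocc{} and to be supported on $\{-1,0,1\}$ only. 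Diagrammatically, once the tile-local $M_1$ has been absorbed into $\tilde J$, the rule is assembled from the operator $G$ of~\eqref{eq:definitionofG} acting on the two boundaries $(\Lambda_{-1},\Lambda_0)$ and $(\Lambda_0,\Lambda_1)$ as in~\eqref{eq:Gev}, so every operator entering the evaluation of $\mathcal S_0$ lives on at most three consecutive tiles. A wrapping with $|\mathcal L|=6$ already contains this whole neighbourhood, hence the periodic identifications cannot alter the local rule.

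The hard part is precisely this neighbourhood bookkeeping. Two steps of $\tT$ physically spread an observable on $\Lambda_0$ across four consecutive tiles, which is why a direct computation with $U_{\tT^2}$ appears to require $|\mathcal L|=8$; the delicate step is to show that after projection the effective support collapses to the nearest-neighbour pattern $\{-1,0,1\}$, so that on six cells the tile $\Lambda_{-1}$—which the periodic identification merges with $\Lambda_{+2}$—receives no spurious contribution. Establishing this collapse is exactly what the absorbed-$M_1$ form together with the commutation $[U_{\tT^2},\Pi]=0$ are used for; granting it, the rewriting from the first part makes the six-cell computation immediate.
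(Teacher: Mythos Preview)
Your derivation of the formula~\eqref{eq:Vtilderen} is correct and coincides with the paper's: absorb the tile-aligned $M_1$-layer into the isometries $J_{\Lambda_x}$ to form $\tilde J_{\Lambda_x}$, so that the remaining evolution is expressed through the $G$-operators.

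For the six-cell reduction, however, your argument has a genuine gap. First, a minor slip: two steps of a nearest-neighbour $\tT$ spread an observable on $\Lambda_0=\{0,1\}$ to $\{-2,\dots,3\}$, i.e.\ \emph{three} tiles, not four. The reason the paper starts from $|\mathcal L|=8$ is not the support of $\tT^2(B)$ but the \emph{regularity} of the wrapping for $\tT^2$; on $|\mathcal L|=6$ the radius-$2$ automaton $\tT^2$ is not regular, so the wrapped $(\tT_w)^2$ on six cells is not a priori the same map as $\tT^2$. Hence knowing abstractly that $\mathcal S$ is nearest-neighbour and that its output is supported on $\{-1,0,1\}$ does \emph{not} by itself justify computing on a six-cell torus---the periodic boundary $G$'s could still corrupt the local rule.

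The paper closes this gap by explicitly invoking the factorisation condition eq.~\eqref{eq:tilde}, namely $G^\dag(\rho_\nu\otimes\lambda_\mu)G=\tilde\rho_\nu\otimes\tilde\lambda_\mu$, which is a sharpened consequence of the commutation $[U_{\tT^2},\Pi]=0$. Applied to the outer $G$'s in the diagram~\eqref{eq:sbzero}, it shows that the gates straddling cells $(-1,0)$ and $(5,6)$ act in a factorised way on the $P$-projected subspace, so those boundary cells genuinely decouple and one can truncate to $\mathcal L=\{0,\dots,5\}$. Your last paragraph gestures at the right ingredients (the absorbed-$M_1$ form, the commutation), but you should replace the appeal to ``$[U_{\tT^2},\Pi]=0$ forces the support to collapse'' with the concrete use of eq.~\eqref{eq:tilde}; that is the step which actually cuts the periodic interactions.
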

\begin{proof}
 If the evolution $\tT$ is implemented by the unitary $U_\tT$, eq.~\eqref{eq:Sev} reads:
\begin{align}
&\mathcal{S}(B)=V_{\mathcal{S}}B_0 V_{\mathcal{S}}^\dag \\
&V_\mathcal{S}=\left(\bigotimes_{x\in\mathcal{L'}}J^\dag_{\Lambda_x}\right)U^2_{\tT}\left(\bigotimes_{x\in\mathcal{L'}}J_{\Lambda_x}\right)\label{eq:Vren}
\end{align}
Then repeating the same manipulations on the evolution as in eq.$\eqref{eq:evmod})$, we can define the operator $\tilde{J}_{\Lambda_x}$ as:
\begin{equation}
\begin{split}
&\tilde{J}^\dag_{\Lambda_x}\coloneqq J^\dag_{\Lambda_x}M_1^\dag\eqqcolon\tikzfig{Rdnew} \\
&\tilde{J}_{\Lambda_x}\coloneqq M_1J_{\Lambda_x}\eqqcolon\tikzfig{Rnew} \\
&\tilde{J}_{\Lambda_x}\tilde{J}_{\Lambda_x}^\dag=\tikzfig{RRdagnew}_{\Lambda_x}=\tikzfig{Piseparated}_{\Lambda_x}=P_{\Lambda_x} 	\\
&\tilde{J}_{\Lambda_x}^\dag\tilde{J}_{\Lambda_x}=\tikzfig{RdagR}=I_x\in\mathsf{A}_r(\mathcal{L'})
\label{eq:Jtildeop}
\end{split}
\end{equation}
With those definitions we get to the diagrammatic equation:
\begin{equation}
\mathcal{S}(B)=\tikzfig{Renev}
\label{eq:sbzero}
\end{equation}
where the numbers label the cells in $\mathcal{L}$, the letters the cells of the 
coarse-grained lattice $\mathcal{L'}$, and the half $G$-transformations are meant to act 
on the $(-1,0)$ and $(5,6)$ cell, where possibly $-1\equiv7$. This allows us to express eq.~\eqref{eq:Vren} using the evolution with $G$ operators as in eq.~\eqref{eq:Gev}. Since our constraints are expressed in terms of G, this is amenable for a direct application of our results. Calling $\tilde{U}^2_\tT$ the operator implementing the evolution ~\eqref{eq:RenIndex},i.e.
\begin{equation} 
\tilde{U}^{\dag 2}_\tT\left(\bigotimes_{x\in\mathcal{L'}}P_x\right)(\tilde{U}^2_\tT)=\resizebox{0.35\hsize}{!}{\tikzfig{Gev}},
\end{equation}
eq.~\eqref{eq:Vren} reads
\begin{equation}
V_{\mathcal{S}}=\left(\bigotimes_{x\in\mathcal{L'}}\tilde{J}^\dag_{\Lambda_x}\right)\tilde{U}^2_{\tT}\left(\bigotimes_{x\in\mathcal{L'}}\tilde{J}_{\Lambda_x}\right). 
\end{equation}
Exploiting eq.$\eqref{eq:tilde})$ we have
\begin{equation}
\mathcal{S}(B)=\tikzfig{Renevconj}
\end{equation}
Since there are no interaction between the cells $5-6$ nor $-1-0$, this equation reduces to a local expression for the evolution $\mathcal{S}({B})$ on a lattice $\mathcal{L}=\{0,1,2,3,4,5\}$.
\end{proof} 
Thus , in the following, we will compute the evolution using eq.\eqref{eq:Vtilderen} over only three projections. 
\subsection{Renormalised Qubit automata}
We now explicitly calculate the renormalised evolution for the above case of study of qubit cellular automata. 
Let us start from the case where we can choose $P$ to have factorised eigenvectors, i.e.
\begin{equation}
\begin{split}
&P_{\Lambda_x}=\ket{\psi_0}\bra{\psi_0}+\ket{\psi_1}\bra{\psi_1}\\
&\ket{\psi_k}=\ket{l(k)}\ket{j(k)}\\
&P_{\Lambda_x}: \mathsf{A}(\Lambda_x)\to\mathsf{A}(\Lambda_x)
\end{split}
\label{eq:cgproj}
\end{equation}
where $l,j$ are functions $\{0,1\}\to\{0,1\}$, and $\{\ket{0},\ket{1}\}$ are a basis of the single cell algebra isomorphic to $\mathbb{C}^2$. In the same way as in Eq.~\eqref{eq:Jop} the choice of a basis for the space $(\mathsf{A}_r)_x\simeq \mathbb{C}^2$ allows us to define the operator $J_{\Lambda_x}$ corresponding to $\Pi_{x}$, we can define the operator $\tilde{J}_{\Lambda_x}$ in Eq.~\eqref{eq:Jtildeop} corresponding to $P_{\Lambda_x}$ as well, i.e.
\begin{equation}
\tilde{J}_{\Lambda_x}=\ket{0}_r\bra{\psi_0}+\ket{1}_r\bra{\psi_1},
\label{eq:tildeJ}
\end{equation}
that for the choice of the projection in eq.~\eqref{eq:cgproj} becomes
\begin{equation}
\tilde{J}_{\Lambda_x}=\ket{0}_r\bra{l(0)}\bra{j(0)}+\ket{1}_r\bra{l(1)}\bra{j(1)}.
\end{equation}
Here the subscript $r$ simply indicates that
\begin{equation}
\tilde{J}_{\Lambda_x}: \mathsf{A}(\Lambda_x)\to(\mathsf{A}_r)_x.
\end{equation}
Let us consider for the remainder a minimal coarse-grained lattice $\mathcal L'$, i.e.~with four cells. Eq.~\eqref{eq:Vtilderen} states that the evolution of the renormalised CA has eigenvalues equal to those of two layers of $G$, corresponding to eigenvectors of the form 
\begin{equation}
\ket{\Psi_\mathbf k}\coloneqq\ket{\psi_{k_0}}\ket{\psi_{k_1}}\ket{\psi_{k_2}}\ket{\psi_{k_3}}.
\end{equation}
The corresponding eigenvector is then given by
\begin{equation}
\ket{\mathbf k}\coloneqq\ket{k_0}_r\ket{k_1}_r\ket{k_2}_r\ket{k_3}_r.
\end{equation}

Let us begin computing the case corresponding to $G$ factorised, i.e. $G=U\otimes U$. In this case, in order to compute the renormalised evolution it is enough to specify the action over the single $\ket{\psi_k}$. In the basis identified by $P$, we have 
\begin{equation}
U^2=\begin{pmatrix} e^{i\lambda}& 0\\ 
0 & e^{-i\lambda}\end{pmatrix}\ ,
\end{equation}
whose action on $\ket{\psi_{k}}=\ket{l(k)}\ket{j(k)}$ is
\begin{equation}
(U^2\otimes U^2)\ket{\psi_k}=e^{i[(-1)^{l(k)}+(-1)^{j(k)}]\lambda}\ket{\psi_k}.
\end{equation} 
Thus the renormalised evolution will be a local unitary with eigenstates given by the basis chosen in $\eqref{eq:cgproj}$, whose diagonal form is
\begin{equation}
V=\begin{pmatrix} e^{i\delta_{l(0),j(0)}(-1)^{l(0)}2\lambda} & 0\\
0& e^{i\delta_{l(1),j(1)}(-1)^{l(1)}2\lambda} \end{pmatrix}.
\end{equation}
Notice that, since the renormalised evolution have phases proportional to an integer multiple of $\lambda$, if $\lambda$ is an irrational number, the iteration of the renormalisation will reach a fixed point (the identity CA) only if one chooses 
$l(k)=j(k)\oplus1$ for both $k=0,1$.

Let us now consider the case of $G$ of the form
\begin{align}
\begin{aligned}
G&=C_\phi (R_z(\theta)\otimes R_z(\theta)) C_\phi=\\
&=\begin{pmatrix} e^{2i\theta} & 0 &0 & 0\\
0& 1 & 0 & 0\\
0& 0 & 1 & 0\\
0& 0 & 0 & e^{-2i\theta+2i\phi}\end{pmatrix}\,.
\end{aligned}
\end{align}
Every $G$ in the first layer of $\tilde{U}^2_{\tT}$ will act on the second qubit of a cell (say $\alpha$) and the first one of the next cell ($\alpha\oplus1$).

It is clear that these operators represent commuting gates between neighbouring cells in the lattice $\mathcal L'$, thus we can artificially split them in two layers, e.g.~$(\tilde G_{12}\otimes\tilde G_{30})(\tilde G_{01}\otimes \tilde G_{23})$ where
\begin{align}
&\tilde G\ket{k_\alpha}\ket{k_{\alpha\oplus 1}}=\omega(k_\alpha,k_{\alpha\oplus1})\ket{k_\alpha}\ket{k_{\alpha\oplus 1}},\\
&\omega(k_\alpha,k_{\alpha\oplus1})\coloneqq e^{i\{[ (-1)^{j(k_\alpha)}+ (-1)^{l(k_{\alpha\oplus1})}]\theta +l(k_{\alpha\oplus1})j(k_\alpha)2\phi\}}.
\end{align}
After the application of these gates, the vector $\ket{\mathbf k}$ acquires a phase
\begin{align*}
\omega_1(\mathbf k)=\prod_{\alpha=0}^3\omega(k_\alpha,k_{\alpha\oplus1}).
\end{align*}
Now, the second layer of $G$ operators will act on the two qubits in the same cell, and can be interpreted as a layer of local gates acting on the single cells $\alpha$ as
\begin{align}
&\tilde R\ket{k_\alpha}=\nu(k_\alpha)\ket{k_\alpha},\\
&\nu(k_\alpha)\coloneqq e^{i\{[ (-1)^{j(k_\alpha)}+ (-1)^{l(k_{\alpha})}]\theta +l(k_{\alpha})j(k_\alpha)2\phi\}}.
\end{align}
After the application of these gates, the vector $\ket{\mathbf k}$ acquires a phase
\begin{align*}
\omega_2(\mathbf k)=\prod_{\alpha=0}^3\nu(k_\alpha).
\end{align*}
The final phase $\omega_1(\mathbf k)\omega_2(\mathbf k)$ can have different values depending on the functions $l$ and $j$. Let us consider the various cases, using the following equalities holding for $l\in\{0,1\}$
\begin{align*}
(-1)^l=-(2l-1),\quad l\oplus1=1-l.
\end{align*}
\begin{enumerate}
\item
$l(k)=j(k)=k$. In this case, neglecting an overall phase factor $e^{i16\theta}$, we have
\begin{align}
\omega_1(\mathbf k)\omega_2(\mathbf k)=\prod_{\alpha=0}^3e^{i[k_\alpha(2\phi-8\theta)+k_\alpha k_{\alpha\oplus 1}2\phi}]
\end{align}
This evolution corresponds to a QCA obtained with two layers of controled-phase gates $C_{\phi'}$ and one layer of local unitary gates given by $\exp(-i\sigma_z\theta')$, with 
\begin{align}
\begin{cases}
\phi'=2\phi,\\
\theta'=\phi-4\theta.
\end{cases}
\end{align}
\item $l(k)=j(k)=k\oplus 1$. In this case, neglecting an overall phase $e^{i16(\phi-\theta)}$, we have
\begin{align}
\omega_1(\mathbf k)\omega_2(\mathbf k)=\prod_{\alpha=0}^3e^{i[k_\alpha(8\theta-6\phi)+k_\alpha k_{\alpha\oplus 1}2\phi}].
\end{align}
This evolution corresponds to a QCA obtained with two layers of controled-phase gates $C_{\phi'}$ and one layer of local unitary gates given by $\exp(-i\sigma_z\theta')$, with 
\begin{align}
\begin{cases}
\phi'=2\phi,\\
\theta'=4\theta-3\phi.
\end{cases}
\end{align}
\item $l(k)=j(k)\oplus 1$. In this case, we have
\begin{align}
\omega_1(\mathbf k)\omega_2(\mathbf k)=\prod_{\alpha=0}^3e^{i[k_\alpha2\phi-k_\alpha k_{\alpha\oplus 1}2\phi}].
\end{align}
This evolution corresponds to a QCA obtained with two layers of controled-phase gates $C_{\phi'}$ and one layer of local unitary gates given by $\exp(-i\sigma_z\theta')$, with 
\begin{align}
\begin{cases}
\phi'=-2\phi,\\
\theta'=\phi.
\end{cases}
\end{align}
\item $l(k)=c_1$ or $j(k)=c_1$ constant. In this case, the other function will be either $k$ or $k\oplus1$, that we can express as $1/2+{s_2}(1/2-k)$ with $s_2=\pm1$, and we have, up to an irrelevant phase factor,
\begin{align}
\omega_1(\mathbf k)\omega_2(\mathbf k)=\prod_{\alpha=0}^3e^{i[s_2k_\alpha4(\theta+c_1\phi)]}.
\end{align}
This evolution corresponds to a QCA obtained one layer of local unitary gates given by $\exp(-i\sigma_z\theta')$, with 
\begin{align}
\theta'=2s_2(\theta+c_1\phi).
\end{align}
\end{enumerate}
Finally, consider the case
\begin{align*}
G=C_\phi (R_z(\theta)\vc{n}\cdot\vc{\sigma} R_z(\epsilon))\otimes(R_z(\theta)\vc{n}\cdot\vc{\sigma} R_z(\epsilon)) C_\phi.
\end{align*}
Remembering that $C_\phi$ is diagonal in the basis of eigenvectors of $\sigma_z$, we can always choose the $\sigma_y$ and $\sigma_x$ in such a way that $(R_z(\theta)\vc{n}\cdot\vc{\sigma} R_z(\epsilon))=\sigma_y$ and:
\begin{align*}
G=C_\phi \sigma_y\otimes\sigma_y C_\phi.
\end{align*}
Moreover, we can insert $\sigma_y\sigma_y=I$ on the left and get:
\begin{align}
G=\sigma_y\otimes\sigma_y(\sigma_y\otimes\sigma_y C_\phi \sigma_y\otimes\sigma_y)C_\phi.
\end{align}
We then have:
\begin{align*}
&\sigma_y\otimes\sigma_y C_\phi \sigma_y\otimes\sigma_y=\begin{pmatrix}e^{i\phi}&0&0&0\\0&1&0&0\\0&0&1&0\\0&0&0&1\end{pmatrix},\\
&\tilde{C}_\phi\coloneqq (\sigma_y\otimes\sigma_y C_\phi \sigma_y\otimes\sigma_y)C_\phi=\begin{pmatrix}e^{i\phi}&0&0&0\\0&1&0&0\\0&0&1&0\\0&0&0&e^{i\phi}\end{pmatrix}.
\end{align*}
Thus, we can write $G$ as:
\begin{align}
\begin{aligned}
G&=\sigma_y\otimes\sigma_y \tilde{C}_\phi=\\
&=\begin{pmatrix}0&0&0&e^{i(\phi+\pi)}\\0&0&1&0\\0&1&0&0\\e^{i(\phi+\pi)}&0&0&0\end{pmatrix}.
\end{aligned}
\end{align}
Following the steps of the previous calculation and considering the gates $\tilde{G}$ in $\mathcal{L'}$, we can write the action of the single $\tilde{G}$ on $\ket{k_\alpha}\ket{k_{\alpha\oplus 1}}$ as:
\begin{align}
&\tilde G\ket{k_\alpha}\ket{k_{\alpha\oplus 1}}=\omega(k_\alpha,k_{\alpha\oplus1})\ket{k_\alpha\oplus 1}\ket{k_{\alpha\oplus 1}\oplus 1},\\
&\omega(k_\alpha,k_{\alpha\oplus1})\coloneqq e^{i\{\phi+\pi\}\delta_{j(k_\alpha),l(k_{\alpha\oplus 1})}}.
\end{align}
Conversely to the previous case, we can see as $\tilde G$ changes the vector $\ket{k_\alpha}\ket{k_{\alpha\oplus 1}}$ to $\ket{k_{\alpha}\oplus 1}\ket{k_{\alpha\oplus 1}\oplus 1}$. Thus, we have to evaluate the action of the second layer of $\tilde G$ over the new vectors $\ket{k_{\alpha}\oplus 1}\ket{k_{\alpha\oplus 1}\oplus 1}$. Again, this second layer will act on the two qubits in the same cells, and thus we can interpret it as a layer of local gates. This gives:
\begin{align}
&\tilde R\ket{k_\alpha\oplus 1}=\nu(k_\alpha)\ket{k_\alpha},\\
&\nu(k_\alpha)\coloneqq e^{i\{\phi+\pi\}\delta_{j(k_\alpha)\oplus 1,l(k_{\alpha})\oplus 1}}= e^{i\{\phi+\pi\}\delta_{j(k_\alpha),l(k_{\alpha})}}.
\end{align}

We can then define the phases:
\begin{align}
&\omega_1(\mathbf{k})\coloneqq \prod_{\alpha=0}^3\omega(k_\alpha,k_{\alpha\oplus 1})\\
&\omega_2(\mathbf{k})\coloneqq \prod_{\alpha=0}^3\nu(k_\alpha).
\end{align}
This gives the final phase:
\begin{align}\label{eq:omegadelta}
\omega_1(\mathbf{k})\omega_2(\mathbf{k})=\prod_{\alpha=0}^3 e^{i\{\phi+\pi\}\delta_{j(k_\alpha),l(k_{\alpha})}}e^{i\{\phi+\pi\}\delta_{j(k_\alpha),l(k_{\alpha\oplus 1})}}.
\end{align}
Notice that the cases $j(k)=l(k)=k$ and $j(k)=l(k)=k\oplus 1$ give the same results. Moreover, the term $\omega_2(\mathbf k)$ contributes only if $j(k)=c$ (or equivalently if $l(k)=c$). Indeed, it is easy to see that we have:
\begin{align}
&\omega_2(\mathbf k)=1 && \mbox{if } j(k)\neq l(k)\\
&\omega_2(\mathbf k)=e^{i4(\pi+\phi)} && \mbox{if} j(k)=l(k).
\end{align}
We can now consider the different cases exploiting the equality:
\begin{align}
\delta_{ij}=1+2ij-i-j.
\end{align}
\begin{enumerate}
\item
$l(k)=j(k)$. We can consider without loss of generality the case $l(k)=k$. In this case, neglecting an overall phase factor $e^{i8\phi}=e^{i8\{\phi+\pi\}}$, we have
\begin{align}
\omega_1(\mathbf k)\omega_2(\mathbf k)=\omega_1(\mathbf{k})=\prod_{\alpha=0}^3e^{-2i\phi k_\alpha}e^{2i\phi k_\alpha k_{\alpha\oplus1}}
\end{align}
This evolution corresponds to a QCA obtained with two layers of controled-phase gates $C_{\phi'}$ and one layer of local unitary gates given by $\exp(-i\sigma_z\theta')$, with 
\begin{align}
\begin{cases}
\phi'=2\phi,\\
\theta'=-\phi.
\end{cases}
\end{align}
\item $l(k)=j(k)\oplus 1$. In this case,neglecting an overall phase factor $e^{i4\phi}=e^{i4\{\phi+\pi\}}$, we have
\begin{align}
\omega_1(\mathbf k)\omega_2(\mathbf k)= \omega_1(\mathbf k)=\prod_{\alpha=0}^3e^{2i\phi k_\alpha}e^{-2i\phi k_\alpha k_{\alpha\oplus1}}.
\end{align}
This evolution corresponds to a QCA obtained with two layers of controled-phase gates $C_{\phi'}$ and one layer of local unitary gates given by $\exp(-i\sigma_z\theta')$, with 
\begin{align}
\begin{cases}
\phi'=-2\phi,\\
\theta'=\phi.
\end{cases}
\end{align}
\item $l(k)=c_1$ or $j(k)=c_1$ constant. Let us assume without loss of generality that $l(k)=c_1$. It is easy to see that eq.\eqref{eq:omegadelta} reduces to :
\begin{align*}
\omega_1(\mathbf{k})\omega_2(\mathbf{k})=\prod_{\alpha=0}^3e^{2i\{\phi+\pi\}\delta_{c_1,l(k)}}=\\
=\prod_{\alpha=0}^3e^{2i\{\phi+\pi\}\{2c_1l(k_\alpha)-c_1-l(k_\alpha)+1\}}.
\end{align*}
Again, we can express the two possible choices of $l(k)$ as $\frac{1}{2}+s_2(\frac{1}{2}-k)$ for $s_2=\pm 1$. Up to an irrelevant phase factor, this reduces to:
\begin{align*}
\omega_1(\mathbf{k})\omega_2(\mathbf{k})=\prod_{\alpha=0}^3e^{-2is_2(2c_1-1)k_\alpha\{\phi+\pi\}}.
\end{align*}
This evolution corresponds to a QCA obtained one layer of local unitary gates given by $\exp(-i\sigma_z\theta')$, with 
\begin{align}
\theta'=-s_2(2c_1-1)\phi.
\end{align}
\end{enumerate}

\end{document}